\begin{document}
\pagestyle{plain} 
\captionsetup{font=small, textfont=it}
\title{Multidimensional Dominance Drawings}
\author{Giacomo Ortali \inst{*} \and Ioannis G. Tollis \inst{**}
\\
\inst{*}University of Perugia \email{giacomo.ortali@gmail.com} 
\\
\inst{**}Computer Science Department, University of Crete, Heraklion, Crete, Greece and Tom Sawyer Software, Inc. Berkeley, CA 94707 U.S.A.  \email{tollis@csd.uoc.gr}
\institute{}}
\maketitle 
	\begin{abstract}
	Let $G$ be a DAG with $n$ vertices and $m$ edges. Two vertices $u,v$ are incomparable if $u$ doesn't reach $v$ and vice versa. We denote by \emph{width} of a DAG $G$, $w_G$,  the maximum size of a set of incomparable vertices of $G$. In this paper we present an algorithm that computes a dominance drawing of a DAG G in $k$ dimensions, where $w_G \le k \le \frac{n}{2}$. The time required by the algorithm is $O(kn)$, with  a precomputation time of $O(km)$, needed to compute a  \emph{compressed transitive closure} of $G$, and extra $O(n^2w_G)$ or $O(n^3)$ time, if we want $k=w_G$. Our algorithm gives a tighter bound to the dominance dimension of a DAG.  As corollaries, a new family of graphs having a 2-dimensional dominance drawing and a new upper bound to the dimension of a partial order are obtained. We also introduce the concept of transitive module and dimensional neck, $w_N$, of a DAG $G$ and we show how to improve the results given previously using these concepts.
\end{abstract}
\section{Introduction}
Dominance drawings of directed acyclic graphs (DAGs) are very important in many areas of research, including graph drawing~\cite{DBLP:conf/cccg/ElGindyHLMRW93}, computational geometry~\cite{DBLP:journals/dcg/BattistaTT92}, information visualization\cite{10.1007/3-540-45848-4_4}, even in very large databases~\cite{DBLP:conf/edbt/VelosoCJZ14,Zhou:2017:DRF:3035918.3035927}, just to mention a few.  They combine the aspect of drawing a DAG on the grid with the fact that the transitive closure of the DAG is apparently obvious by the dominance relation between grid points associated with the vertices.  In other words, in a dominance drawing a vertex $v$ is reachable from a vertex $u$ if and only if all the coordinates of $v$ are greater than or equal to the coordinates of $u$ in $\Gamma$.  In a DAG $G$ with $n$ vertices and $m$ edges two vertices $u,v$ are incomparable if $u$ doesn't reach $v$ and vice versa. We denote by \emph{width} of a DAG $G$, $w_G$, the maximum size of a set of incomparable vertices of $G$.  Notice that it is not possible to find dominance drawings in 2-dimensions for most DAGs. The smallest number $d$ for which a given DAG $G$ has a $d$-dimensional dominance drawing is called \emph{dominance drawing dimension}, denoted by $d_G$, and it is a known NP-hard problem to compute it~\cite{Yannakis}.  In this paper we present algorithms for computing a $k$-dimensional dominance drawing of $G$, where $k\ge d_G$.  Our algorithms are efficient and are based on various decomposition techniques on the DAG.

In 2-dimensions the dominance drawing method for planar DAGs has many important aesthetic properties, including small number of bends, good vertex placement, and symmetry display~\cite{DBLP:journals/dcg/BattistaTT92,st-planar}.  A 2-dimensional dominance drawing $\Gamma$ of a planar DAG $G$ can be computed in linear time, such that for any two vertices $u$ and $v$ there is a directed path from $u$ to $v$ in $G$ if and only if $x(u) \le x(v)$ and $y (u) \le y (v)$ in $\Gamma$~\cite{DBLP:journals/dcg/BattistaTT92,st-planar}. Since most DAGs have dominance dimension higher that two, the concept of  weak dominance drawings was introduced in~\cite{DBLP:journals/corr/abs-1108-1439,DBLP:conf/gd/KornaropoulosT12a}. This concept has many applications including the drawing of DAGs in the overloaded orthogonal model~\cite{DBLP:journals/jgaa/KornaropoulosT16}.  In weak dominance, for any two vertices $u$ and $v$ if there is a directed path from $u$ to $v$ in $G$ then $x(u) \le y(v)$ and $y (u) \le y (v)$ in $\Gamma$.  However, the reverse does not hold.  Hence, we have a falsely implied path (fip) when $x(u) \le y(v)$ and $y (u) \le y (v)$, but there is no path from $u$ to $v$.  Kornaropoulos and Tollis~\cite{DBLP:journals/corr/abs-1108-1439} proved that the problem of minimizing the number of fips is NP-hard and gave some upper bounds on the number of fips.

Several researchers adopted the concept of weak dominance drawing in order to construct a compact representation of  the reachability information of very large graphs that are produced by very large datasets in the database community~\cite{DBLP:conf/edbt/VelosoCJZ14}.   Li, Hua, and Zhou considered high dimensional dominance drawings in order to reduce the number of fips and obtain efficient solutions to the reachability problem~\cite{DBLP:journals/www/LiHZ17}. Namely, they use high dimensional dominance drawings in order to reduce the number of fips and describe heuristics to obtain a system that resolves reachability queries in linear (or constant) time as demonstrated by their experimental work~\cite{DBLP:journals/www/LiHZ17}.  

In this paper we present an algorithm, called kD-Draw, that computes a dominance drawing of a DAG G in $k$ dimensions, where $w_G \le k \le \frac{n}{2}$. The time required by the algorithm is $O(kn)$, with a precomputation time of $O(km)$, needed to compute a "special" transitive closure of $G$, called \emph{compressed transitive closure}.  If we want to have $k=w_G$ then an extra $O(n^2w_G)$ or $O(n^3)$ time is required to find a proper decomposition of $G$. Algorithm kD-Draw defines a new upper bound to the dominance dimension of a DAG.  As corollaries we obtain (a) a new family of graphs that admit a 2-dimensional dominance drawing and (b) a new upper bound to the dimension of a partial order. We also introduce the concepts of transitive modules and dimensional neck $w_N$ of a DAG $G$ and we show how to use them in order to  improve our results using these concepts.

Our paper is structured as follows: In Section~\ref{Section:Preliminaries} we describe necessary preliminary results. In Section~\ref{Section:Multidimensional} we  first introduce the new dominance drawing technique for 2 dimensions, and then these results are extended to $k$ dimensions. We introduce a new upper bound to the dominance dimension of a DAG and we discuss some additional implications of the results of this section. In Section~\ref{Section:Modules} we introduce the concept of transitive modules and we use it to improve the upper bound on the number of dimensions presented in the previous section. In Section \ref{Section:Conclusion} we present our conclusions and we discuss interesting open problems that naturally arise from our two different approaches to $k$-dimensional dominance drawing.
\section{Preliminaries}
\label{Section:Preliminaries}
Let $G=(V,E)$ be an directed acyclic graph (DAG) with $n$ vertices and $m$ edges. An \emph{st-graph} is a DAG with one source $s$ and one sink $t$. In order to simplify our presentation, for the rest of the paper we will assume that every DAG is an st-graph. We do it without loss of generality, since we can obtain an st-graph from any DAG by adding a virtual source and a virtual sink and connecting them to all sources and sinks, respectively.  Testing if $G$ has dominance drawing dimension 2 requires linear time~\cite{McConnell}, while testing if its dimension is greater than or equal to 3 is NP-complete~\cite{Yannakis}.  An efficient algorithm to compute 2-dimensional dominance drawings for planar st-graphs is shown in~\cite{st-planar}. 
A partial order is a mathematical formalization of the concept of ordering. Any partial order $P$ can be viewed as a transitive DAG. The results obtained for DAGs and their dominance drawing dimension transfer directly to partial orders and their dimension and vice-versa. Hence, we can talk about the results known for partial orders and for DAGs with no distinction. 
In~\cite{N/2} Hiraguchi proved a theorem that gives a tight upper bound on the dominance dimension of $G$, which is $\frac{n}{2}$, as shown in the following lemma~\cite{N/2} (for a different proof, see~\cite{DBLP:journals/dm/Bogart73}):
\begin{lemma}
	\label{lemma:n/2}
	The dominance dimension of an st-graph $G$ having $n$ vertices is at most  $\frac{n}{2}$. In other words: $d_G \le \frac{n}{2}$.
\end{lemma}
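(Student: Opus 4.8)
The plan is to first reformulate the claim as the classical order-dimension inequality of Hiraguchi, and then to outline a proof of the latter. Up to the usual perturbation making, in each coordinate, the $n$ values pairwise distinct (which leaves the dominance relation intact), a $k$-dimensional dominance drawing of $G$ is exactly a family $\{L_1,\dots,L_k\}$ of linear extensions of the reachability order $\preceq_G$ of $G$ (with $u\preceq_G v$ iff $u$ reaches $v$) whose intersection is $\preceq_G$: coordinate $i$ orders $V$ by $L_i$, and the defining property of a dominance drawing says precisely that $u\preceq_G v$ iff $u\le_{L_i}v$ for all $i$. Such a family is a \emph{realizer} of $\preceq_G$, so $d_G=\dim(\preceq_G)$ and it suffices to prove $\dim(\preceq_G)\le n/2$. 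An st-graph on at most three vertices is a chain, so $d_G\le 1\le n/2$ there; from now on I assume $n\ge 4$.

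I would then prove, by induction on $|X|$, that every finite poset $P$ with ground set $X$ and $|X|\ge 4$ satisfies $\dim(P)\le\lfloor|X|/2\rfloor$; applying this to $\preceq_G$ gives $d_G\le\lfloor n/2\rfloor\le n/2$. The base cases $|X|\in\{4,5\}$ reduce to the classical fact that every poset on at most five elements is $2$-dimensional (the smallest $3$-dimensional poset is the six-element standard example); for an st-graph it is even more immediate, since deleting the global extrema $s$ and $t$ leaves an induced poset on at most three elements. For the inductive step, with $|X|\ge 6$, I would invoke a two-element removal lemma: there is a pair $a\ne b$ in $X$ with $\dim(P)\le\dim(P-\{a,b\})+1$. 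Since $P-\{a,b\}$ has $|X|-2\ge 4$ elements, the induction hypothesis gives $\dim(P-\{a,b\})\le\lfloor(|X|-2)/2\rfloor$, and therefore $\dim(P)\le\lfloor(|X|-2)/2\rfloor+1=\lfloor|X|/2\rfloor$.

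The two-element removal lemma is the crux. The strategy is to pick $a$ and $b$ suitably (say, a minimal element and a maximal element, or two elements of a maximum antichain), take a realizer $L_1,\dots,L_k$ of $P-\{a,b\}$, re-insert $a$ and $b$ into each $L_i$ with one of them placed as low and the other as high as possible while remaining a linear extension of $P$, and finally adjoin one extra linear extension of $P$ tailored to reverse every critical pair involving $a$ or $b$ that the previous extensions leave unreversed. The delicate point, and essentially the only non-routine part of the argument, is to check that a single extra extension indeed suffices; this needs a short case analysis on whether $a$ and $b$ are comparable in $P$ and on how they sit relative to the rest of $P$, and it is where the hypothesis $|X|\ge 4$ is really used.
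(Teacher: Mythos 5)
The paper does not actually prove this lemma: it is Hiraguchi's theorem, imported by citation (with Bogart's simplified proof given as an alternative reference), so any self-contained argument you give is necessarily ``a different route.'' Your reduction of dominance dimension to order dimension is correct --- the perturbation to distinct coordinates is harmless because every incomparable pair already forces strict reversals in two coordinates, and ties between comparable elements can be broken consistently with reachability --- and your induction scheme (base cases $|X|\in\{4,5\}$, inductive step via a two-point removal) is exactly the architecture of the Hiraguchi/Bogart proof you are implicitly reconstructing.

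The gap is that essentially all of the mathematical content of the theorem sits inside the step you defer: the claim that \emph{some} pair $a\neq b$ satisfies $\dim(P)\le\dim(P-\{a,b\})+1$. This is not a short case analysis. For an arbitrary pair the best general bound is $\dim(P)\le\dim(P-\{a,b\})+2$ (two applications of the one-point removal theorem), and the $+1$ bound genuinely fails for badly chosen pairs: in the standard example $S_4$ (dimension $4$), deleting $a_1$ and $b_2$ leaves a poset of dimension $2$, so no realizer of the smaller poset can be repaired with a single extra extension. Your recipe --- reinsert $a$ low and $b$ high in each extension and append one extension to reverse the leftover critical pairs --- founders on the fact that the critical pairs $(x,a)$ and $(a,y)$ pull the placement of $a$ in opposite directions, and one additional linear extension cannot reverse both families unless the pair $\{a,b\}$ is chosen via a careful structural analysis of $P$; Bogart's streamlined treatment of exactly this point still runs to several pages of cases. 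Your base case also quietly leans on the nontrivial fact that every five-element poset has dimension at most two, although for st-graphs your observation that deleting $s$ and $t$ leaves at most three elements does cover the instances the lemma needs. As written, the proposal is a correct skeleton with the load-bearing bone missing: either prove the two-point removal lemma in full or cite it, as the paper does.
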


Now we introduce the concept of \emph{channel}, which is a generalization of the concept of \emph{path}. Then we will introduce a graph decomposition into channels, called \emph{channel decomposition}. This decomposition will be used in the next section to improve the upper bound stated in Lemma~\ref{lemma:n/2}.

A \emph{channel} $C$ is an ordered set of vertices such that, given any two vertices $v,w\in C$, $v$ precedes $w$ in the order of channel $C$ if and only if $w$ is reachable from $v$ in $G$. If $u$ precedes $v$ in the order of a channel $C$ then $v$ is a successor of $u$ in $C$. We denote by \emph{channel decomposition} of $G$ a set of channels $S_c=\{C_1,...,C_k\}$ so that the source $s$ and the sink $t$ of $G$ are contained in every channel and every other vertex of $G$ is contained in exactly one channel. The number of channels of a decomposition $S_c$ is called \emph{size} of $S_c$.

Figure~\ref{fi:channeldec} shows an st-graph and a minimum size channel decomposition of it. The source and the sink of the st-graph $G$ depicted in Part~(a) are respectively 0 and 15. Red edges of Part~(b) connect two consecutive vertices of a same channel. The dashed red edges represent edges of the transitive closure of $G$ that do not belong to $G$. The channel decomposition showed in Part~(b) is $S_c=\{C_1,C_2,C_3,C_4\}$, where: $C_1=\{0,1,4,5,12,13,15\}$; $C_2=\{0,3,7,11,15\}$; $C_3=\{0,2,6,10,14,15\}$; $C_4=\{0,8,9,15\}$.  Notice that this channel decomposition is minimum, since the width of $G$ is four.  We will revisit this graph and will show a different dominance drawing obtained using the concept of transitive modules that we will introduce in Section~\ref{Section:Modules}.
\begin{figure}[ht]
	\centering
	\subfigure[]
	{\includegraphics[width=0.3\linewidth]{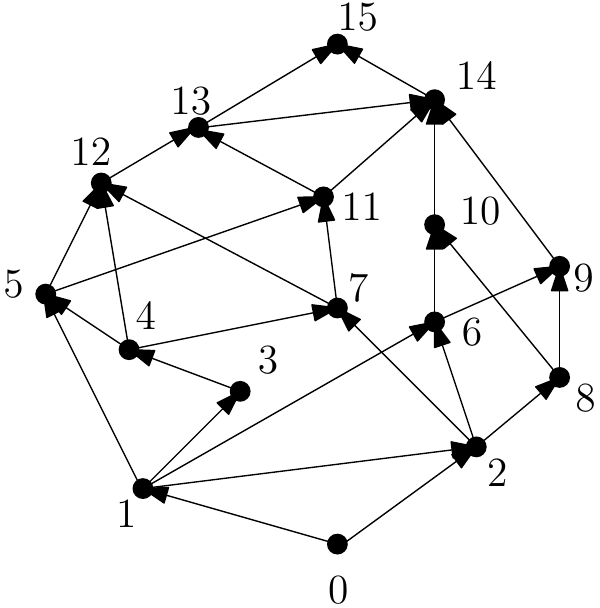}}
	\hfil
	\subfigure[]
	{\includegraphics[width=0.35\linewidth]{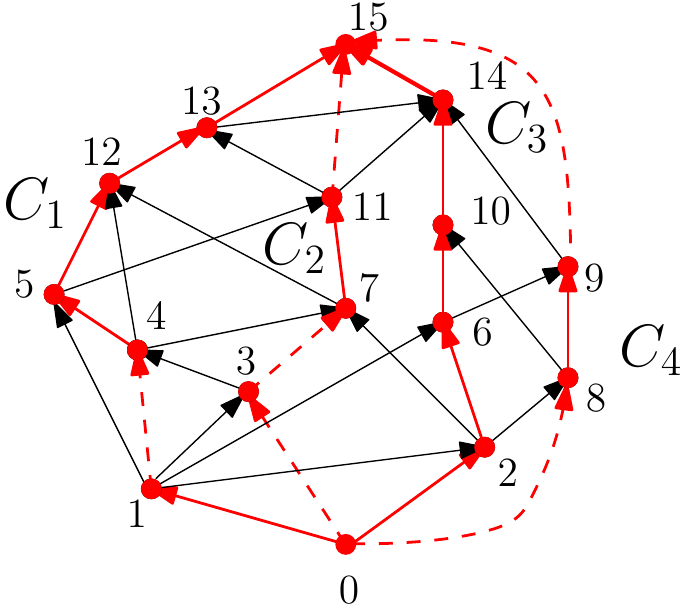}}
	\caption{(a) An st-graph and a channel decomposition of it, with source 0 and sink 15.  (b) A minimum size channel decomposition of $G$ is $S_c=\{C_1,C_2,C_3,C_4\}$, where: $C_1=\{0,1,4,5,12,13,15\}$; $C_2=\{0,3,7,11,15\}$; $C_3=\{0,2,6,10,14,15\}$; $C_4=\{0,8,9,15\}$.}
	\label{fi:channeldec}
\end{figure}
\subsection{Minimum Size Channel Decomposition}\label{se:minimumdecomposition}
In this section we introduce the concept of width of an st-graph and we give a short description of an algorithm that computes a minimum size channel decomposition of an st-graph. Two vertices $u,v\in V$ are incomparable if $u$ doesn't reach $v$ and vice versa. We denote by \emph{width} of a DAG $G$, $w_G$,  the maximum size of a set of incomparable vertices of $G$. Computing the width of a graph requires linear time~\cite{DBLP:journals/tods/Jagadish90}. Additionally, the following result is proved in~\cite{dilworth}:
\begin{lemma}
	\label{label:width}
	The minimum size of a channel decomposition of $G$ is equal to the width $w_G$ of $G$.
\end{lemma}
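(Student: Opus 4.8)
The plan is to recognize this statement as a restatement of Dilworth's theorem for the reachability partial order of $G$, together with a small amount of bookkeeping to account for the requirement that $s$ and $t$ lie in every channel. First I would observe that, since $G$ is a DAG, the reachability relation $\le_G$ on $V$ (with $u\le_G v$ iff $v$ is reachable from $u$, and $u\le_G u$) is a partial order. By definition a channel $C$ is an ordered set in which any two vertices are comparable under $\le_G$, i.e.\ a \emph{chain} of this partial order; conversely, any chain that contains $s$ and $t$ is a channel. Because $s$ reaches every vertex and every vertex reaches $t$, the vertices $s$ and $t$ are comparable to all others, so a family of channels $\{C_1,\dots,C_k\}$ is a channel decomposition of $G$ exactly when the sets $C_i\setminus\{s,t\}$ form a partition of $V\setminus\{s,t\}$ into chains of the induced partial order. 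Hence the minimum size of a channel decomposition equals the minimum number of chains needed to cover $V\setminus\{s,t\}$ (and is never $0$, since we can always take the single channel $V$ if it is itself a chain).

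For the lower bound, let $A$ be a set of $w_G$ pairwise incomparable vertices. If $w_G=1$ the bound $k\ge 1$ is trivial, so assume $w_G\ge 2$; then $A$ contains neither $s$ nor $t$, and no channel can contain two vertices of $A$ since the vertices of a channel are pairwise comparable. Therefore every channel decomposition has size at least $w_G$.

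For the upper bound I would invoke Dilworth's theorem~\cite{dilworth} applied to the partial order induced on $V\setminus\{s,t\}$: this poset can be partitioned into $w'$ chains, where $w'$ is the maximum size of an antichain contained in $V\setminus\{s,t\}$. Since every antichain of $G$ of size at least $2$ avoids $s$ and $t$, we get $w'=w_G$ when $w_G\ge 2$, while $w'\le 1$ (with $V\setminus\{s,t\}$ a chain, possibly empty) when $w_G=1$. Adjoining $s$ and $t$ to each of these chains produces $\max(w',1)=w_G$ valid channels whose union is $V$ and which pairwise intersect only in $\{s,t\}$, i.e.\ a channel decomposition of size $w_G$; combined with the lower bound this proves the lemma. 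The only real subtlety is the handling of $s$ and $t$: one must check that they can always be inserted into every chain (immediate from their being the unique source and sink) and dispose of the degenerate case $w_G=1$ separately. All the substantive content is carried by Dilworth's theorem, which I would cite rather than reprove.
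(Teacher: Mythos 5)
Your proof is correct and takes essentially the same route as the paper, which states this lemma with no argument of its own and simply attributes it to Dilworth's theorem. The extra bookkeeping you supply --- checking that $s$ and $t$ can be adjoined to every chain, that a maximum antichain of size at least $2$ avoids them, and handling the degenerate case $w_G=1$ --- is sound and merely fills in detail the paper leaves implicit.
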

Jagadish~\cite{DBLP:journals/tods/Jagadish90} presented an algorithm to compute a channel decomposition with the minimum number of channels in $O(n^3)$ time. To make our paper self-contained, in the next paragraph we outline a simple variation of this algorithm.

First we compute a graph $G'$ from $G$ such that: (a) any vertex $v$ of $G$ is associated with exactly two vertices of $G'$, that we call $x_v$ and $y_v$; (b) $(x_v,y_v)$ is an edge of $G'$ for any couple of vertices $x_v,y_v$; and (c) each edge $(u,v)$ of $G$ is associated with an edge $(y_u,x_v)$ of $G'$. Any channel $C=(s,v_i,...,v_j,t)$ starting from the source $s$ and ending at the sink $t$ of $G$ corresponds to a channel $C'=(x_s,y_s,x_{v_i},y_{v_i},...,x_{v_j},y_{v_j},x_y,y_t)$ of $G'$. Now we solve the standard max-flow problem on the acyclic graph $G'$ by techniques such as~\cite{10021311931}. In order to find a channel $C$ of $G$ from the flow of $G'$ we firstly add  vertex $s$ as the first element of $C$ and we decrease the flow sent through edge $(x_s,y_s)$ by one; then, if $v$ is the last vertex added to $C$, we look for an edge $(y_v,x_u)$ having a positive flow, we decrease its flow by one and we add $u$ as the last element of $C$. We repeat this process until $t$ is added to $C$. We find channels until every vertex of $G$ belongs to at least one channel. A vertex could be inserted to more than one channels; in this case we simply remove it from all the channels it belongs to, except for one. The computed channels constitute a minimum size channel decomposition. 

A faster algorithm, that runs in in $O(w_Gn^2)$ time, to compute a channel decomposition with the minimum number of channels is presented in~\cite{chendag}. We call this algorithm "Algorithm Channels-Generation". Hence, we have the following lemma:
\begin{lemma}\label{lemma:channeldecomposition}
	Algorithm Channels-Generation computes a channel decomposition of a DAG $G$ having $w_G$ channels in $O(w_Gn^2)$ time.
\end{lemma}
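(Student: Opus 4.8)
The plan is to derive this as a packaging of the algorithm of \cite{chendag} together with Lemma~\ref{label:width}, after first making the bridge between channel decompositions and chain partitions explicit. So the first step I would carry out is the observation that, since the source $s$ and the sink $t$ of an st-graph lie on a common path with every other vertex, a channel decomposition of $G$ is in bijection with a partition of $V\setminus\{s,t\}$ into chains of the reachability order: given chains $P_1,\dots,P_k$ one recovers channels by prepending $s$ and appending $t$ to each $P_i$, and this operation preserves the defining property of a channel (for $v,w$ in the same channel, $v$ precedes $w$ iff $w$ is reachable from $v$). In particular the minimum size of a channel decomposition equals the minimum number of chains covering $V\setminus\{s,t\}$, which by Lemma~\ref{label:width} is exactly $w_G$. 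Hence it suffices to show that Algorithm Channels-Generation outputs a \emph{minimum} chain partition and does so within the stated time bound.

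The second step is to recall the behaviour of the algorithm of \cite{chendag}. It processes the vertices of $G$ in a topological order while maintaining a collection of partially built channels, each tagged with its current last vertex; when a new vertex $v$ is encountered it is appended to some channel whose tail reaches $v$, and if no channel can directly absorb $v$ an augmenting-path style reassignment is performed over the current channels to free one up, a fresh channel being opened only when even that fails. Correctness—i.e.\ that the procedure never creates more than the optimum $w_G$ channels—is the König–Egerváry / Hungarian-method argument transported to this setting, and is exactly what \cite{chendag} establishes; combined with the first step this already gives the ``$w_G$ channels'' part of the claim, improving on the naive max-flow reduction (which, as sketched before the lemma, needs roughly $n-w_G$ augmentations of cost $O(m)$ each and the $O(n^3)$ bound of \cite{DBLP:journals/tods/Jagadish90}).

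For the time bound, each of the $n$ vertices triggers at most one augmentation step, and an augmentation step only needs to consult the $O(w_G)$ current channels and the chains hanging off them rather than re-scan all previously placed vertices; walking a channel to test reachability costs $O(n)$, giving $O(w_G n)$ per vertex and $O(w_G n^2)$ overall. I expect the genuine obstacle to be precisely this last point—arguing that the per-vertex work is $O(w_G n)$ and not $O(n^2)$, i.e.\ that the augmentation can be localized to the $O(w_G)$ channel tails—together with the optimality invariant that keeps the number of channels at $w_G$ throughout; both are supplied by \cite{chendag}, so the cleanest route for the paper is to cite that analysis and note that the size bound is Lemma~\ref{label:width}.
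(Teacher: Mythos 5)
Your proposal is correct and ends up exactly where the paper does: the paper offers no independent proof of this lemma, but simply cites \cite{chendag} for the $O(w_Gn^2)$ algorithm and identifies the minimum number of channels with $w_G$ via Lemma~\ref{label:width}. Your additional reconstruction of the augmenting-path mechanics of \cite{chendag} is plausible but not needed, since, as you yourself conclude, both correctness and the time bound are delegated to that reference.
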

\subsection{Projections and Compressed Transitive Closure}
\label{Section:projections}
Now we are ready to introduce the concept of \emph{projection} of a vertex $v$ on a channel $C$. We will also briefly talk about a data structure, called \emph{compressed transitive closure}, that can be used to store  all the projections for any vertex efficiently. The projections of a vertex $v$ will be used to decide its coordinates in all the dimensions of our multidimensional dominance drawing.

We denote by $u=(i,j)$ the fact that $u$ is the jth vertex of channel $C_i$. By the definition of channel decomposition we have $t=(i,|C_i|)$ and $s=(i,0)$ for any $i\in [1,k]$. We denote by \emph{projection} of a vertex $u\in V$ on a channel $C\in S$ the vertex $v\in C$ having the lowest position in $C$ among all the vertices of $C$ reachable from $u$. We denote it by $Proj_C (u) = v$, if $u\in C$ then $Proj_C (u) = u$. Notice that all the vertices can reach at least a vertex of every channel, since the sink $t$ of the graph belongs to all the channels. Hence, the projection $uC$ is defined for any couple $(u,C)\in (V, S_c)$.
	
The following lemma is immediate by the definition of projection:
\begin{lemma}
	\label{lemma:TC}
	Let $u\in V$ be a vertex and let $Proj_{C_i} (u) = v = (i,j)$ be the projection of $u$ on $C_i$; $u$ can reach a vertex $v'=(i,j')\in C_i$ if and only if $j'\ge j$.
\end{lemma}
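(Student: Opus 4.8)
The plan is to derive both implications directly from the two definitions in play: the definition of the projection $Proj_{C_i}(u)$ as the reachable vertex of $C_i$ having minimum position, and the defining property of a channel, namely that for two vertices $v,w$ lying in the same channel $C$, $v$ precedes $w$ in the channel order if and only if $w$ is reachable from $v$ in $G$.

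First I would prove the ``only if'' direction. Assume $u$ reaches $v'=(i,j')$. Then $v'$ belongs to the set of vertices of $C_i$ that are reachable from $u$, and since $v=Proj_{C_i}(u)=(i,j)$ is by definition the element of that set with the smallest position in $C_i$, we conclude $j\le j'$. For the ``if'' direction, assume $j'\ge j$. By definition $v=(i,j)$ is the projection of $u$ on $C_i$, so in particular $v$ is reachable from $u$. Since $v$ and $v'$ both lie in the channel $C_i$ and $j\le j'$, the vertex $v$ precedes (or equals) $v'$ in the order of $C_i$; by the defining property of a channel this means $v'$ is reachable from $v$ (trivially so when $v=v'$). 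Chaining the reachabilities $u\to v$ and $v\to v'$ and using transitivity of reachability yields that $u$ reaches $v'$, as required.

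The argument involves no real obstacle --- this is precisely why the statement is labelled ``immediate''. The only points to keep in mind are the benign boundary cases: if $u\in C_i$ then $Proj_{C_i}(u)=u$, so $v=u$ and both directions are trivial; and if $j=j'$ then $v=v'$, handled by reflexivity of reachability. One should also recall that the projection is well defined (so that $j$ exists at all) because the sink $t$ belongs to every channel and is reachable from every vertex, a fact noted just before the statement. With these remarks in place, the proof is essentially a two-line unwinding of the definitions of projection and channel.
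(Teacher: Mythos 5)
Your proof is correct and matches the paper's treatment: the paper states this lemma as immediate from the definition of projection and gives no further argument, and your two directions (minimality of the projection's position for ``only if''; reachability of the projection plus the channel-order property and transitivity for ``if'') are exactly the intended unwinding of the definitions.
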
 

Lemma~\ref{lemma:TC} shows that we can use the projections to study the reachability properties of $G$. Jagadish describes in~\cite{DBLP:journals/tods/Jagadish90} a data structure that we can use to store all the projections of any vertex of $G$ efficiently. This data structure is called \emph{compressed transitive closure}. In the same paper he shows how to compute the compressed transitive closure of $G$ in $O(km)$ time, store it in $O(kn)$ space and use it to read any projection $Proj_C (u) = v$ for any vertex $u$ and any channel $C$ in constant time.

Figure~\ref{fi:projections} shows a representation of the compressed transitive closure of graph $G$ depicted in Figure~\ref{fi:channeldec}(a) given a channel decomposition of it, which is depicted in Figure~\ref{fi:channeldec}(b). In order to have a better visualization, we restrict the representation of the compressed transitive closure to only two channels. Figure~\ref{fi:projections}(a) shows the compressed transitive closure restricted to channels $C_1$ and $C_2$. The red edges connect vertices of the same channel. Let $u$ and $v$ be two vertices not belonging to the same channel. A gray edge $(u,v)$ indicates that $v$ is the projection of $u$ on the channel of vertex $v$. Part~(b), Part~(c), Part~(d), Part~(e), Part~(f) represent the compressed transitive closure for the pairs of channels:  $(C_1,C_3)$, $(C_1,C_4)$, $(C_2,C_3)$, $(C_2,C_4)$ and $(C_3,C_4)$, respectively.
\begin{figure}
	\centering
	\subfigure[]
	{\includegraphics[width=0.3\linewidth]{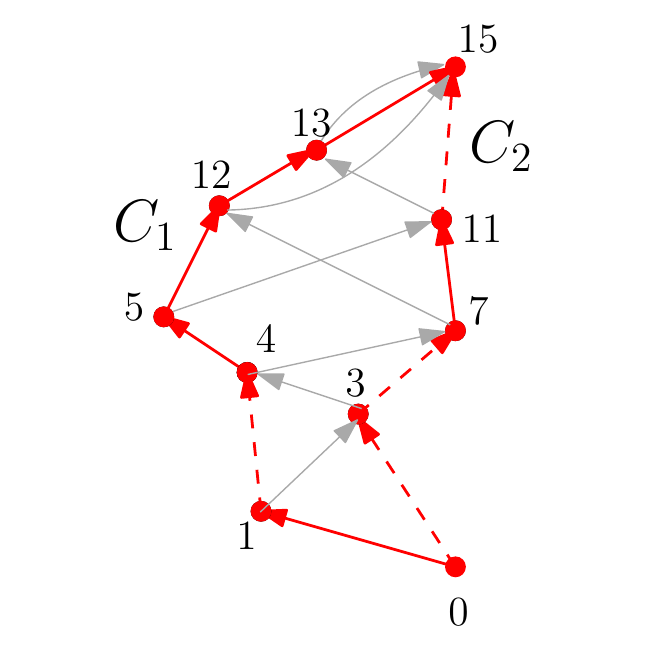}}
	\subfigure[]
	{\includegraphics[width=0.3\linewidth]{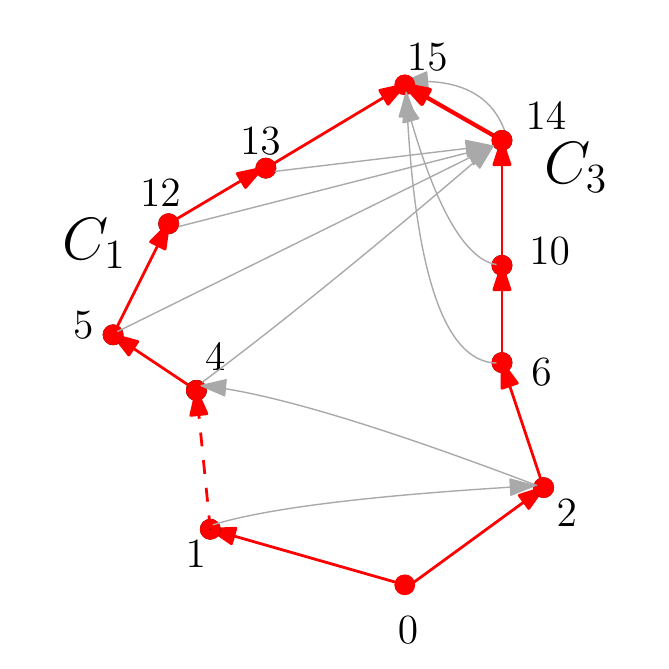}}
	\subfigure[]
	{\includegraphics[width=0.3\linewidth]{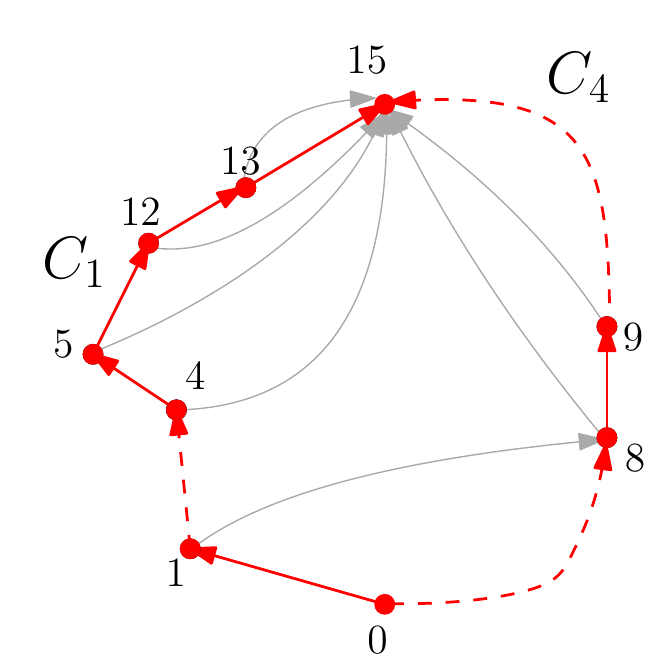}}
	\subfigure[]
	{\includegraphics[width=0.3\linewidth]{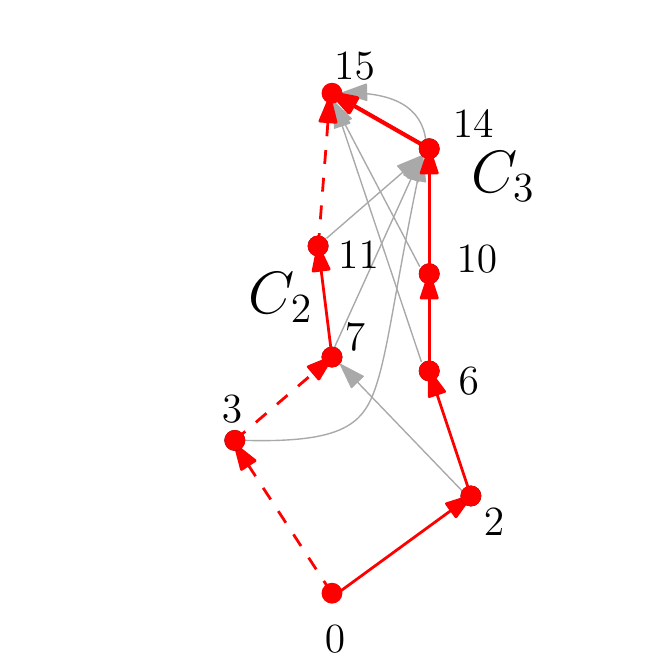}}
	\subfigure[]
	{\includegraphics[width=0.3\linewidth]{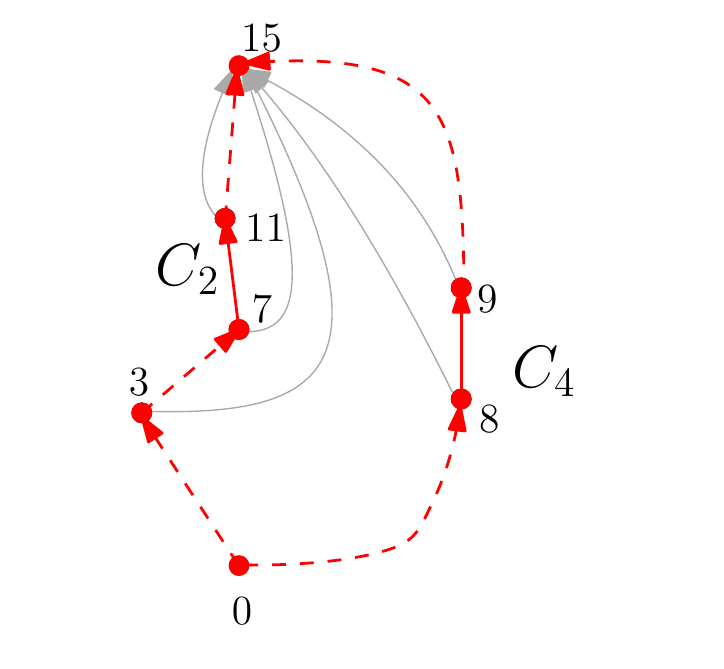}}
	\subfigure[]
	{\includegraphics[width=0.3\linewidth]{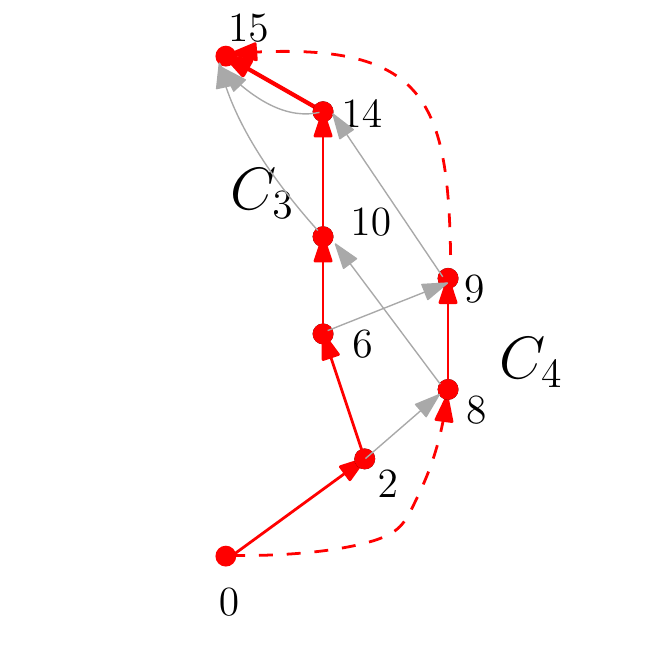}}
	\caption{Representation of the compressed transitive closure of graph $G$ depicted in Figure~\ref{fi:channeldec}(a) given the channel decomposition of it that is depicted in Figure~\ref{fi:channeldec}(b).  In order to have a clearer figure, we restrict the representation of the compressed transitive closure by showing pairs of channels at a time.}
	\label{fi:projections}
\end{figure}

\section{Multidimensional Dominance Drawing}
\label{Section:Multidimensional}
Let $S_c=\{C_1,...,C_k\}$ be a channel decomposition of size $k$ of st-graph $G=(V,E)$.
In this section we explain how we can use $S_c$ and the projections of the vertices of $G$ in order to create a dominance drawing of $G$ in $k$ dimensions. 

In Subsection \ref{subsection:2} we present Algorithm 2-Dimensional-Draw (or simply 2D-Draw), that, for $k=2$, computes a 2 dimensional dominance drawing of $G$. In Subsection \ref{subsection:k} we extend the 2-dimensional algorithm by introducing Algorithm k-Dimensional-Draw (or simply kD-Draw), that computes a $k$ dimensional dominance drawing of $G$ for any channel decomposition of size $k$.

\subsection{Base case: 2 Channels}
\label{subsection:2}
Let $S_c=\{C_1,C_2\}$ be a channel decomposition of an st-graph $G=(V,E)$ of size 2. We will present Algorithm 2D-Draw that receives as input $G$ and $S_c$ and produces a two dimensional dominance drawing $\Gamma$ of $G$.

The algorithm uses the order of the vertices in each channel, $C_1$ and $C_2$, in order to assign $X$ and $Y$ coordinates to the vertices of $C_1$ and $C_2$. Then the algorithm assigns appropriate $X$ coordinates to the vertices that do not belong to channel $C_1$ and $Y$ coordinates to the vertices that do not belong to channel $C_2$. It does this by assigning an $X$-coordinate to each vertex in $C_2$ using the corresponding projection's $X$-coordinate, that was already assigned before (as shown in Line 4). Similarly, it assigns a $Y$-coordinate to each vertex in $C_1$ using the corresponding projection $Y$-coordinate, that was already assigned before (as shown in Line 6). This process is shown in Lines 7-12.\\\\
	\textbf{Algorithm} 2D-Draw($G$, $S_c$)\\
	1. \indent $\Gamma$ = new 2-dimensional drawing\\
	2. \indent \textbf{For} any $v=(i,j)\in V$:\\
	3. \indent \indent \textbf{If}($i=1$):\\
	4. \indent \indent \indent $X(v)=j$\\
	5. \indent \indent \textbf{Else}:\\
	6. \indent \indent \indent $Y(v)=j$:\\
	7. \indent \textbf{For} any $v\in C_1$\\
	8. \indent \indent $Proj_{C_2} (v)=(2,j)$\\
	9. \indent \indent $Y(v)=j$\\
	10.\indent \textbf{For} any $v\in C_2$\\
	11.\indent \indent $Proj_{C_1} (v)=(1,j)$\\
	12.\indent \indent $X(v)=j$\\
	13.\indent \emph{output:}  $\Gamma$\\\\
	
Figure~\ref{fi:2d} is an illustration of Algorithm 2D-Draw. Part~(a) shows an st-graph $G$. The source of $G$ is vertex 0 and the sink of $G$ is vertex 7. Part~(b) shows a channel decomposition $S_c=\{C_1,C_2\}$ of $G$, where: $C_1=\{0,3,4,7\}$; $C_2=\{0,1,2,5,6,7\}$. Part~(c) shows the projections of the vertices of $G$ on the channels they don't belong to. Part~(d) shows the $X$ coordinate assignment for the vertices of $C_1$ and the $Y$ coordinate assignment for the vertices of $C_2$. These assignments are shown by writing the number of a vertex next to the corresponding coordinate; this is performed in Lines 2-6 of Algorithm 2D-Draw. Part~(e) shows the assignment of the other coordinate to every vertex of $G$, that performed by Algorithm 2D-Draw in Lines 7-12 by using the projections of the vertices. For example, The projection of vertex 1 of channel $C_2$ is $4$, hence, $X(1)=X(4)$.
\begin{figure}[ht]
	\centering
	\subfigure[]
	{\includegraphics[width=0.25\linewidth]{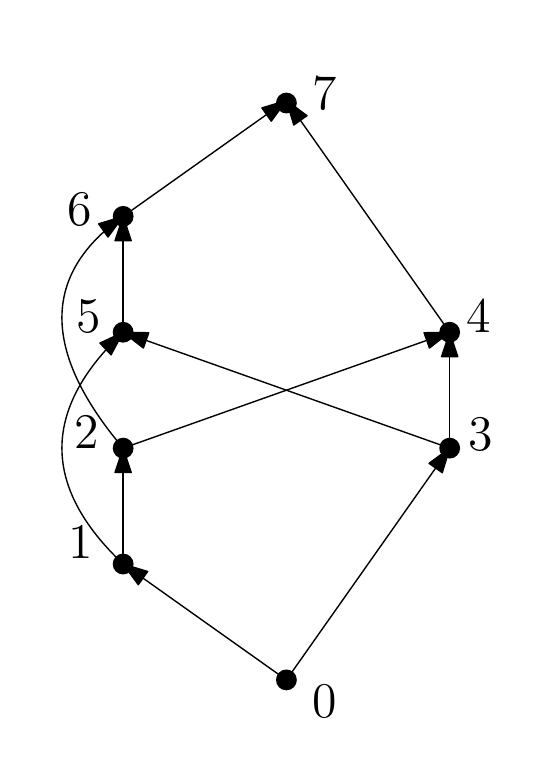}}
	\hfil
	\subfigure[]
	{\includegraphics[width=0.25\linewidth]{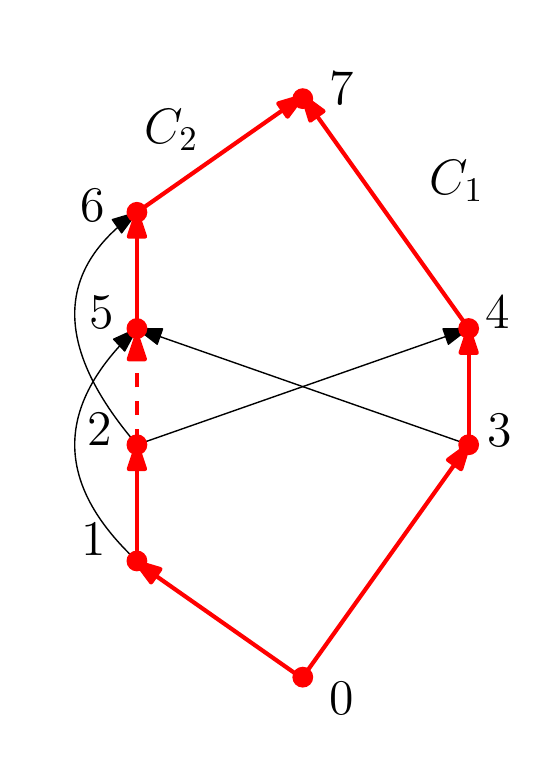}}
	\hfil
	\subfigure[]
	{\includegraphics[width=0.25\linewidth]{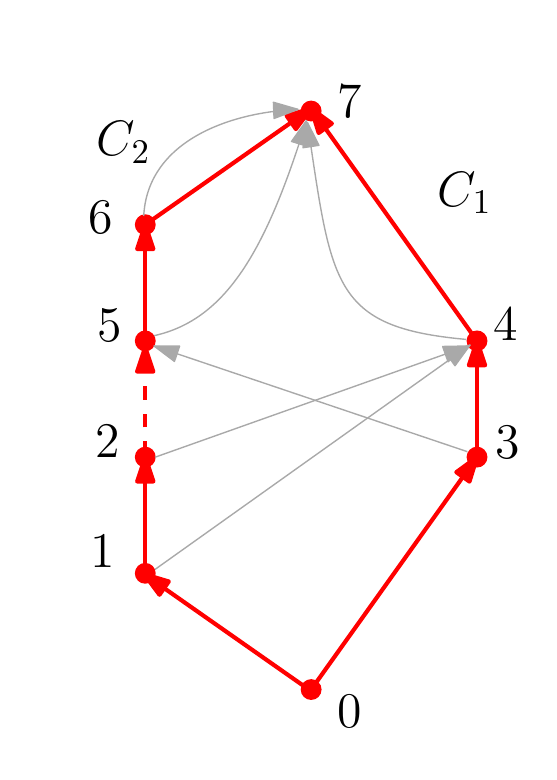}}
	\\
	\hfil
	\subfigure[]
	{\includegraphics[width=0.25\linewidth]{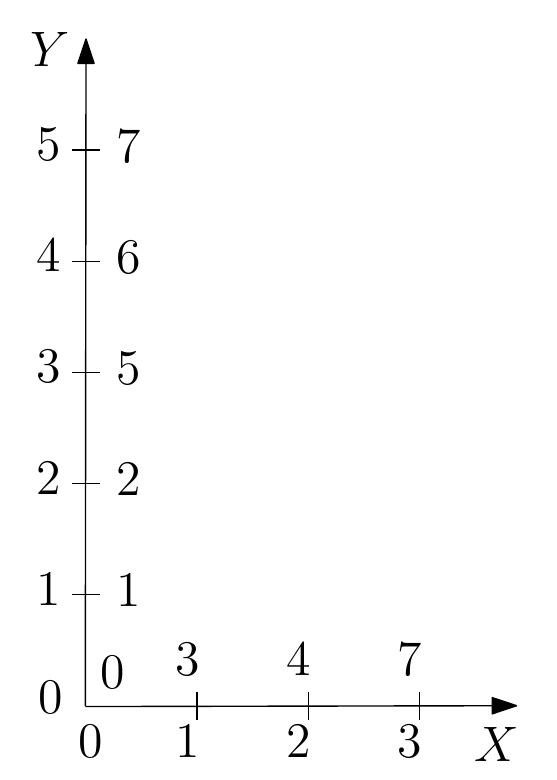}}
	\hfil
	\subfigure[]
	{\includegraphics[width=0.25\linewidth]{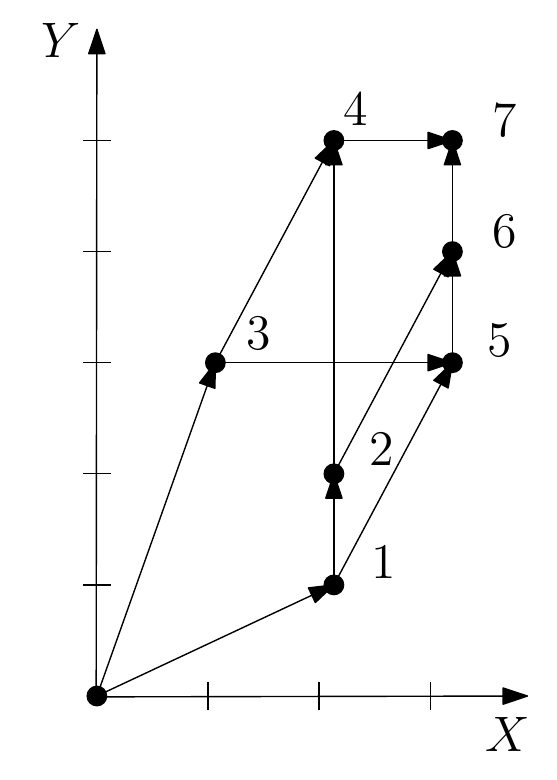}}
	\caption{Illustration of Algorithm 2D-Draw: (a) an st-graph $G$;  (b) a two-channel decomposition $S_c=\{C_1,C_2\}$ of $G$, where: $C_1=\{0,3,4,7\}$; $C_2=\{0,1,2,5,6,7\}$;  (c) the projections of the vertices of $G$ on the channels they do not belong to; (d) the $X$ coordinate assignment for the vertices of $C_1$ and the $Y$ coordinate assignment for the vertices of $C_2$; (e) the assignment of the other coordinate of every vertex of $G$.}
	\label{fi:2d}
\end{figure}
\indent
\\
\indent
We denote by $r(u,v)=yes$ the fact that there exists a path starting from $u$ and ending at $v$ in $G$ and by $r(u,v)=no$ the fact that this path does not exist. Let $\Gamma$ be a drawing of $G$. We denote by $u\preceq v$ the fact that all the coordinates of $u$ are less than or equal to the coordinates of $v$ in all dimensions of $\Gamma$.
The main result of this subsection is that Algorithm 2D-Draw computes a 2-dimensional dominance drawing of $G$. In other words, we prove that, for any $u,v\in G$: $r(u,v)=yes\Leftrightarrow u\preceq v$. This result will be proved in Lemma~\ref{lemma:dd2}. Before we are able to prove that however, we need the following three lemmas (Lemma~\ref{lemma:same_position2}, Lemma~\ref{lemma:source_sink_2} and Lemma\ref{lemma:reachability_2}) that will be used in the proof of Lemma~\ref{lemma:dd2}.
 
\begin{lemma}
	\label{lemma:same_position2}
	Any two distinct vertices $v,w\in V$ are placed on distinct points in $\Gamma$. 
\end{lemma}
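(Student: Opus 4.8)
The plan is a case analysis on which of the two channels contains each of $v$ and $w$. I would use two facts about the output of Algorithm 2D-Draw. (a) Within a single channel distinct vertices have distinct positions; moreover Lines 2--4 give every $u\in C_1$ an $X$-coordinate equal to its position in $C_1$, and Lines 5--6 (together with Lines 7--9 for the vertices $s,t$, which are handled as vertices of $C_1$) give every $u\in C_2$ a $Y$-coordinate equal to its position in $C_2$; this is consistent because $Proj_C(u)=u$ whenever $u\in C$. (b) $Proj_C(u)$ is always reachable from $u$ in $G$.

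First I would dispose of $s$ and $t$. They are assigned the extremal points $(0,0)$ and $(|C_1|,|C_2|)$, while every other vertex gets, if it lies in $C_1\setminus\{s,t\}$, an $X$-coordinate in $\{1,\dots,|C_1|-1\}$, and if it lies in $C_2\setminus\{s,t\}$, a $Y$-coordinate in $\{1,\dots,|C_2|-1\}$; in either case its point differs from both $(0,0)$ and $(|C_1|,|C_2|)$. Hence $s$ and $t$ are placed on points distinct from each other and from every other vertex, and it remains to compare two distinct vertices $v,w\notin\{s,t\}$. Since $C_1\cap C_2=\{s,t\}$, each of $v,w$ then lies in exactly one of $C_1,C_2$. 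If both lie in $C_1$, their distinct positions in $C_1$ give $X(v)\neq X(w)$; if both lie in $C_2$, symmetrically $Y(v)\neq Y(w)$; either way the points differ.

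The only case with content is $v\in C_1\setminus C_2$ and $w\in C_2\setminus C_1$ (up to renaming), where in particular $v\neq w$. Here I would argue by contradiction: if $v$ and $w$ were placed on the same point, then $X(v)=X(w)$ would say that the position of $v$ in $C_1$ equals the position of $Proj_{C_1}(w)$ in $C_1$, which by uniqueness of positions forces $v=Proj_{C_1}(w)$, so $w$ reaches $v$ in $G$; and $Y(v)=Y(w)$ would likewise force $w=Proj_{C_2}(v)$, so $v$ reaches $w$. Together, $v$ and $w$ would lie on a directed cycle of $G$, contradicting that $G$ is a DAG.

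I do not expect a genuine obstacle; the only delicate point is the bookkeeping around $s$ and $t$, which belong to both channels, so one must fix a convention for which of Lines 3 and 5 applies to them and check that the case analysis remains exhaustive — and this is exactly what the extremal-coordinate observation above takes care of, after which acyclicity closes the argument.
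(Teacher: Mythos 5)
Your proposal is correct and follows essentially the same route as the paper: a case split on which channel(s) contain the two vertices, with the same-channel case settled by distinct positions and the cross-channel case settled by showing that coincident coordinates would force each vertex to be the other's projection, yielding a cycle in the DAG. The extra bookkeeping you do for $s$ and $t$ is a harmless refinement of the paper's implicit handling of these shared vertices.
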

\begin{proof}
	Let $v=(1,j)$ and $u=(h,l)$ be two different vertices. Without loss of generality we suppose that $v\in C_1$ . We have two cases: (1) $u\in C_1$; (2) $u\in C_2$. We need to prove that $X(v)\not = X(u)$ or $Y(v)\not = Y(u)$:
	\begin{enumerate}[leftmargin=*]
		\item $h=1$ and $u=(1,l)$. We have $X(v)=j\not =l = X(u)$, since in this case $j=l$ would imply that $v$ and $u$ are the same vertex.
		\item $h=2$ and $u=(2,l)$. Let $v'=Proj_{C_2}=(2,l')$ and $u'=Proj_{C_1}=(1,j')$ be the projections of $v$ and $u$. According to the algorithm $X(v)=j, Y(v)=l'$ and $X(u)=j', Y(u)=l$. The vertices $v$ and $u$ would be placed in the same point if and only if $l=l'$ and $j=j'$, thus $v=u'$ and $u=v'$. In this case $r(v,u)=yes$ and $r(u,v)=yes$ by definition of projection, so there must be a cycle in $G$. This is a contradiction, since $G$ is a DAG. Consequently $X(v)\not = X(u)$ or $Y(v)\not= Y(u)$.
	\end{enumerate}
\end{proof} 
\begin{lemma}
	\label{lemma:source_sink_2}
	Let $v$ be a vertex of $G$. The source $s$ is dominated by $v$ and the sink $t$ dominates $v$ in $\Gamma$. In other words: $X(s)\le X(v)\le X(t) $ and $Y(s)\le Y(v) \le Y(t)$.  
\end{lemma}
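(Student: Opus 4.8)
The plan is to track, for each vertex $v$, which of its two coordinates comes directly from the channel index $j$ (call this the ``intrinsic'' coordinate) and which comes via a projection (call this the ``derived'' coordinate), and then argue separately that $s$ sits weakly below every vertex and $t$ sits weakly above it in both coordinates. Since $s=(i,0)$ and $t=(i,|C_i|)$ belong to every channel, in particular $s,t\in C_1\cap C_2$, so both of their coordinates are intrinsic: by Lines 3--4 and 7--9 we get $X(s)=0$ and $Y(s)=0$ (the position of $s$ in $C_1$ and in $C_2$ is $0$), and likewise $X(t)=|C_1|$, $Y(t)=|C_2|$.

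For the $X$-inequalities, first I would handle $v\in C_1$: then $X(v)=j$ where $v=(1,j)$ and $0\le j\le |C_1|$, giving $X(s)=0\le X(v)\le |C_1|=X(t)$ immediately. For $v\in C_2$, $X(v)$ is the derived coordinate, equal to $j'$ where $Proj_{C_1}(v)=(1,j')$; since every vertex can reach $t\in C_1$, the projection is well-defined, and since $s$ reaches $v$ which reaches its projection, Lemma~\ref{lemma:TC} (applied on $C_1$, using that $s=(1,0)$ is reachable from $s$ trivially and $Proj_{C_1}(v)$ is reachable from $s$) forces $0\le j'\le |C_1|$, hence $X(s)\le X(v)\le X(t)$. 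The argument for the $Y$-inequalities is the mirror image, swapping the roles of $C_1$ and $C_2$ and of the Lines 3--6 vs. 7--12 assignments.

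The only mildly delicate point — the place I would be most careful — is justifying the bounds on the projection index $j'$ rather than treating them as obvious: the upper bound $j'\le |C_1|$ holds because $Proj_{C_1}(v)$ is by definition a vertex of $C_1$, and the position of any vertex of $C_1$ is at most $|C_1|$ (that of $t$); the lower bound $j'\ge 0$ is trivial since positions are nonnegative. One could alternatively invoke Lemma~\ref{lemma:TC} directly: $s$ reaches $v$, $v$ reaches $Proj_{C_1}(v)=(1,j')$, and $v$ reaches $t=(1,|C_1|)$; combining reachabilities and the ``if and only if'' of Lemma~\ref{lemma:TC} pins $j'$ between $Proj_{C_1}(s)$'s index ($=0$, since $s\in C_1$) and $|C_1|$. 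Either way the statement follows, and no case analysis beyond ``$v\in C_1$ versus $v\in C_2$'' (and its transpose for $Y$) is needed.
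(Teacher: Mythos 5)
Your proposal is correct and follows essentially the same route as the paper's (much terser) proof: both rest on the observation that $s$ and $t$ receive the extreme positions $0$ and $|C_i|$ in each channel, and every other vertex's coordinates—whether intrinsic channel positions or projection positions—are positions of vertices within $C_1$ or $C_2$ and hence lie in $[0,|C_1|]$ and $[0,|C_2|]$ respectively. The extra care you take in bounding the projection index $j'$ is a legitimate filling-in of a step the paper leaves implicit, not a different argument.
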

\begin{proof}
	Algorithm 2D-Draw places $s$ such that $X(s)=Y(s)=0$ and it places $t$ such that $X(t)=|C_1|$ and $Y(t)=|C_2|$. Moreover, it places any $v$ such that $0\le X(v)\le |C_1|$ and $0\le Y(v)\le |C_2|$.
\end{proof}
\begin{lemma}
	\label{lemma:reachability_2}
	Let $u$ and $v$ be two vertices of $G$. If $u\in C_1$: $r(u,v)=yes \Leftrightarrow X(u)\le X(v)$. Else, if $u\in C_2$: $r(u,v)=yes \Leftrightarrow Y(u)\le Y(v)$.
\end{lemma}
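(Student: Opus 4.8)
The plan is to prove only the assertion for $u\in C_1$; the assertion for $u\in C_2$ follows from it verbatim after swapping $C_1\leftrightarrow C_2$ and $X\leftrightarrow Y$, since Lines~3--6 and Lines~7--12 of Algorithm 2D-Draw are symmetric in the two channels. Fix $u=(1,j)\in C_1$, so that $X(u)=j$ is the position of $u$ in $C_1$ (Line~4). The observation I would set up first is that for \emph{every} vertex $w$ the coordinate $X(w)$ equals the position in $C_1$ of $Proj_{C_1}(w)$: this is precisely what Lines~11--12 do when $w\notin C_1$, and when $w\in C_1$ it holds because $Proj_{C_1}(w)=w$ (acyclicity forbids $w$ from reaching an earlier vertex of its own channel). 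Moreover $s$ and $t$ lie in both channels, but by Lemma~\ref{lemma:source_sink_2} they occupy the extreme $X$-values $0$ and $|C_1|$, so they satisfy the statement trivially and need no separate discussion.

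With this in hand the lemma reduces to the single claim: for $u\in C_1$ and any vertex $w$, $r(u,w)=yes$ if and only if $j\le X(w)$. I would prove the claim by splitting on whether $w\in C_1$. If $w=(1,\ell)\in C_1$, then $X(w)=\ell$, and by the definition of a channel together with acyclicity, $r(u,w)=yes$ exactly when $u$ precedes or equals $w$ in the order of $C_1$, i.e. exactly when $j\le\ell=X(w)$ (if $j>\ell$ then the channel order already gives a path $w\rightsquigarrow u$, so a path $u\rightsquigarrow w$ would close a cycle). If $w\notin C_1$, write $Proj_{C_1}(w)=(1,p)$, so $X(w)=p$. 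For the forward implication, $r(u,w)=yes$ means $u$ reaches $w$; chaining that path with the one realizing $Proj_{C_1}(w)$ shows $u$ reaches the vertex $(1,p)$ of $C_1$, and since $Proj_{C_1}(u)=u=(1,j)$, Lemma~\ref{lemma:TC} then forces $p\ge j$, i.e. $j\le X(w)$.

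The reverse implication in the case $w\notin C_1$ is the part I expect to require the most care: from the single inequality $j\le p$ we must manufacture a genuine directed path $u\rightsquigarrow w$. The idea is to route through $C_1$: since $u=(1,j)$ and $(1,p)$ both lie on $C_1$ with $j\le p$, the channel order --- equivalently, Lemma~\ref{lemma:TC} applied to $u$ --- supplies a directed path from $u$ to $(1,p)=Proj_{C_1}(w)$, and the projection relation supplies a directed path connecting $(1,p)$ and $w$; concatenating the two gives $r(u,w)=yes$. The only genuine obstacle is a bookkeeping one: one has to verify that Lemma~\ref{lemma:TC} and the definition of $Proj_{C_1}$ are invoked with their monotonicities pointing the same way, so that the two subpaths compose head-to-tail into one directed path $u\rightsquigarrow w$. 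Granting that, the three cases together give the claimed equivalence, and hence the lemma.
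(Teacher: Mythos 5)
Your decomposition into the in-channel case and the forward implication for $w\notin C_1$ is sound, but the step you flag as ``bookkeeping'' in the reverse implication is a genuine failure, not a technicality. By definition, $Proj_{C_1}(w)$ is the earliest vertex of $C_1$ \emph{reachable from} $w$, so the projection relation supplies a directed path from $w$ \emph{to} $(1,p)$, while the channel order supplies a path from $u$ to $(1,p)$: both paths terminate at $(1,p)$, they cannot be composed head-to-tail, and no path $u\rightsquigarrow w$ can be extracted from them. Indeed the implication you are trying to establish is false. Take $V=\{s,a,b,t\}$ with edges $s\to a$, $s\to b$, $a\to t$, $b\to t$, and channels $C_1=(s,a,t)$, $C_2=(s,b,t)$. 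Then $X(a)=1$, while $Proj_{C_1}(b)=t=(1,2)$ gives $X(b)=2$, so $X(a)\le X(b)$, yet $a$ does not reach $b$.

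What this reveals is that the statement as printed has the roles of $u$ and $v$ interchanged. The assertion that is actually true --- and the one invoked later in the proof of Lemma~\ref{lemma:dd2}, where it is $v$ that is assumed to lie on $C_1$ --- is: if $v\in C_1$ then $r(u,v)=yes\Leftrightarrow X(u)\le X(v)$, and symmetrically for $C_2$ and $Y$. That version follows in one line from Lemma~\ref{lemma:TC}: the algorithm makes $X(u)$ equal to the position in $C_1$ of $Proj_{C_1}(u)$ for every vertex $u$ (its own position if $u\in C_1$), and Lemma~\ref{lemma:TC} says that $u$ reaches a vertex $(1,j)$ of $C_1$ exactly when $j$ is at least that position; taking $(1,j)=v$ gives the equivalence, which is in substance the paper's own short proof. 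Your forward implication and your in-channel case survive unchanged under this corrected reading; the problematic case, source on the channel and target off it, simply should not occur in the lemma.
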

\begin{proof}
	If $u\in C_1$ Algorithm 2D-Draw places $v$ such that $X(v)=j$, where $Proj_{C_1} (v)=(1,j)$. If $u\in C_2$ Algorithm 2D-Draw places $v$ such that $Y(v)=j$, where $Proj_{C_2} (v)=(2,j)$. The proof is given by Lemma~\ref{lemma:TC}.
\end{proof}
Now, we are ready to prove Lemma~\ref{lemma:dd2}, which states that Algorithm 2D-Draw computes a dominance drawing.

\begin{lemma}
	\label{lemma:dd2}
	Drawing $\Gamma=$ 2D-Draw($G$, $S_c$), computed by Algorithm 2D-Draw is a dominance drawing.
\end{lemma}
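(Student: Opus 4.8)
The plan is to show both directions of the equivalence $r(u,v)=yes \Leftrightarrow u \preceq v$ for arbitrary vertices $u,v$ of $G$, using the three preparatory lemmas. Throughout, recall that in two dimensions $u \preceq v$ means exactly $X(u) \le X(v)$ and $Y(u) \le Y(v)$.

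For the forward direction ($r(u,v)=yes \Rightarrow u \preceq v$), I would split on which channel $u$ belongs to, say $u \in C_1$ (the case $u \in C_2$ is symmetric). If $r(u,v)=yes$, then Lemma~\ref{lemma:reachability_2} immediately gives $X(u) \le X(v)$. It remains to get $Y(u) \le Y(v)$. Here I would use the projection onto $C_2$: let $Proj_{C_2}(u) = (2,\ell)$, so that $Y(u) = \ell$ by Lines 7--9 of the algorithm. Since $u$ reaches $v$ and $v$ reaches $Proj_{C_2}(v)=(2,\ell')$, transitivity shows $u$ reaches $(2,\ell')$, so by Lemma~\ref{lemma:TC} (applied with the projection of $u$ on $C_2$) we get $\ell' \ge \ell$, i.e. $Y(v) = \ell' \ge \ell = Y(u)$. (If $v \in C_2$ this is even more direct, since then $Proj_{C_2}(v)=v$ and Lemma~\ref{lemma:TC} applies verbatim.) Combining, $u \preceq v$.

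For the reverse direction ($u \preceq v \Rightarrow r(u,v)=yes$), assume $X(u) \le X(v)$ and $Y(u) \le Y(v)$ but, for contradiction, $r(u,v)=no$. Again split on the channel of $u$. If $u \in C_1$: from $X(u) \le X(v)$ and Lemma~\ref{lemma:reachability_2} we would get $r(u,v)=yes$ unless $u \notin C_1$ — wait, this needs care. Actually Lemma~\ref{lemma:reachability_2} is an equivalence for $u \in C_1$, so $X(u) \le X(v)$ directly forces $r(u,v)=yes$, contradiction. So the only remaining case is $u \in C_2$ (and $u \notin C_1$); then $Y(u) \le Y(v)$ together with Lemma~\ref{lemma:reachability_2} for $u \in C_2$ again forces $r(u,v)=yes$. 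Hence in all cases $r(u,v)=yes$, contradicting our assumption — so the reverse direction holds. I would also invoke Lemma~\ref{lemma:same_position2} to note distinct vertices get distinct points (so the drawing is well-defined as a drawing) and Lemma~\ref{lemma:source_sink_2} if the definition of dominance drawing in use requires the source/sink to be extremal, though this may not be strictly needed for the equivalence itself.

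The main obstacle I anticipate is the forward direction when neither $v \in C_1$ nor $v \in C_2$ is the channel of $u$ — i.e. correctly chaining the reachabilities $u \to v \to Proj_{C}(v)$ and applying Lemma~\ref{lemma:TC} to the projection $Proj_{C}(u)$ rather than to $u$ directly, making sure the position indices compare in the right direction. A secondary subtlety is being careful that Lemma~\ref{lemma:reachability_2} is stated as a genuine biconditional for each of the two channel cases, which is what makes the reverse direction almost immediate; the bulk of the real work is already packaged in Lemmas~\ref{lemma:TC} and~\ref{lemma:reachability_2}, so the proof of Lemma~\ref{lemma:dd2} should be short once the case analysis is organized by the channel membership of $u$.
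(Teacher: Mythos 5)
Your overall plan coincides with the paper's: the direction $u\preceq v\Rightarrow r(u,v)=yes$ is delegated to Lemma~\ref{lemma:reachability_2}, degenerate cases are handled by Lemmas~\ref{lemma:same_position2} and~\ref{lemma:source_sink_2}, and the substantive work is the direction $r(u,v)=yes\Rightarrow u\preceq v$. Your treatment of that forward direction is in fact cleaner than the paper's: you note that every coordinate of every vertex $w$ is the position of $Proj_{C}(w)$ on the corresponding channel, chain $u\to v\to Proj_{C}(v)$, and apply Lemma~\ref{lemma:TC} once per channel, which handles the same-channel and different-channel cases uniformly; the paper instead splits into those two cases and argues by contradiction via cycles. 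Same idea, better organized.

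The one step that needs repair is the reverse direction, where you case on the channel containing $u$ and deduce $r(u,v)=yes$ from $X(u)\le X(v)$ alone (for $u\in C_1$). Taken literally that single-coordinate implication is false: take $s\to a\to t$, $s\to b\to t$ with $a,b$ incomparable, $C_1=(s,a,t)$, $C_2=(s,b,t)$; then $Proj_{C_1}(b)=t$, so $X(a)=1\le 2=X(b)$ yet $r(a,b)=no$. The biconditional in Lemma~\ref{lemma:reachability_2} is only sound when the channel condition is placed on the \emph{target} vertex: if $v\in C_1$ then $X(v)$ is $v$'s own position in $C_1$, $X(u)$ is the position of $Proj_{C_1}(u)$, and Lemma~\ref{lemma:TC} gives $r(u,v)=yes\Leftrightarrow X(u)\le X(v)$ directly, because reaching a vertex that lies on the channel is exactly what Lemma~\ref{lemma:TC} characterizes; $X(u)\le X(v)$ in general only certifies that $u$ reaches $Proj_{C_1}(v)$, which equals $v$ precisely when $v\in C_1$. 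So organize the reverse direction by the channel of $v$ rather than of $u$ (every vertex lies on some channel, so this costs nothing); this is what the paper's WLOG assumption $v=(1,j)$ is implicitly doing, even though its own statement of Lemma~\ref{lemma:reachability_2} conditions on $u$. With that adjustment your argument is complete.
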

\begin{proof}
	Let $v=(1,j)$ and $u=(h,l)$ be two vertices, and assume without loss of generality, that $v\in C_1$. If $u$ or $v$ is the source or the sink the theorem is true by Lemma~\ref{lemma:source_sink_2}. Recall that by Lemma~\ref{lemma:same_position2} Algorithm 2D-Draw never places two distinct vertices on the same point of $\Gamma$.  Hence, in order to prove that $\Gamma$ is  a dominance drawing we have to prove: $u\preceq v$ $\Leftrightarrow$ $r(u,v)=yes$.  By Lemma~\ref{lemma:reachability_2} we have that: $u\preceq v$ $\Rightarrow$ $r(u,v)=yes$. Thus we have to prove the following: $u\preceq v$ $\Leftarrow$ $r(u,v)=yes$. For that, suppose that $u$ is reachable from $v$. Let $u', v'$ be the projections of $u, v$ on the channel they don't belong to, respectively. If $r(u,v)=yes$ we have two cases: (1) $v$ and $u$ are in the same channel or (2) they belong to different channels.
	\begin{enumerate}
		\item Let $v'=Proj_{C_2} (v)=(2,j')$ and $u'=Proj_{C_1} (u)=(2,l')$ be the projections of $v$ and $u$. We have that $l<j$, since $r(u,v)=yes$ and by definition of channels, therefore $X(u)=l<j=X(v)$. If $u'$ is a successor of $v'$ then there is a cycle, therefore $u'=v'$ or $v'$ is a successor of $u'$. This implies that $l'\le j'$, which implies that $Y(u)\le Y(v)$, since $Y(v)=j'$ and $Y(u)=l'$. Hence, $u\preceq v$.
		\item  In this case $v=(1,j)$, $u=(2,l)$, $u'=Proj_{C_1} (u)=(1,l')$ and $v'=Proj_{C_2} (v)=(2,j')$. We have $X(v)=j, X(u)=l'$ and $r(u,v)=yes$. By Lemma~\ref{lemma:reachability_2} we conclude $X(u)\le X(v)$. We have $Y(v)=j'$ and $Y(u)=l$. Additionally, if it were $Y(v)=j'<l=Y(u)$ then $r(v,u)$ by Lemma~\ref{lemma:reachability_2}. In this case $G$ has a cycle, which is a contradiction. Since by hypothesis $r(u,v) = yes$ therefore $Y(u)=l\le l'=Y(v)$ and $u\preceq v$.
	\end{enumerate}
\end{proof} 
\subsection{Generalization: from 2 to k Channels}
\label{subsection:k}
In this subsection we extend the result obtained above by showing how to construct dominance drawings in $k$ dimensions. The algorithm that we present is called Algorithm kD-Draw. The input to the algorithm is a graph $G$ and a channel decomposition with $k$ channels, $S_c=\{C_1,...,C_k\}$. The output of the algorithm is a dominance drawing $\Gamma$ of $G$ in $k$ dimensions.
Similar to the two dimensional case, the $k$-dimensional algorithm uses the order of the vertices in each channel, $C_1,...,C_k$, in order to assign coordinates, $D_1,...,D_k$. Clearly, in the previous section we had $k=2$, and the dimensions were called $D_1=X$ and $D_2=Y$. 

Given any vertex $v=(i,j)\in C_i$, Algorithm kD-Draw will assign the ith dimension of $v$ as $D_i(v)=j$. Then the algorithm assigns appropriate coordinates $D_h(v)$, for all $h\not=i$, as follows: $D_h(v) = l$, where $l$ is the position of the corresponding projection of $v$ in channel $C_h$, i.e., $u=Proj_{C_h} (v)=(h,l)$. In other words, the position of $u$ in channel $C_h$. This is done for all vertices and all dimensions/channels.\\\\
	\textbf{Algorithm} kD-Draw($G$,$S_c=\{C_1,...,C_k\}$)\\
	1. \indent $\Gamma$ = new k-dimensional drawing\\
	2. \indent \textbf{For} any $v=(i,j)\in V$:\\ 
	3. \indent \indent $D_i(v)=j$\\
	4. \indent \textbf{For} any $v\in V$:\\
	5. \indent \indent \textbf{For} any $C_h\in S_c$ such that $v\not \in C_h$:\\
	6. \indent \indent \indent $Proj_{C_h} (v)=(h,l)$\\
	7. \indent \indent \indent $D_{h}(v)=l$\\
	8. \indent \emph{output:}  $\Gamma$\\\\
Figure~\ref{fi:kd} shows an illustration of Algorithm kD-Draw. The input of the algorithm is the same graph $G$ as depicted in Figure~\ref{fi:channeldec}(a) and the channel decomposition $S_c$ of $G$ as depicted in Figure~\ref{fi:channeldec}(b). In this case $k=4$, hence, Algorithm kD-Draw will produce a 4-dimensional dominance drawing of $G$. Part~(a) shows the initialization step, where the algorithm assigns the value of the coordinate in the dimension $D_i$ for every vertex $v\in C_i$ ($0\le i\le 4$). These assignments are shown by writing the number of a vertex next to the corresponding coordinate; this operation is performed in Lines~2-3 of the algorithm. Part~(b) shows the assignment of the other coordinates. The vertex placement is performed by kD-Draw in Lines~4-7 by using the projections of the vertices, which are shown in Figure~\ref{fi:projections}.  We do not show the edges on the graph, since in the depicted two planes of the 4-dimensional drawing some vertices are positioned in a same point and it could create ambiguities in the representation of the edges. \\
\begin{figure}
	\centering
	\subfigure[]
	{\includegraphics[width=1\linewidth]{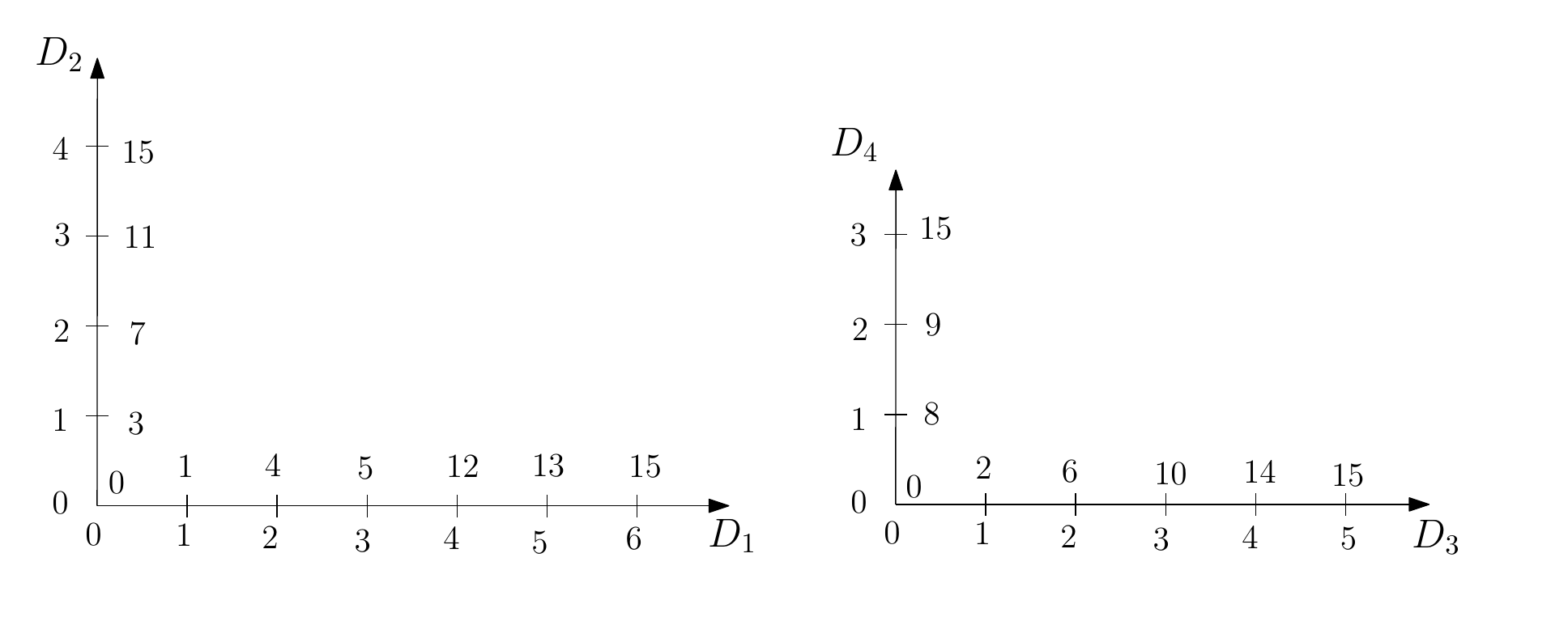}}
	\hfil
	\subfigure[]
	{\includegraphics[width=1.02\linewidth]{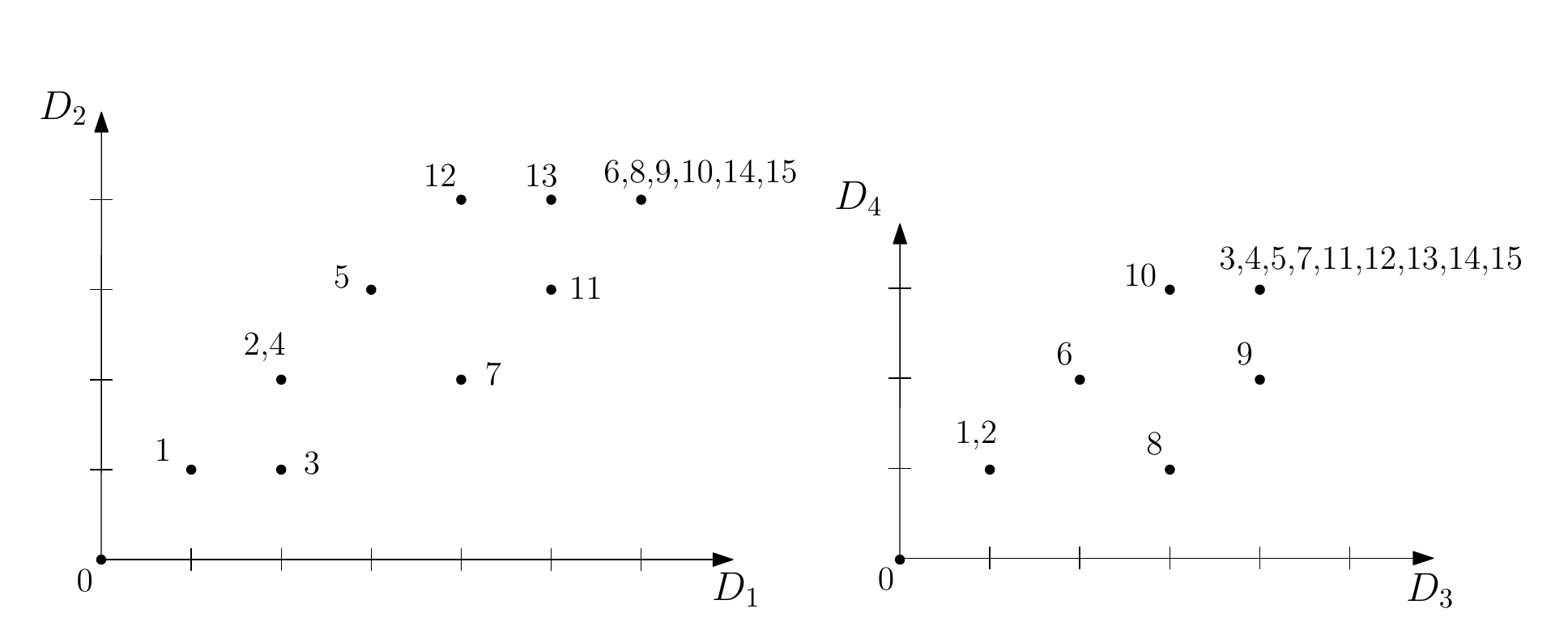}}
	\caption{Illustration of Algorithm kD-Draw on graph $G$ depicted in Figure~\ref{fi:channeldec}(a) and the channel decomposition $S_c$ of $G$ depicted in Figure~\ref{fi:channeldec}(b): (a) shows the initialization step, where the algorithm assigns the value of the coordinate in dimension $D_i$ for every vertex $v\in C_i$ ($0\le i\le 4$);  (b) shows the assignment of the other coordinates, computed by Algorithm kD-Draw in Lines~4-7 by using the projections of the vertices, as shown in Figure~\ref{fi:projections}.  We do not show the edges of the graph, since in the depicted two planes of the 4-dimensional drawing some vertices are positioned in a same point and it could create ambiguities in the representation of the edges.}
	\label{fi:kd}
\end{figure}
\indent
Clearly, if $k=2$, kD-Draw and 2D-Draw compute the same drawing. Hence, kD-Draw is a generalization of 2D-Draw. As we did in the previous section, before proving that $\Gamma$ = kD-Draw($G$, $S_c$) is a dominance drawing we prove some intermediate results.
\begin{lemma}
	\label{lemma:same_position}
	Any two distinct vertices $v,w\in V$ are placed on distinct points in $\Gamma$. 
\end{lemma}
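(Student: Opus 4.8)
The plan is to lift the argument of Lemma~\ref{lemma:same_position2} essentially verbatim to arbitrary $k$, using a case split on whether the two vertices lie in the same channel. Let $v=(i,j)$ and $u=(h,l)$ be two distinct vertices of $V$; I want to exhibit a dimension $D_r$ with $D_r(v)\ne D_r(u)$.

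First I would handle the case $i=h$. By Lines~2--3 of Algorithm kD-Draw we have $D_i(v)=j$ and $D_i(u)=l$. Since $v$ and $u$ are distinct and both lie on channel $C_i$, their positions in $C_i$ differ, i.e.\ $j\ne l$; hence $D_i(v)\ne D_i(u)$ and we are done in this case.

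Next I would treat the case $i\ne h$, arguing by contradiction: assume $v$ and $u$ are placed on the same point of $\Gamma$, so that $D_r(v)=D_r(u)$ for every dimension $r\in[1,k]$. Consider the coordinate $D_i$. Since $v\in C_i$, Line~3 gives $D_i(v)=j$, while Lines~6--7 give $D_i(u)=l'$, where $Proj_{C_i}(u)=(i,l')$. Equality of the $D_i$-coordinates forces $j=l'$, hence $v=Proj_{C_i}(u)$, and by the definition of projection $v$ is reachable from $u$ in $G$. By the symmetric argument applied to the coordinate $D_h$ (using $u\in C_h$), we get $u=Proj_{C_h}(v)$, so $u$ is reachable from $v$. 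Then $G$ contains a directed path from $u$ to $v$ and one from $v$ to $u$, i.e.\ a directed cycle, contradicting the assumption that $G$ is a DAG. Therefore $v$ and $u$ must differ in at least one coordinate.

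I do not anticipate a real obstacle here: the proof is a routine generalization of the two-channel argument, and the only point that needs a little care is that in the cross-channel case it suffices to inspect just the two coordinates indexed by $i$ and $h$, together with the fact that the definition of projection yields reachability in the direction needed to close the cycle.
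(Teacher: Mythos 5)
Your proof is correct and follows exactly the route the paper intends: the paper's own proof simply states that the argument of Lemma~\ref{lemma:same_position2} applies to the two dimensions $D_i$ and $D_h$, and you have filled in precisely that argument (same-channel case via distinct positions, cross-channel case via mutual projections forcing a directed cycle). No gaps; this is the same approach, just written out in full.
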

\begin{proof}
	Let $v=(i,j)$ and $u=(h,l)$ be two vertices of $G$. It is sufficient to prove that there exists at least one dimension $D_x$ such that $D_x(v)\not = D_x(u)$ for any pair of vertices $u,v$. Following arguments similar to the ones used in the proof of Lemma~\ref{lemma:same_position2} it is easy to prove that $D_i$ or $D_h$ can be $D_x$.
\end{proof}
 The following two lemmas can be proved using similar arguments to the ones used to prove Lemma~\ref{lemma:source_sink_2} and Lemma~\ref{lemma:reachability_2}:
 \begin{lemma}
 	\label{lemma:source_sink}
 	For any vertex $v\in G$: $s\preceq v$ and $v\preceq t$.
 \end{lemma}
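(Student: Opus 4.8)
The plan is to mimic the proof of Lemma~\ref{lemma:source_sink_2}, the only difference being that there are now $k$ coordinates instead of two. The two ingredients I would use are: (i) the fact that, by definition of a channel decomposition, both the source $s$ and the sink $t$ lie on every channel $C_i$, and more precisely $s=(i,0)$ and $t=(i,|C_i|)$ for all $i\in[1,k]$; and (ii) the observation that every coordinate $D_h(v)$ assigned by Algorithm kD-Draw is, by construction, the position in $C_h$ of some vertex of $C_h$ (namely $v$ itself when $v\in C_h$, or its projection $Proj_{C_h}(v)$ otherwise), and any such position is an integer in the range $[0,|C_h|]$.

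First I would fix an arbitrary vertex $v\in V$ and an arbitrary dimension $h\in[1,k]$ and show $0\le D_h(v)\le|C_h|$. If $v\in C_h$, say $v=(h,j)$, then Line~3 sets $D_h(v)=j$, and since $s=(h,0)$ reaches $v$ and $v$ reaches $t=(h,|C_h|)$, the order of channel $C_h$ forces $0\le j\le|C_h|$. If $v\notin C_h$, then Lines~6--7 set $D_h(v)=l$ where $Proj_{C_h}(v)=(h,l)$; this projection is well defined because $v$ reaches $t\in C_h$ (the remark after the definition of projection), and $l$ is a position in $C_h$, hence $0\le l\le|C_h|$.

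Then I would combine this bound with the coordinates of $s$ and $t$ themselves: since $s=(h,0)$ for every $h$, Line~3 gives $D_h(s)=0$ for all $h$; since $t=(h,|C_h|)$ for every $h$, Line~3 gives $D_h(t)=|C_h|$ for all $h$. Therefore $D_h(s)=0\le D_h(v)\le|C_h|=D_h(t)$ in every dimension $h$, which is exactly $s\preceq v$ and $v\preceq t$.

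The argument has no genuine obstacle; the only point that deserves a word of care is verifying that the projection $Proj_{C_h}(v)$ is defined for every $v$ and every $C_h$, which follows because every vertex reaches $t$ and $t$ belongs to all channels. Everything else is a direct unfolding of the algorithm together with the elementary fact that positions in a channel range from $0$ (the source) to $|C_h|$ (the sink).
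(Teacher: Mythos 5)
Your proof is correct and follows exactly the route the paper intends: the paper proves this lemma only by remarking that it is ``similar to Lemma~\ref{lemma:source_sink_2}'', whose argument is precisely the one you generalize, namely that every coordinate $D_h(v)$ is a position in channel $C_h$ and hence lies between $D_h(s)=0$ and $D_h(t)=|C_h|$. Your added care about the projection being well defined (because $t$ lies on every channel) is a faithful spelling-out of the same argument, not a different approach.
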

 \begin{lemma}
 	\label{lemma:reachability}
	Let $u\in C_i$ and $v$ be two vertices of $G$. $r(u,v)=yes\Leftrightarrow D_i(u)\le D_i(v)$.
 \end{lemma}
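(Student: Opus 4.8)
The plan is to derive Lemma~\ref{lemma:reachability} from Lemma~\ref{lemma:TC} in exactly the way Lemma~\ref{lemma:reachability_2} was, now with $C_i$ playing the role of the single ``active'' channel. The first step is to record what Algorithm kD-Draw does to the $i$-th coordinate of an arbitrary vertex $w$: Line~3 sets $D_i(w)=j$ when $w=(i,j)\in C_i$, in which case $Proj_{C_i}(w)=w=(i,j)$; and Lines~5--7 set $D_i(w)=l$ when $w\notin C_i$, where $(i,l)=Proj_{C_i}(w)$. So in all cases $Proj_{C_i}(w)=(i,D_i(w))$, i.e.\ the $i$-th coordinate of a vertex is precisely the position in the chain $C_i$ of its projection onto $C_i$; in particular $u=(i,D_i(u))$ because $u\in C_i$.

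Given this identification, the forward direction is immediate: if $r(u,v)=yes$ then $u$ reaches everything that $v$ reaches, hence $u$ reaches $Proj_{C_i}(v)=(i,D_i(v))$, and since $u=(i,D_i(u))$ and $(i,D_i(v))$ both lie on the chain $C_i$ the defining property of a channel (equivalently, Lemma~\ref{lemma:TC}) forces $D_i(u)\le D_i(v)$. For the converse I would invoke Lemma~\ref{lemma:TC} directly: reading $Proj_{C_i}(u)=(i,D_i(u))$, that lemma says $D_i(u)\le D_i(v)$ is equivalent to $u$ reaching the vertex of $C_i$ in position $D_i(v)$, and I would then pass from ``$u$ reaches $Proj_{C_i}(v)$'' to ``$u$ reaches $v$''. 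The degenerate sub-cases where $u$ or $v$ is the source $s$ or the sink $t$ are absorbed by Lemma~\ref{lemma:source_sink}, since $s$ and $t$ lie on every channel and $D_i(s)=0\le D_i(\cdot)\le |C_i|=D_i(t)$.

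The computational content of the argument is entirely the arithmetic of positions inside the single chain $C_i$, and it is a near-verbatim transcription of the two-channel proof with $C_1$ and $C_2$ replaced by $C_i$. The one step that needs real care — and the only place where acyclicity of $G$ enters, exactly as in the proof of Lemma~\ref{lemma:reachability_2} — is the converse implication: one must apply Lemma~\ref{lemma:TC} with the correct vertex serving as the base of the projection, so that ``a vertex reaches its own projection onto $C_i$, hence reaches every later vertex of $C_i$'' is literally the content of that lemma. I expect this to be the main (and essentially only) obstacle; everything else is bookkeeping.
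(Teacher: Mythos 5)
Your forward direction is fine: if $r(u,v)=yes$ then $u$ reaches $Proj_{C_i}(v)=(i,D_i(v))$, and since $Proj_{C_i}(u)=u=(i,D_i(u))$, Lemma~\ref{lemma:TC} forces $D_i(u)\le D_i(v)$. The converse, however, has a genuine gap, and it sits exactly at the step you flag but never carry out: you cannot pass from ``$u$ reaches $Proj_{C_i}(v)$'' to ``$u$ reaches $v$''. By definition $Proj_{C_i}(v)$ is a vertex that $v$ reaches, so knowing that $u$ also reaches it only tells you that $u$ and $v$ share a successor on $C_i$; it gives no path from $u$ to $v$. Indeed the implication fails for the statement as printed: take $G$ with edges $s\to a\to t$ and $s\to b\to t$ and channels $C_1=(s,a,t)$, $C_2=(s,b,t)$. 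Then $Proj_{C_1}(b)=t=(1,2)$, so $D_1(a)=1\le 2=D_1(b)$, yet $a$ does not reach $b$.

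The obstacle is in the lemma statement rather than in your bookkeeping: the hypothesis should place the \emph{target} on the channel, i.e.\ it should read ``let $v\in C_i$'', not ``let $u\in C_i$''. That is the form in which the paper actually invokes the lemma (in the proof of Lemma~\ref{lemma:dd2} it is $v=(1,j)$, the vertex assumed to lie on the channel, whose reachability \emph{from} $u$ is read off from $u$'s coordinate in that dimension; the same happens in case~2 of the proof of Theorem~\ref{theorem:kDdraw}). With $v=(i,D_i(v))\in C_i$ and $Proj_{C_i}(u)=(i,D_i(u))$, Lemma~\ref{lemma:TC} applied to $u$ states verbatim that $u$ reaches $v$ if and only if $D_i(v)\ge D_i(u)$ --- both directions at once, with no extra step. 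Your identification ``$D_i(w)$ is the position of $Proj_{C_i}(w)$ in $C_i$'' is the right tool; you only need to aim Lemma~\ref{lemma:TC} at the projection of $u$ and the on-channel vertex $v$, rather than at the projection of $v$.
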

  We are ready now to prove the main result of this section:
\begin{theorem}
	\label{theorem:kDdraw}
	Let $G$ be an st-graph and $S_c=\{C_1,...,C_k\}$ be a channel decomposition of $G$. Given $G$, $S_c$ and the set of all the projections for any vertex (provided by the compressed transitive closure), Algorithm kD-Draw computes a $k$-dimensional dominance drawing $\Gamma$ of $G$. Moreover:
	\begin{itemize}
	    \item[(a)] kD-Draw requires $O(kn)$ time. 
	    \item[(b)] A preprocessing step required to compute all the projections of the vertices of $G$ takes time $O(mk)$.
	    \item[(c)]  If $k$ is required to be equal to the width of $G$, $w_G$, then the preprocessing step requires time $O(n^3)$ or $O(w_Gn^2)$.
	\end{itemize}
 
\end{theorem}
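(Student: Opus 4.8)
The plan is to follow the template of Lemma~\ref{lemma:dd2}, splitting the statement into the correctness of $\Gamma$ as a dominance drawing and the three running-time claims (a)--(c). For correctness, the goal is to show $r(u,v)=yes \Leftrightarrow u\preceq v$ for all $u,v\in V$, and the idea is to use the projection-based definition of the coordinates to handle all $k$ dimensions uniformly, rather than carrying out the channel-by-channel case analysis of Lemma~\ref{lemma:dd2}. First I would invoke Lemma~\ref{lemma:same_position} to conclude that $\Gamma$ places distinct vertices on distinct grid points, so that $\Gamma$ is a proper grid drawing, and Lemma~\ref{lemma:source_sink} to dispose of the cases in which $u$ or $v$ equals $s$ or $t$; hence I may assume $u,v\notin\{s,t\}$ and let $C_i$ be the unique channel containing $u$.

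For the direction $u\preceq v \Rightarrow r(u,v)=yes$: if $u\preceq v$ then in particular $D_i(u)\le D_i(v)$, and since $u\in C_i$, Lemma~\ref{lemma:reachability} immediately gives $r(u,v)=yes$. For the direction $r(u,v)=yes \Rightarrow u\preceq v$: I would fix an arbitrary channel $C_h$ and show $D_h(u)\le D_h(v)$. Using the convention $Proj_{C_h}(w)=w$ for $w\in C_h$, the coordinate $D_h(w)$ is in every case the position of $Proj_{C_h}(w)$ in $C_h$, and $w$ reaches $Proj_{C_h}(w)$. Then $u$ reaches $v$ and $v$ reaches $Proj_{C_h}(v)\in C_h$, so $u$ reaches $Proj_{C_h}(v)$; applying Lemma~\ref{lemma:TC} to $u$ and channel $C_h$ forces the position of $Proj_{C_h}(v)$ to be at least that of $Proj_{C_h}(u)$, i.e.\ $D_h(u)\le D_h(v)$. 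Since $h$ is arbitrary, $u\preceq v$.

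For the running times: (a) Lines~2--3 do $O(1)$ work per vertex, hence $O(n)$, and Lines~4--7 iterate over the $n$ vertices and, for each, over the at most $k$ channels it does not belong to, each iteration being a constant-time lookup in the compressed transitive closure (as recalled in Section~\ref{Section:projections}), for a total of $O(kn)$, which also dominates the cost of writing out $n$ points with $k$ coordinates. (b) The collection of all projections is precisely the compressed transitive closure, built in $O(mk)$ time by the construction of Section~\ref{Section:projections}. (c) To force $k=w_G$ one must first compute a minimum-size channel decomposition, which by Lemma~\ref{label:width} has size exactly $w_G$; by Lemma~\ref{lemma:channeldecomposition} this costs $O(w_Gn^2)$ (Algorithm Channels-Generation) or $O(n^3)$ (the algorithm outlined in Section~\ref{se:minimumdecomposition}), and the subsequent compressed-transitive-closure computation costs $O(w_Gm)$, which is absorbed into either bound.

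The only step that is not essentially bookkeeping — and the one I expect to require the most care — is the implication $r(u,v)=yes \Rightarrow u\preceq v$: one must check that reachability forces domination in \emph{every} dimension, not merely in the dimension indexed by $u$'s own channel, which is all that Lemma~\ref{lemma:reachability} gives directly. The resolution is to compare $u$ against the projection $Proj_{C_h}(v)$ and apply Lemma~\ref{lemma:TC}, with the convention $Proj_{C_h}(w)=w$ for $w\in C_h$ making the argument uniform regardless of whether $v$ lies on $C_h$. On the complexity side, the point to get right in (c) is that the bottleneck is the channel-decomposition step, since both the drawing ($O(kn)$) and the transitive-closure construction ($O(w_Gm)$) are dominated by $O(w_Gn^2)$ when $k=w_G$.
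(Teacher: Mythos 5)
Your proposal is correct and follows essentially the same route as the paper's proof: the same reduction to Lemmas~\ref{lemma:same_position}, \ref{lemma:source_sink} and \ref{lemma:reachability}, the same projection-plus-Lemma~\ref{lemma:TC} argument for $r(u,v)=yes \Rightarrow u\preceq v$, and the same accounting for (a)--(c). The only difference is cosmetic: by exploiting the convention $Proj_{C_h}(w)=w$ for $w\in C_h$ you handle every dimension uniformly, whereas the paper splits off the dimensions indexed by the channels of $u$ and $v$ and refers those back to the case analysis of Lemma~\ref{lemma:dd2}; your unified treatment is a slight streamlining, not a different proof.
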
 
\begin{proof}
	First, we prove the correctness of the algorithm, i.e., we prove that the drawing computed by our algorithm is a dominance drawing. Then we will prove the time complexity of the algorithm. Let $v=(i,j)$ and $u=(h,l)$ be two different vertices. If $u$ or $v$ are the source or the sink the theorem is clearly true by Lemma~\ref{lemma:source_sink}. Otherwise, without loss of generality, suppose that $i=1$ i.e., $v=(1,j)$. Lemma~\ref{lemma:same_position} proves that $u$ and $v$ are never placed in a same point. Moreover, Lemma~\ref{lemma:reachability} proves that $u \preceq v \Rightarrow r(u,v)=yes$. We need to prove that: $r(u,v)=yes \Leftarrow u\preceq v$. Let $D_p$ be a dimension of $\Gamma$ ($p\in [1,k]$). We prove that $D_p(u)\le D_p(v)$ for any possible value of $p$. First we assume that $p=i$ or $p=h$.  Next, we consider all other cases.
	\begin{enumerate}
		\item  We can prove that $D_p(u)\le D_p(v)$ using arguments similar to the ones used in Case (1) of the proof of Lemma~\ref{lemma:dd2}. 
		\item Suppose $u'=Proj_{C_p} (u)=(p,l')$ and $v'=Proj_{C_p} (v)=(p,j')$. We have $r(u,v')=yes$, since $r(u,v)=yes$ and $r(v,v')$ by definition of projection. The vertex $u'$ has the lowest position in $C_p$ among the vertices of $C_p$ reachable from $u$. Therefore it must be equal to $v'$ or one of its successors. Consequently $l'\le j$ and $D_p(u)\le D_p(v)$. 
	\end{enumerate}
	Finally, regarding the computational time:
	\begin{itemize}
	    \item[(a)] kD-Draw needs $O(n)$ time for the Lines 2-3 and $O(nk)$ time for the Lines 4-7 by Lemma~\ref{lemma:TC}.
	    \item[(b)] We store the projections of the vertices of $G$ in a compressed transitive closure. The time needed to compute the compressed transitive closure is $O(km)$~\cite{DBLP:journals/tods/Jagadish90}.
	    \item[(c)] A channel decomposition with the minimum number of channels, $k$, can be computed in $O(n^3)$ or $O(w_Gn^2)$ time~\cite{DBLP:journals/tods/Jagadish90,chendag}.
	\end{itemize} 
\end{proof}

\paragraph{Dominance Drawing with Distinct Coordinates}  is a dominance drawing where the value of the coordinates of the vertices in every dimension is a topological sorting of the vertices (i.e., distinct coordinates). In the rest of the section we show how we can compute a $k$-dimensional dominance Drawing with distinct coordinates $\Gamma_T$ of $G$ given a  $k$-dimensional dominance drawing $\Gamma$ of $G$.

Suppose that the vertices of $G$ are topologically sorted and let $T(v)$ be the order of vertex $v$ in the topological sorting. Let $D_g$ be a dimension of $\Gamma$ and let $\{v_1,...,v_l\}$ be the set of vertices having the same coordinate in dimension $D_g$, i.e., $D_g(v_1)=D_g(v_2)=...=D_g(v_l)=\alpha$. Suppose $T(v_1)<T(v_2)<...<T(v_l)$. Let $V_{\alpha,g}$ be the set of vertices having coordinate higher than $\alpha$ in dimension $D_g$. We shift the vertices of $V_{\alpha,g}$ by $l$ positions and we shift every vertex $v_i$, $i = 1, 2, ..., l$ by $i-1$ positions. We do continue with this process until there is no pair of vertices placed in the same position. It is easy to see that the resulting drawing $\Gamma_T$ is a dominance drawing with distinct coordinates.

Figure~\ref{fi:topological} shows a dominance drawing with distinct coordinates obtained from the dominance drawings shown in Figure~\ref{fi:kd}(b).
\begin{figure}
	\centering
	{\includegraphics[width=0.7\linewidth]{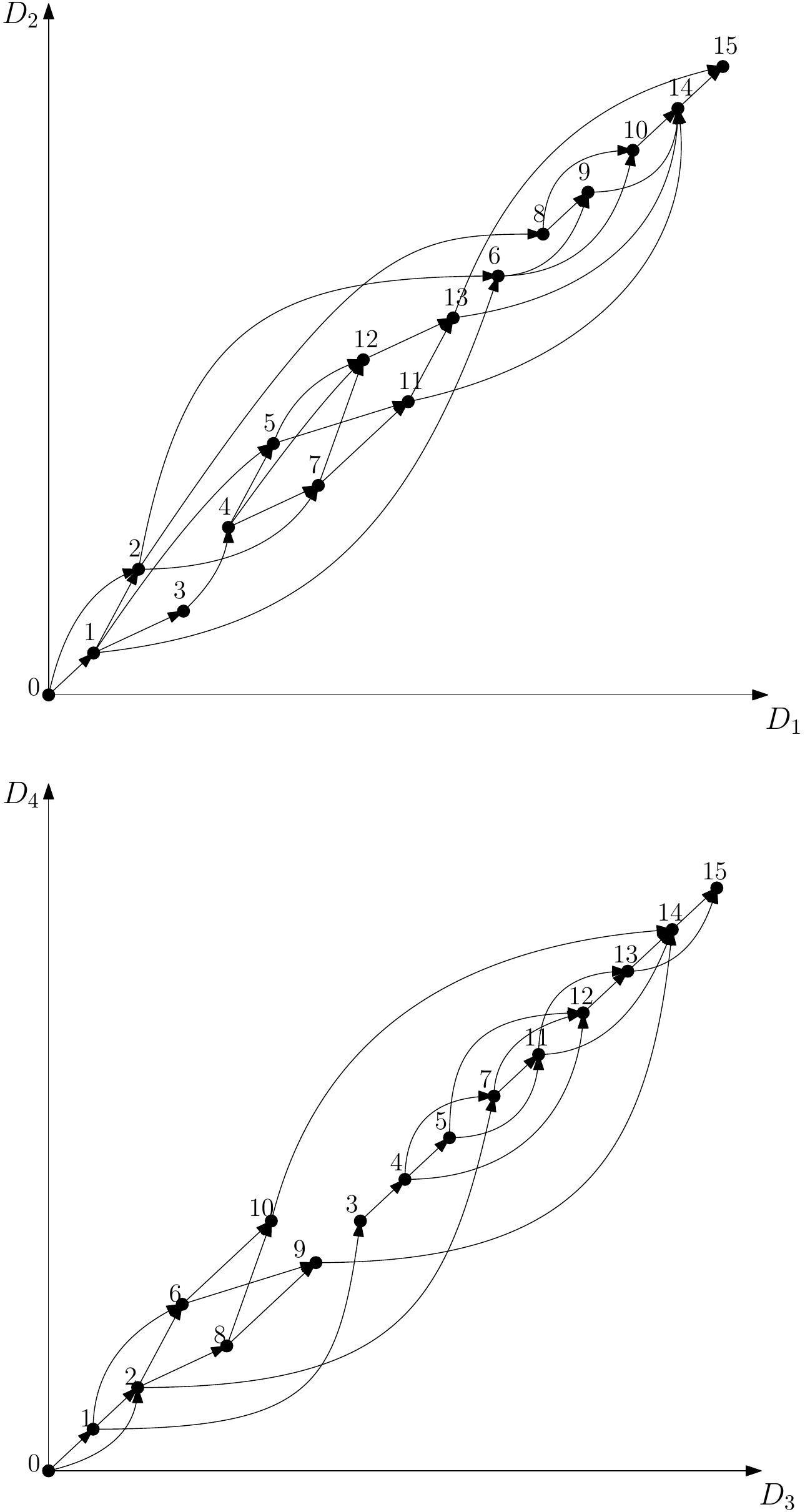}}
	\caption{A dominance drawing with distinct coordinates obtained from the drawings shown in Figure~\ref{fi:kd}(b)}
	\label{fi:topological}
\end{figure}
\subsection{Corollaries}
\label{subsection:corollaries}
The dominance drawing algorithm presented above provides interesting corollaries in the general dimension theory of DAGs. Namely, the following corollary is a direct consequence of  Lemma~\ref{lemma:n/2}, Lemma~\ref{label:width} and Theorem~\ref{theorem:kDdraw}. We point out that Algorithm kD-Draw provides a new upper bound to the dominance dimension of an st-graph:
\begin{corollary}
	\label{corollary:final}
	Let $G$ be any st-graph (or DAG) with $n$ vertices. Then $d_G\le min(\frac{n}{2},w_G)$.
\end{corollary}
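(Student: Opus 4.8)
The plan is to stitch together the three results already available in the excerpt: Hiraguchi's bound $d_G \le \frac{n}{2}$ (Lemma~\ref{lemma:n/2}), the identity between the width and the minimum size of a channel decomposition (Lemma~\ref{label:width}), and the guarantee that Algorithm kD-Draw converts a size-$k$ channel decomposition into a $k$-dimensional dominance drawing (Theorem~\ref{theorem:kDdraw}). Note that both halves of the minimum are genuinely needed, since $w_G$ may be as large as $n-2$ and thus can exceed $\frac{n}{2}$, so the bound does not collapse to either term alone.

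The one new inequality to prove is $d_G \le w_G$. First I would invoke Lemma~\ref{label:width} (or, constructively, Lemma~\ref{lemma:channeldecomposition}) to obtain a channel decomposition $S_c$ of $G$ of size exactly $w_G$. After the preprocessing that builds the compressed transitive closure, and therefore all vertex projections, I would run Algorithm kD-Draw on the pair $(G,S_c)$; by Theorem~\ref{theorem:kDdraw} this outputs a dominance drawing of $G$ in $w_G$ dimensions. Since $d_G$ is by definition the least dimension admitting a dominance drawing of $G$, this establishes $d_G \le w_G$. It then remains only to take the minimum: together with $d_G \le \frac{n}{2}$ from Lemma~\ref{lemma:n/2}, we obtain $d_G \le \min(\frac{n}{2}, w_G)$.

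The statement for general (not necessarily single-source, single-sink) DAGs follows from the convention fixed in Section~\ref{Section:Preliminaries}, where an arbitrary DAG is completed to an st-graph by adding a virtual source and a virtual sink; this completion leaves the reachability relation among the original vertices unchanged, and hence neither the dominance dimension nor the width changes, so the bound transfers. There is no substantive obstacle here — this is a genuine corollary of the preceding development. The only points deserving a sentence of justification are that Theorem~\ref{theorem:kDdraw} is being applied to a decomposition of the smallest possible size, which is exactly the content of Lemma~\ref{label:width}, and that the DAG-to-st-graph completion does not affect $w_G$ or $d_G$.
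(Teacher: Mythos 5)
Your proof is correct and follows exactly the route the paper intends: the authors state that this corollary is a direct consequence of Lemma~\ref{lemma:n/2}, Lemma~\ref{label:width}, and Theorem~\ref{theorem:kDdraw}, which is precisely the combination you spell out (with the minor added care about the DAG-to-st-graph completion, which the paper handles by convention in Section~\ref{Section:Preliminaries}). No gaps.
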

As  discussed in the introduction, any st-planar graph has a 2-dimensional dominance drawing~~\cite{DBLP:journals/dcg/BattistaTT92,st-planar}. The next corollary presents a new family of DAGs that have a 2-dimensional dominance drawing.  Obviously these DAGs are not contained in the st-planar family.
\begin{corollary}
	\label{corollary:2D}
	Every DAG $G$ of width 2 has a 2-dimensional dominance drawing.
\end{corollary}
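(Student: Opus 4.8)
The plan is to read the statement off the machinery already developed: it is the case $w_G = 2$ of Corollary~\ref{corollary:final}, and I would spell out the short chain behind it, namely Lemma~\ref{label:width} $\Rightarrow$ a channel decomposition of size $2$ $\Rightarrow$ Algorithm 2D-Draw (Lemma~\ref{lemma:dd2}). First I would reduce to the case that $G$ is an st-graph: given an arbitrary DAG $G$ of width $2$, form the st-graph $G'$ by adding a virtual source $s$ adjacent to all sources of $G$ and a virtual sink $t$ adjacent to all sinks, as in Section~\ref{Section:Preliminaries}. Since $s$ and $t$ are a source and a sink of $G'$, no path between two vertices of $V(G)$ inside $G'$ passes through $s$ or $t$, so reachability within $V(G)$ is identical in $G$ and in $G'$; hence a two-dimensional dominance drawing of $G'$, restricted to $V(G)$, is a two-dimensional dominance drawing of $G$, and it suffices to draw $G'$.

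The one bookkeeping point worth an explicit line is that this augmentation does not change the width: $s$ reaches every vertex, every vertex reaches $t$, and $s$ reaches $t$, so neither $s$ nor $t$ belongs to any antichain of $G'$, whence $w_{G'} = w_G = 2$. By Lemma~\ref{label:width}, $G'$ then admits a channel decomposition $S_c = \{C_1, C_2\}$ of size $2$ (both channels containing $s$ and $t$). Running Algorithm 2D-Draw on $(G', S_c)$ yields a two-dimensional drawing $\Gamma$, and Lemma~\ref{lemma:dd2} --- equivalently Theorem~\ref{theorem:kDdraw} with $k = 2$ --- certifies that $\Gamma$ is a dominance drawing. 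Restricting $\Gamma$ to $V(G)$ completes the proof. One may also just quote Corollary~\ref{corollary:final}: $d_G \le \min(n/2, w_G) \le w_G = 2$.

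I do not expect a real obstacle, since every component is already proven; the proof is essentially a reorganization of earlier results. The only things to be careful about are the two reduction facts above --- that passing to an st-graph preserves width and is harmless for dominance drawings on the original vertex set. If one additionally wanted the sharper statement that such a $G$ has dominance dimension exactly $2$, it would remain to observe that a DAG of width $2$ is not a total order and therefore has no one-dimensional dominance drawing; but the corollary as stated asserts only the existence of a two-dimensional one, which the construction above provides.
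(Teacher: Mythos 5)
Your proposal is correct and follows exactly the route the paper intends: the corollary is stated as an immediate consequence of Lemma~\ref{label:width} and Theorem~\ref{theorem:kDdraw} (equivalently Corollary~\ref{corollary:final} with $w_G=2$), which is precisely the chain you spell out. The extra care you take with the st-graph reduction (width preservation and restriction of the drawing to $V(G)$) is a detail the paper elides in Section~\ref{Section:Preliminaries} but is handled correctly in your argument.
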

As discussed in the preliminaries, a partial order is a mathematical formalization on the concept of ordering and the results obtained for DAGs and their dominance drawing dimension transfer directly to partial orders and their dimension and vice versa.
Therefore, let $d_P$ be the dimension of partial order $P$. As a consequence of the previous corollary we have the following:
\begin{corollary}
	\label{corollary:partialorder}
	Let $P$ be any partial order with $n$ elements. Then $d_P\le min(\frac{n}{2},w_P)$.
\end{corollary}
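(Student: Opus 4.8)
The plan is to deduce Corollary~\ref{corollary:partialorder} from Corollary~\ref{corollary:final} through the dictionary between partial orders and transitive DAGs recalled in Section~\ref{Section:Preliminaries}. Given a partial order $P$ on $n$ elements, I would associate to it the DAG $G_P$ on the same vertex set in which $u$ reaches $v$ precisely when $u$ is below $v$ in $P$; any orientation of $P$ whose transitive closure is $P$ works, since only the reachability relation will be used. The statement then follows once two parameters are identified across this translation.

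First, I would check that the dominance drawing dimension of $G_P$ equals the order dimension of $P$. A $k$-dimensional dominance drawing of $G_P$ is, by definition, a placement of the vertices in $\mathbb{Z}^k$ for which $u$ reaches $v$ if and only if $u \preceq v$ coordinatewise; transporting this through the isomorphism between $P$ and the reachability order of $G_P$, it is exactly an order embedding of $P$ into $\mathbb{R}^k$ with the componentwise order, and the least such $k$ is one of the standard equivalent definitions of the dimension of a partial order (realizability as the intersection of $k$ linear extensions). Hence $d_P = d_{G_P}$. Second, I would observe that a subset of elements is an antichain of $P$ if and only if the corresponding vertices are pairwise incomparable in $G_P$ in the sense of Section~\ref{Section:Preliminaries}, so that $w_P = w_{G_P}$, while trivially $G_P$ has $n$ vertices.

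Given these identifications the corollary is immediate: Corollary~\ref{corollary:final} applied to the DAG $G_P$ yields $d_{G_P} \le \min\!\big(\tfrac{n}{2}, w_{G_P}\big)$, and substituting $d_P$ for $d_{G_P}$ and $w_P$ for $w_{G_P}$ gives $d_P \le \min\!\big(\tfrac{n}{2}, w_P\big)$.

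I do not anticipate a genuine obstacle. The only step that deserves attention is one already internal to Corollary~\ref{corollary:final}, namely the reduction of a general DAG to an st-graph by adding a virtual source and sink: one should make sure this augmentation does not enlarge the width (the added source and sink are comparable to every vertex, hence join no antichain) and does not raise the dimension (adjoining a global minimum and a global maximum to a partial order leaves its dimension unchanged), so that the bound $\min(\tfrac{n}{2}, w_G)$ is valid for the original $n$-vertex DAG and therefore for $P$.
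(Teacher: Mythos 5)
Your proposal is correct and follows essentially the same route as the paper: the paper gives no separate proof, simply invoking the stated correspondence between partial orders and (transitive) DAGs to transfer the bound of Corollary~\ref{corollary:final}, which is exactly what you do, only with the identifications $d_P = d_{G_P}$ and $w_P = w_{G_P}$ and the harmlessness of the added source and sink spelled out explicitly. Your extra care on those points is a welcome elaboration rather than a departure.
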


\section{Modules and Dominance Drawings}
\label{Section:Modules}
In this section we will exploit the concept of modules in directed acyclic graphs in order to obtain dominance drawings having potentially less dimensions. Let $G=(V,E)$ be an st-graph with $n$ vertices and $m$ edges. A module $M$ of $G$ is a non-empty subset of $V$ such that all vertices in $M$ have the same sets of predecessors and successors in $V-M$. Decomposing a graph into modules may help in various graph problems~\cite{DBLP:journals/dm/McConnellS99}. 

The trivial modules of $G$ are: the set $V$ and the singleton sets \{$v$\}, for any $v\in V$. A graph is called \emph{prime} if it does not possess non-trivial modules. Two modules \emph{overlap} if they have a non empty intersection, and one does not contain the other. A module is a \emph{strong module} if it does not overlap with any module, otherwise it is a \emph{weak module}.

A modular decomposition of $G$ is a representation of all modules of $G$. The decomposition forms a tree, whose nodes are the strong modules of $G$, ordered by the subset relationship. In particular, the root of the modular decomposition tree is $V$ , and its leaves are the singleton sets.

The \emph{congruence partition} $C_P=\{M_1,...,M_h\}$ of $V$ is a partition of $V$ into modules (i.e., each vertex belongs to exactly one module $M_i$, with $1 \le i \le h$). It is easy to obtain a congruence partition from a modular decomposition.  Each DAG has an exponential number of congruence partitions~\cite{DBLP:journals/dam/McConnellM05,DBLP:journals/dm/McConnellS99}. The \emph{quotient graph} $G_0$ of $G$ given $C_P$ is the graph obtained from $G$ by merging the nodes of each module in $C_P$.  We denote by $\mu_i$ the vertex representing $M_i$ in $G_0$. The following lemma is true due to the definitions of module and of quotient graph:
\begin{lemma}\label{lemma:property_quotient_graph}
	Let $u,v$ be two vertices of $G$ such that $u\in G_i$ and $v\in G_j$. Vertex $\mu_j$ is reachable from vertex $\mu_i$ in $G_0$ if and only if $v$ is reachable from $u$ in $G$.
\end{lemma}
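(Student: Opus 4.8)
The plan is to prove both directions of the biconditional by unwinding the definition of a module together with the way $G_0$ is constructed from $G$. Throughout, let $u \in M_i$ and $v \in M_j$; we may assume $M_i \neq M_j$, since if $i=j$ the statement $\mu_i$ reaches $\mu_j$ is the trivial reflexive reachability, and inside a single module we have to be slightly careful — actually the cleanest move is to observe that the interesting content is entirely about distinct modules, and handle $i=j$ as a degenerate case separately (here $\mu_i = \mu_j$ trivially reaches itself, and $v$ reaches $u$ or not; so for $i=j$ the ``only if'' direction is vacuous in the contrapositive sense and we simply note the statement as phrased concerns the existence of a path, which for $i=j$ may require a short remark that $u$ reaches $v$ through the module — I would flag this edge case explicitly rather than sweep it under the rug).

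For the main case $M_i \neq M_j$, I would argue as follows. For the ``if'' direction, suppose $v$ is reachable from $u$ in $G$ via a directed path $P = (u = x_0, x_1, \dots, x_\ell = v)$. Replace each $x_r$ by the representative $\mu_{i_r}$ of the module $M_{i_r}$ containing it. I claim that consecutive distinct images $\mu_{i_r}, \mu_{i_{r+1}}$ are joined by an edge in $G_0$: since $(x_r, x_{r+1}) \in E$ and $x_r, x_{r+1}$ lie in different modules, the module property of $M_{i_{r+1}}$ (every vertex of $M_{i_{r+1}}$ has the same predecessors outside $M_{i_{r+1}}$) together with the quotient construction (an edge $(\mu_a,\mu_b)$ is placed in $G_0$ exactly when some — equivalently every — vertex of $M_a$ has an edge to some — equivalently every — vertex of $M_b$) gives $(\mu_{i_r},\mu_{i_{r+1}}) \in E(G_0)$. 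Collapsing repeated consecutive images yields a walk, hence a path, from $\mu_i$ to $\mu_j$ in $G_0$.

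For the ``only if'' direction, suppose $\mu_j$ is reachable from $\mu_i$ in $G_0$ via $(\mu_i = \mu_{i_0}, \mu_{i_1}, \dots, \mu_{i_\ell} = \mu_j)$. For each consecutive pair, $(\mu_{i_r},\mu_{i_{r+1}}) \in E(G_0)$ means that \emph{every} vertex of $M_{i_r}$ has an edge in $G$ to \emph{every} vertex of $M_{i_{r+1}}$ — this ``all-to-all'' characterization is exactly what the module definition buys us (if one vertex of $M_{i_r}$ points to one vertex $y \in M_{i_{r+1}}$, then by the predecessor-invariance of $M_{i_{r+1}}$ all of $M_{i_r}$ points to $y$, and by the successor-invariance of $M_{i_r}$ each such vertex points to all of $M_{i_{r+1}}$). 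Pick any representative $y_r \in M_{i_r}$ for each $r$, insisting $y_0 = u$ and $y_\ell = v$; then $(y_0, y_1, \dots, y_\ell)$ is a directed path in $G$ from $u$ to $v$.

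I expect the main obstacle to be purely expository rather than mathematical: namely, pinning down precisely the edge convention for the quotient graph (that an edge between two module-representatives encodes an all-to-all relation in $G$, not just an exists-to-exists one) and justifying it cleanly from the module definition, since the whole argument hinges on passing freely between ``some vertex'' and ``every vertex'' when moving between modules. Once that equivalence is stated as a preliminary observation, both directions are short path-lifting/path-projecting arguments. A secondary subtlety worth a sentence is making sure that when we collapse repeated consecutive module-images in the ``if'' direction we still get a genuine walk (we do, because dropping a self-loop-like repetition leaves the endpoints of the remaining edges intact), and the symmetric concern that picking representatives in the ``only if'' direction is unconstrained precisely because of the all-to-all property.
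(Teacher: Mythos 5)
Your proof is correct and matches the paper's intent exactly: the paper gives no proof of this lemma, asserting only that it is ``true due to the definitions of module and of quotient graph,'' and your two path-lifting/path-projecting arguments, resting on the preliminary all-to-all observation, are precisely the unwinding of those definitions. Your flagging of the $i=j$ case is a genuine (minor) imprecision in the statement as written, and the only further remark worth adding is that when the lemma is later invoked for \emph{transitive} modules the same argument goes through with ``has an edge to'' replaced by ``reaches,'' since the all-to-all property then holds at the level of the transitive closure $G^*$.
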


Let $G^*$ be the transitive closure of $G$. A module $M$ of $G^*$ is a \emph{transitive module} of $G$. The \emph{transitive congruence partition} $C_P$ of $V$ is a partition of $V$ into transitive modules. The \emph{transitive quotient graph} $G/C_P$ is the graph obtained
from $G$ by merging the nodes of each transitive module in $C_P$.

Figure~\ref{fi:congruenceandquotient}(a) shows a transitive congruence partition $C_P=\{M_1,M_2,M_3,M_4\}$ of the graph $G$ depicted in Figure~\ref{fi:channeldec}(a), where: $M_1=\{$ $0,$ $1,$ $2\}$; $M_2=\{3,$ $4,$ $5,$ $7,$ $11,$ $12,$ $13\}$; $M_3=\{6,8,9,10\}$; $M_4=\{14,15\}$. Figure~\ref{fi:congruenceandquotient}(b) shows the quotient graph $G_0$ of $G$ given $C_P$.
\begin{figure}
	\centering
	\subfigure[]
	{\includegraphics[width=0.4\linewidth]{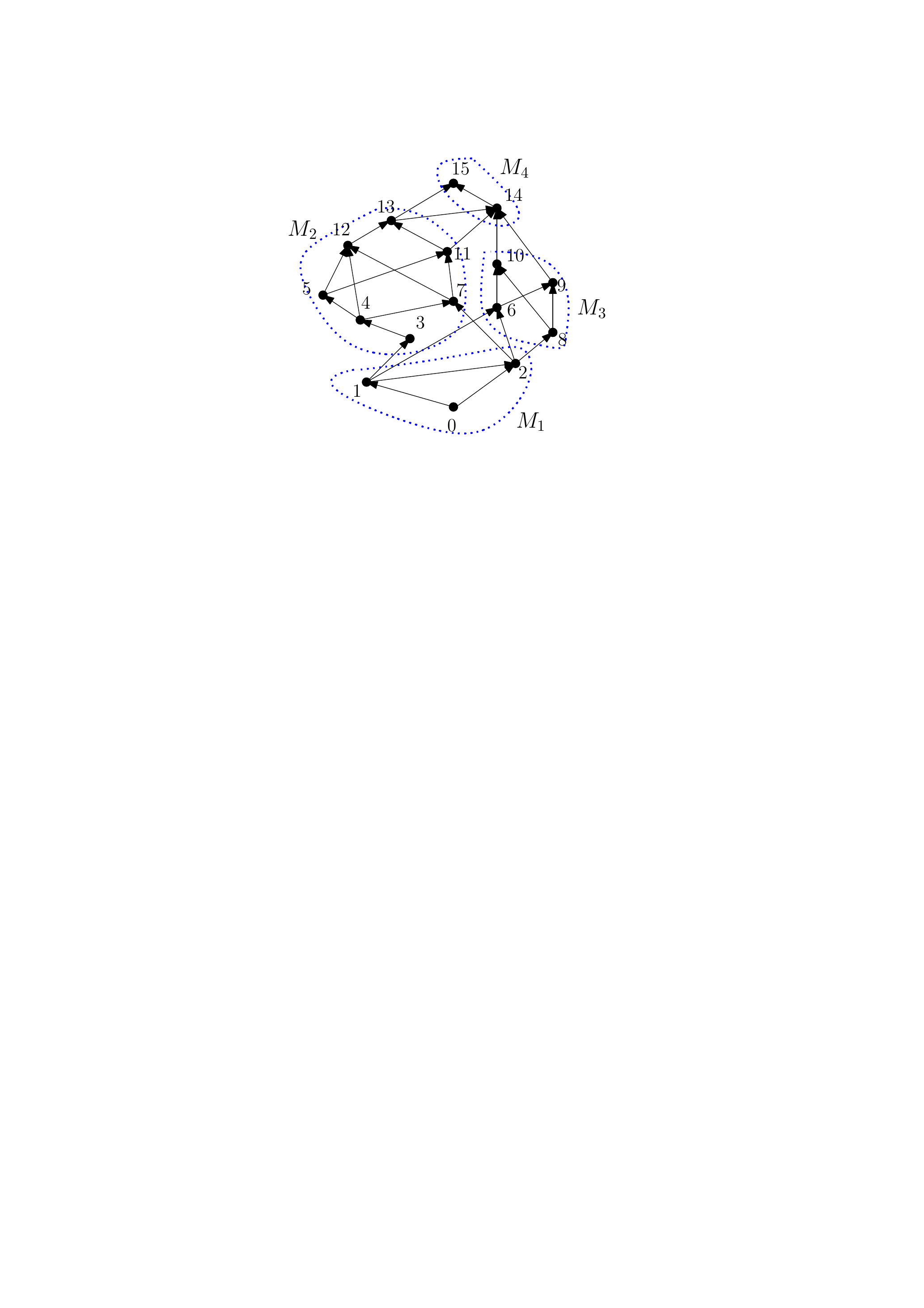}}
	\hfil
	\subfigure[]
	{\includegraphics[width=0.4\linewidth]{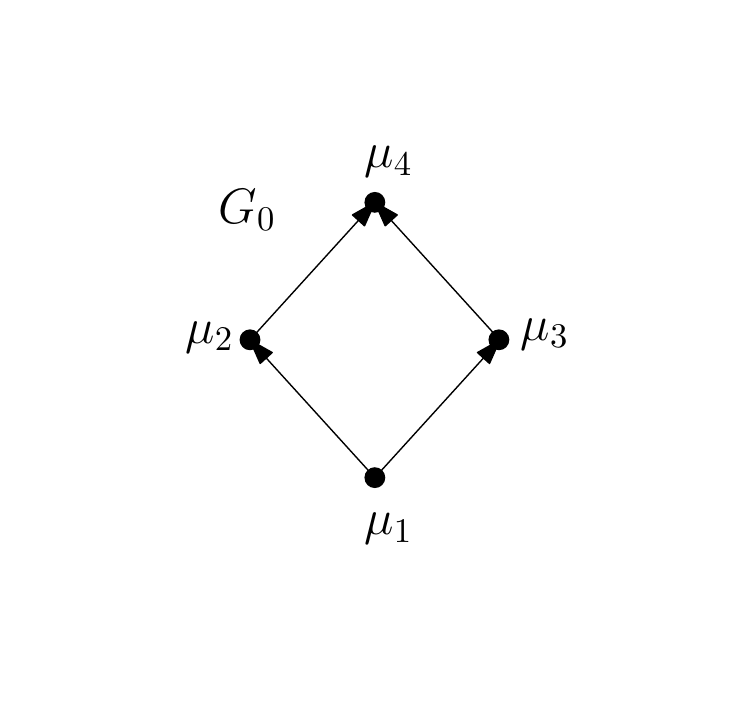}}
	\caption{(a) A transitive congruence partition $C_P=\{M_1,M_2,M_3,M_4\}$ of graph $G$ shown in Figure~\ref{fi:channeldec}(a), where: $M_1=\{0,1,2\}$; $M_2=\{3,4,5,7,11,12,13\}$; $M_3=\{6,8,9,10\}$; $M_4=\{14,15\}$;  (b) the quotient graph $G_0$.}
	\label{fi:congruenceandquotient}
\end{figure}
\noindent
\\\\
A modular decomposition of $G$ can be computed in linear time~\cite{DBLP:journals/dam/McConnellM05,DBLP:journals/dm/McConnellS99}. Hence, computing a transitive modular decomposition of $G$ requires $O(n m)$ times, since it is equivalent to computing the modular decomposition of $G^*$. From now on we will only deal with transitive modules, transitive congruence partitions and transitive quotient graph, so in our description we will omit the term "transitive".

Let $C_P=\{M_1,...,M_h\}$ be a congruence partition of $G$ and let $G_0$ be the quotient graph $G/C_P$. We denote by $\mu_i$ the vertex representing $M_i$ in $G_0$.  We denote by \emph{module-induced graph} of $M_i$ the graph $G_i=(M_i,E(M_i))$, where $E(M_i)$ is the subset of edges of $E$ which are incident to two vertices of $M_i$. Without loss of generality we assume that every $G_i$ is an st-graph. If $G_i$ is not an st-graph we do the operations described in the next paragraph.

If $G_i$ contains the source of the graph, we add a virtual sink $t_i$ to it; if it contains the sink of the graph, we add a virtual source $s_i$ to it; else, we add a virtual source $s_i$ and a virtual sink $t_i$ to it. Then: we add some edges connecting all the vertices of $V-M_i$ reaching $M_i$ to $s_i$; we add some edges connecting $t_i$ to all the vertices of $V-M_i$ reached by $M_i$; we remove all the edges of $V-M_i$ adjacent to a vertex of $M_i$ different from $s_i$ and $t_i$. 

Figure~\ref{fi:module_st} shows the graph obtained by adding a sink and a source to every module of the graph depicted in Figure~\ref{fi:congruenceandquotient} following the steps described in the above paragraph.
\begin{figure}
	\centering
	{\includegraphics[width=0.7\linewidth]{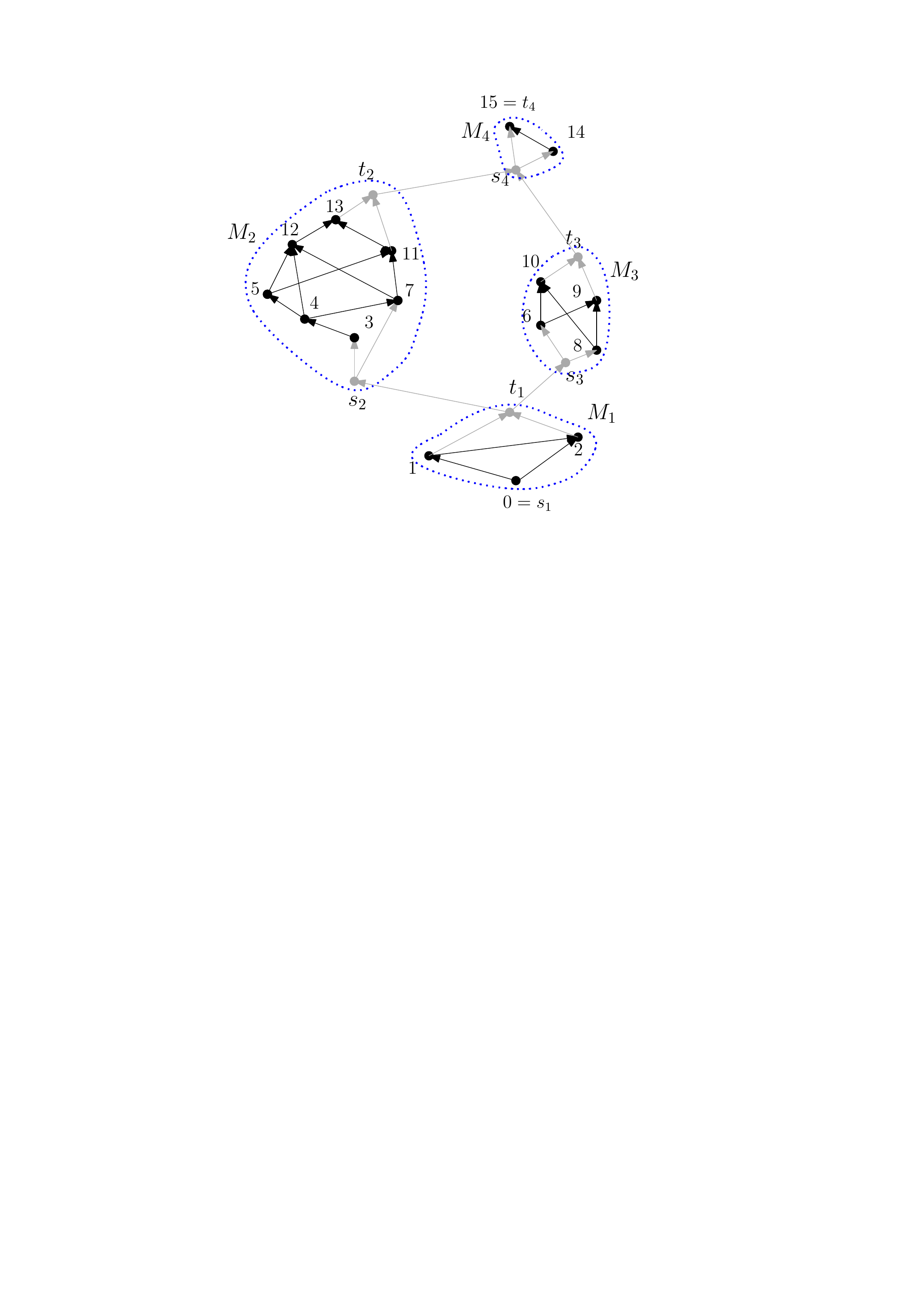}}
	\caption{The graph obtained from the graph $G$ shown in Figure~\ref{fi:congruenceandquotient}  after adding a sink and a source to every module of the $G$.}
	\label{fi:module_st}
\end{figure}
\\
\noindent
Figure~\ref{fi:induced} shows the induced graphs of graph $G$ shown in Figure~\ref{fi:module_st}. 
\begin{figure}
	\centering
	{\includegraphics[width=1\linewidth]{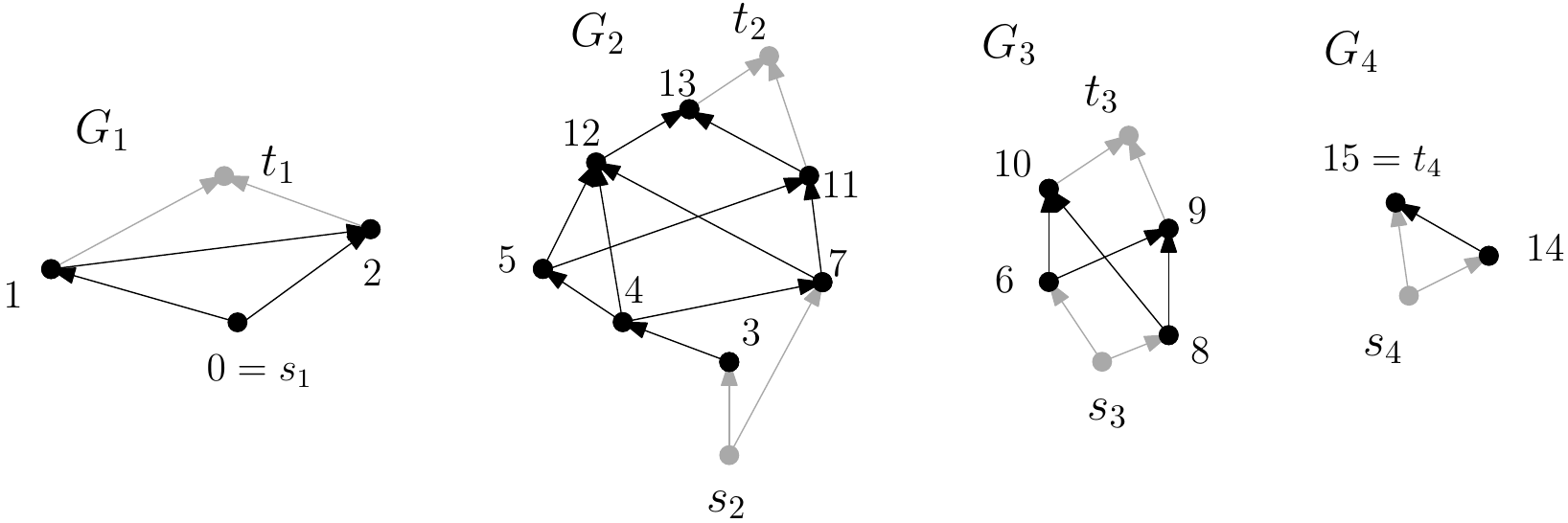}}
	\caption{The induced graphs of graph $G$ shown in Figure~\ref{fi:module_st}.}
	\label{fi:induced}
\end{figure}

Let $G(C_P)=\{G_0,G_1,...,G_h\}$ be the set of all the module-induced graphs of $G$ augmented with $G_0$. We denote by $w_i$ the width of $G_i$ (i.e., $w_{G_i}=w_i$). We denote by \emph{dimensional neck} $w_N$ of $C_P$ and $G(C_P)$ the value of the maximum width among all the graphs in $G(C_P)$.  Any graph $G_i\in G(C_P)$ is isomorphic to a subgraph of $G^*$ and the width of $G$ is equal to the width of it
s transitive closure. Hence, $w_N\le w_{G^*}=w_{G}$.

Let $G(C_P)=\{G_0,G_1,...,G_h\}$ be the set of all the module-induced graphs of $G$ augmented with $G_0$. We denote by $w_i$ the width of $G_i$ ($w_{G_i}=w_i$). We denote by \emph{dimensional neck} $w_{N}$ of $C_P$ and $G(C_P)$ the value of the maximum width among all the graphs in $G(C_P)$.  Any graph $G_i\in G(C_P)$ is isomorphic to a subgraph of $G^*$ and the width of $G$ is equal to the width of its transitive closure. This means that $w_{n}\le w_{G^*}=w_{G}$. Hence we have the following:

\begin{lemma} 
	The dimensional neck of $G(C_P)$ is not greater than the width of $G$. In other words, $w_{N}\le w_G$.
\end{lemma}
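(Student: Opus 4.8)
The plan is to establish the inequality $w_N \le w_G$ by reducing it to the statement, already asserted in the surrounding text, that every module-induced graph $G_i \in G(C_P)$ is isomorphic to a subgraph of the transitive closure $G^*$ of $G$. Once that is in hand, the proof is immediate: the width of a subgraph of $H$ is at most the width of $H$ (adding edges only increases reachability, hence only shrinks antichains), so $w_i = w_{G_i} \le w_{G^*}$ for every $i$; taking the maximum over $i$ gives $w_N = \max_i w_i \le w_{G^*}$; and finally $w_{G^*} = w_G$ because $G$ and $G^*$ have exactly the same reachability relation and therefore exactly the same incomparable sets.

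The first step I would carry out is to make precise the claim that each $G_i$ embeds into $G^*$. For $i \ge 1$, $G_i$ is built from the module $M_i$ by restricting to the internal edges $E(M_i)$ and then possibly adding a virtual source $s_i$ and/or sink $t_i$ with edges to/from the vertices of $M_i$ that are reached from / reach the outside. I would argue that the reachability relation among the \emph{original} vertices of $M_i$ inside $G_i$ is exactly the restriction of the reachability relation of $G$ (equivalently, of $G^*$) to $M_i$: this is where the module property is used — since every vertex of $M_i$ has the same outside predecessors and successors, no reachability between two vertices of $M_i$ can be ``routed'' through $V - M_i$ in a way that is lost when we restrict, and conversely restricting cannot create new reachabilities. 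The virtual source and sink are then handled by noting that a fresh source/sink can always be added to $G^*$ (or is already present, since $G$ is an st-graph) without disturbing anything, so $G_i$ maps isomorphically onto an induced-reachability subgraph of $G^*$ augmented with at most two universal vertices — and a universal source/sink does not change the width. For $i = 0$ the quotient graph $G_0$ embeds into $G^*$ by Lemma~\ref{lemma:property_quotient_graph}, which says reachability in $G_0$ mirrors reachability in $G$ between representative vertices; picking one representative vertex per module gives the desired isomorphic copy.

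The second step is the purely order-theoretic observation: if $H'$ is obtained from $H$ by adding edges (or is a ``reachability subgraph'' of $H$ in the sense above), then $w_{H'} \le w_{H}$, because any antichain of $H'$ is an antichain of $H$. Combined with $w_{G^*} = w_G$, which follows since $u,v$ are incomparable in $G$ iff they are incomparable in $G^*$, we conclude $w_i \le w_G$ for all $i$, hence $w_N = \max\{w_0, w_1, \dots, w_h\} \le w_G$.

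The main obstacle, and the only place the argument needs care rather than bookkeeping, is the first step: verifying that the virtual-source/virtual-sink surgery on a non-st module-induced graph genuinely produces a graph that is isomorphic to a subgraph of $G^*$ (suitably augmented), rather than something with extra spurious reachability. The delicate point is that the construction adds edges $s_i \to v$ for \emph{all} $v \in M_i$ reached from outside and $v \to t_i$ for all $v \in M_i$ reaching outside, so I must check that these do not create, among the non-virtual vertices, any reachability pair that was absent in $G^*$ — which again comes down to the defining property of a module (a common outside-neighborhood), so that $s_i$ and $t_i$ faithfully simulate ``the outside'' without introducing shortcuts internal to $M_i$. Since the text already asserts this isomorphism, in the write-up I would state it as the key claim, give the module-property justification in one or two sentences, and then let the width monotonicity and $w_{G^*} = w_G$ finish the proof.

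\begin{proof}
	We use the fact, noted above, that every $G_i \in G(C_P)$ is isomorphic to a subgraph of $G^*$: for $i \ge 1$ this holds because the module property guarantees that reachability among the vertices of $M_i$ is unchanged by restricting to $M_i$ and by replacing the outside by the virtual vertices $s_i, t_i$, which themselves may be taken to be (a universal source and sink of) $G^*$; for $i = 0$ it follows from Lemma~\ref{lemma:property_quotient_graph} by choosing one representative vertex of each module. Since adding edges cannot enlarge an antichain, $w_{G_i} \le w_{G^*}$ for every $i$. Moreover $G$ and $G^*$ have the same reachability relation, hence the same incomparable sets, so $w_{G^*} = w_G$. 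Therefore $w_i = w_{G_i} \le w_G$ for all $i$, and taking the maximum gives $w_N = \max_i w_i \le w_G$.
\end{proof}
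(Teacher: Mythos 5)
Your proof is correct and follows essentially the same route as the paper, which justifies the lemma in the preceding text by the same two observations: each $G_i\in G(C_P)$ embeds as a (reachability-faithful) subgraph of $G^*$, and $w_{G^*}=w_G$. You are in fact more careful than the paper on the one point that needs care --- that the embedding preserves the reachability relation (via the module property and the quotient-graph lemma), since width is not monotone under taking arbitrary subgraphs --- so no gap remains.
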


We are interested in calculating a dominance drawing $\Gamma_i$ for each graph $G_i$ of $C_P$. Moreover, we want to merge these drawings into a single dominance drawing of $G$. In order to do that, we want the drawings $\Gamma_0,...,\Gamma_h$ to have the same number of dimensions. Notice that, given a channel decomposition of a graph $G_i$ having size $\alpha$, it is always possible to compute a channel decomposition of $G_i=(V_i,E_i)$ having size $\alpha+\beta$, for any $\beta$, by adding to the channel decomposition $\beta$ (dummy) channels containing only the source and the sink of $G_i$.

The following algorithm, called Algorithm Drawings-Computation, receives as input the set $G(C_P)=\{G_0,...,G_h\}$ and it gives as output the set of $w_{N}$-dimensional dominance drawings $\Gamma_1,...,\Gamma_h$, where $\Gamma_i$ is a $w_{N}$-dimensional dominance drawing of $G_i$, for any $i= 1, 2, ..., h$. 

In Line~1 we compute the dimensional neck $w_N$ of $G(C_P)$. This is trivial, since we can simply compute the width of any DAG in $G(C_P)$ (as described in~\cite{DBLP:journals/tods/Jagadish90}) and pick up the highest one. This operation requires linear time. In Lines~2-3 we compute a channel decomposition $S_C^i$ of minimum size for any $G_i$ by using Algorithm Channels-Generation, which was introduced in Section~\ref{Section:projections}. In Lines~4-6 we add channels containing only the source and the sink of $G_i$ to any $S_C^i$ until its size is $w_N$. Then we compute a $w_N$-dimensional dominance drawing $\Gamma_i$ of each $G_i$ using Algorithm kD-Draw, as described in Section~\ref{Section:Multidimensional}.
\\\\\\\\
\textbf{Algorithm} Drawings-Computation($G(C_P)$)\\
1. \indent Compute the dimensional neck $w_N$ of $G(C_P)$.\\
2. \indent \textbf{For} each $G_i=G_0,G_1,...,G_h$\\
3. \indent \indent $S_C^i=$ Channels-Generation$(G_i)$\\
4. \indent \indent \textbf{While} $S_c^i$ contains less than $w_N$ channels\\
5. \indent \indent \indent $C_{|S_c^i|+1}=\{s_i,t_i\}$\\
6. \indent \indent \indent $S_C.add(C_{|S_c^i|+1})$\\
7. \indent \indent $\Gamma_i=$ kD-Draw($G_i$, $S_C^i$)\\
8. \indent \emph{output:} $\Gamma_0,...,\Gamma_h$\\\\
\indent
Figure~\ref{fi:drawing_computation} shows the output of Algorithm Drawings-Computation given the set $G(C_P)=\{G_0,G_1,G_2,G_3,G_4\}$, where $G_0$ is depicted in Figure~\ref{fi:congruenceandquotient}(b) and the induced graphs $G_1$, $G_2$, $G_3$, and $G_4$ are depicted in Figure~\ref{fi:induced}. Part~(a) shows: a channel decomposition $\{C_{01},C_{02}\}$ of minimum size of $G_0$, computed in Line~3 of Algorithm Drawings-Computation, where $C_{01}=\{\mu_1,\mu_2,\mu_4 \}$ and $C_{02}=\{ \mu_1,\mu_3,\mu_4 \}$; the dominance drawing $\Gamma_0$ of $G_0$, computed in Line~7 of Algorithm Drawings-Computation. Similarly: Part~(b) shows $\{C_{11},C_{12}\}$, where $C_{11}=\{0,1,t_1\}$ and $C_{12}=\{0,2,t_1\}$, and $\Gamma_1$; Part~(c) shows $\{C_{21},C_{22}\}$, where $C_{21}=\{s_2,3,4,5,12,13,t_2\}$ and $C_{22}=\{s_2,7,11,t_2\}$, and $\Gamma_2$; Part~(d) shows $\{C_{31},C_{32}\}$, where $C_{31}=\{s_3,6,10,t_3\}$ and $C_{32}=\{s_3,8,9,t_3\}$, and $\Gamma_3$; Part~(e) shows $\{C_{41},C_{42}\}$, where $C_{41}$ is the empty channel $\{s_4,15\}$ and $C_{42}=\{s_4,14,15\}$, and $\Gamma_4$.\\
\indent
\begin{figure}
	\centering
	\subfigure[]
	{\includegraphics[width=0.45\linewidth]{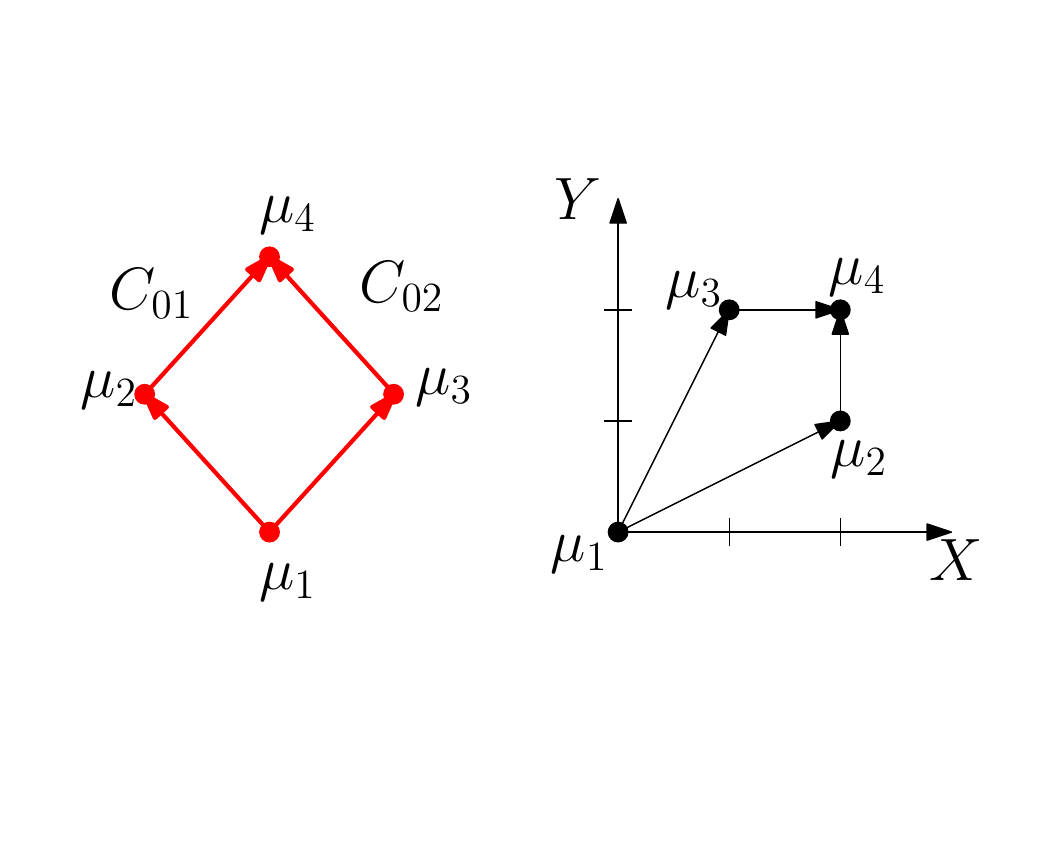}}
	\hfil
	\subfigure[]
	{\includegraphics[width=0.45\linewidth]{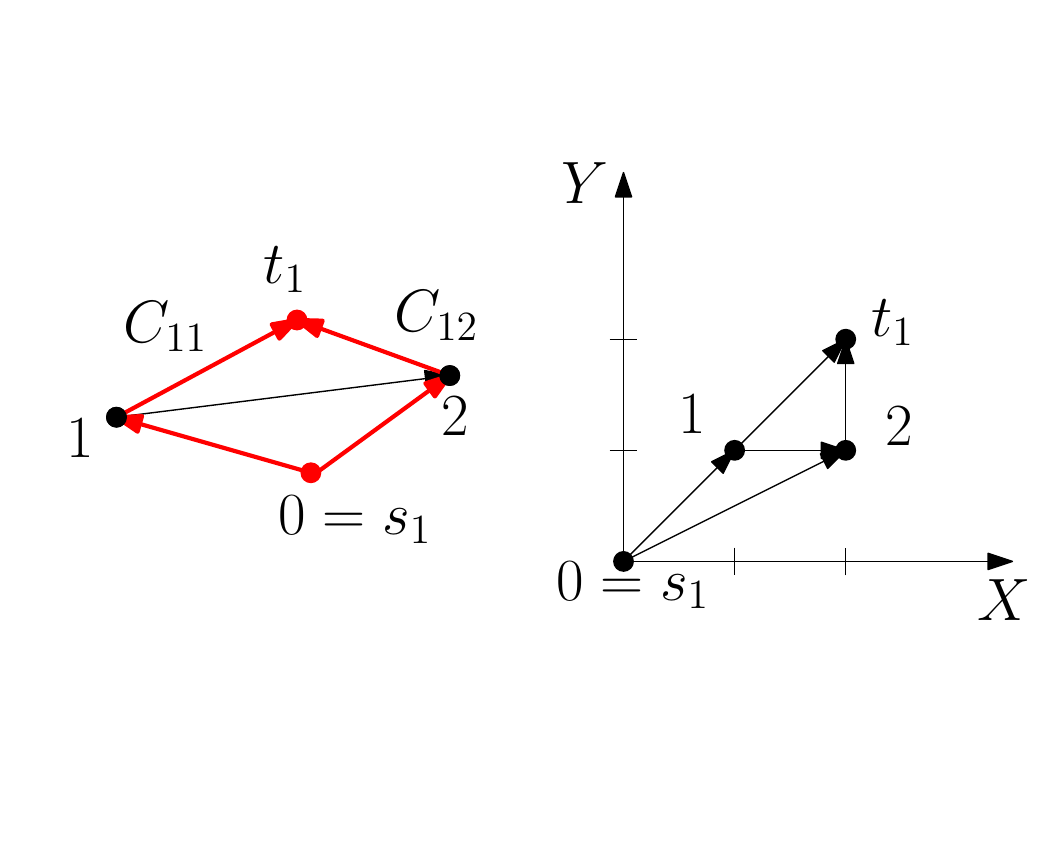}}
	\hfil
	\subfigure[]
	{\includegraphics[width=0.45\linewidth]{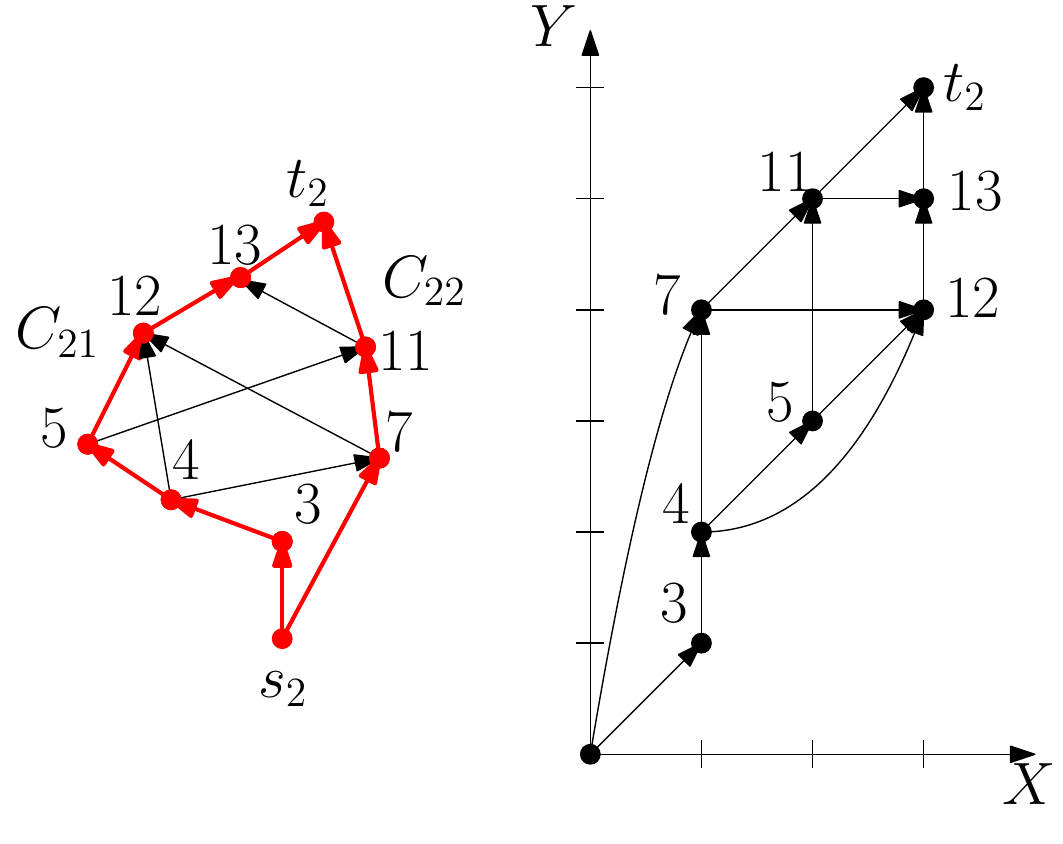}}
	\hfil
	\subfigure[]
	{\includegraphics[width=0.45\linewidth]{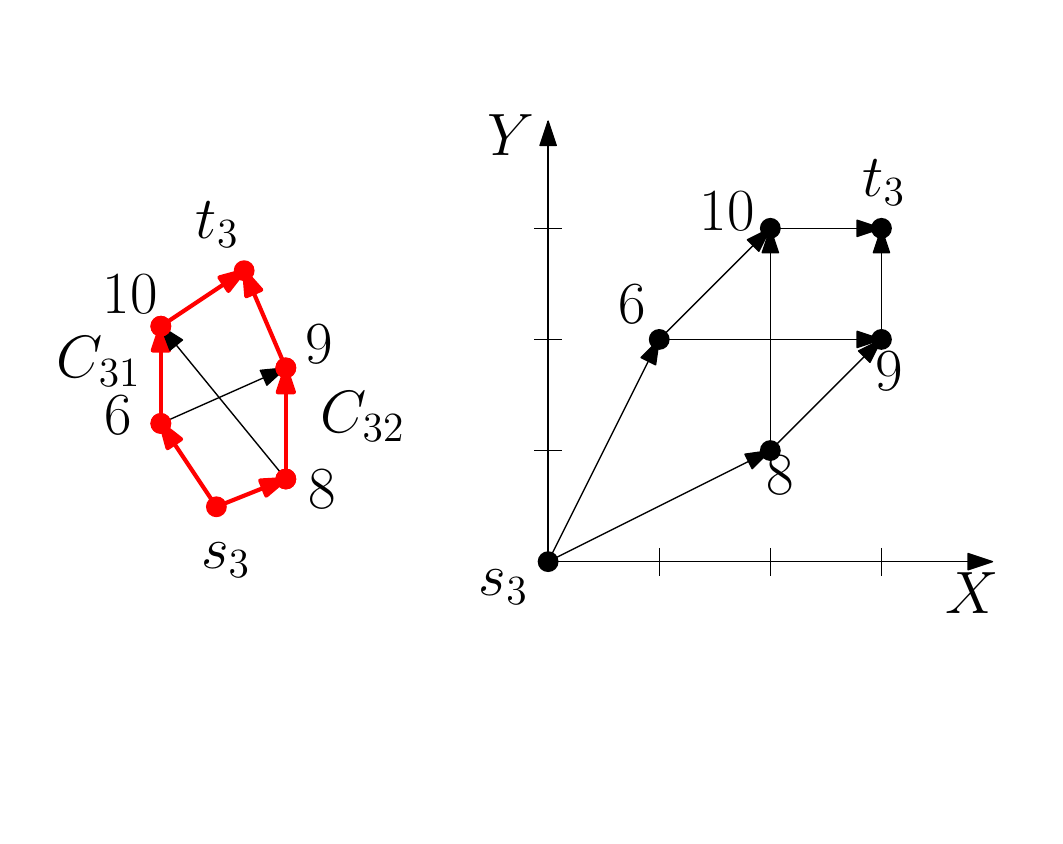}}
	\hfil
	\subfigure[]
	{\includegraphics[width=0.45\linewidth]{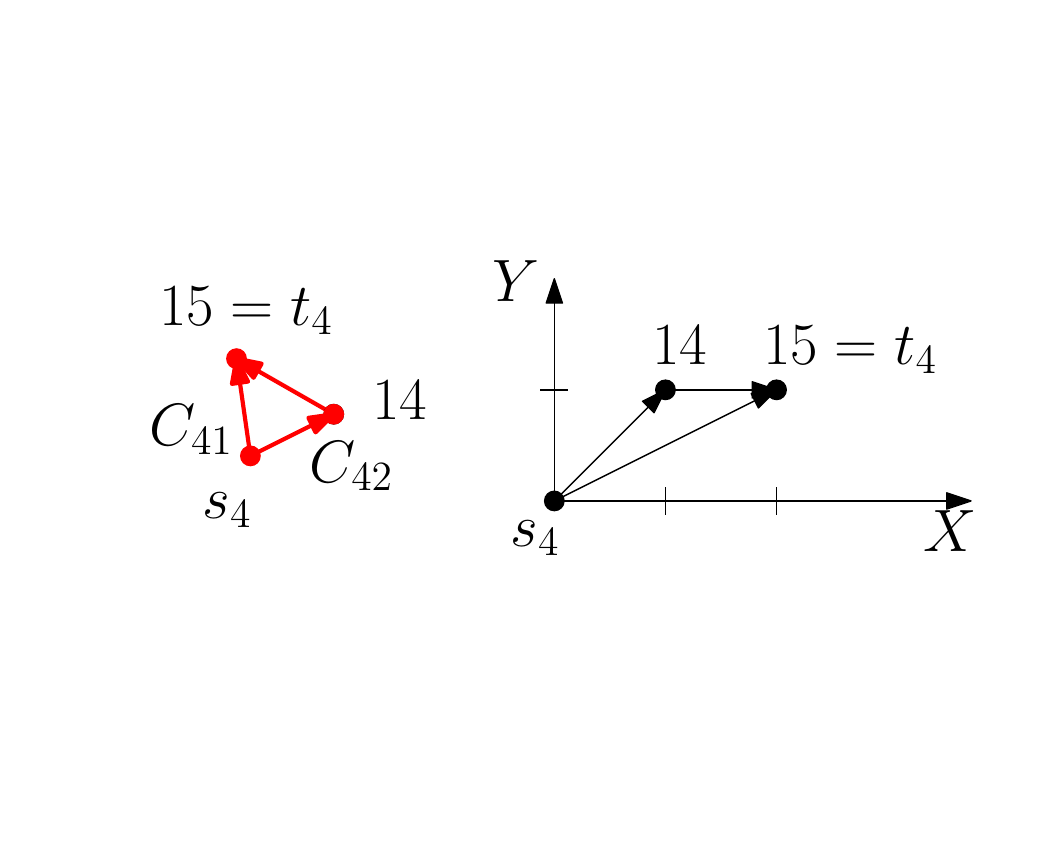}}
	\caption{The output of Algorithm Drawings-Computation given the set $G(C_P)=\{G_0,G_1,G_2,G_3,G_4\}$, where $G_0$ is depicted in Figure~\ref{fi:congruenceandquotient}(b) and the induced graphs $G_1$, $G_2$, $G_3$, and $G_4$ are depicted in Figure~\ref{fi:induced}.}
	\label{fi:drawing_computation}
\end{figure}
We recall that computing $w_N$ from a graph requires linear time~\cite{DBLP:journals/tods/Jagadish90}. It is easy to see that the time complexity of the algorithm depends on Line 3, where we compute the minimum-size channel decomposition of any graph in $G(C_P)$, and Line 7, where we compute the dominance drawings $\Gamma_0,...,\Gamma_h$. In order to characterize this time complexity we give some further definitions. We denote by $n_i$ the number of vertices of $G_i$. We recall that the number of vertices of $G_0$ is equal to the number of modules of $C_P$ and that the number of vertices of any $G_i$ is equal to the number of vertices of the module $M_i$. Hence, $n_0=|C_P|$ and $n_i=|M_i|$ for any $0<i\le h$. Let $\rho$ be the maximum $n_i$ for $i\in[0,h]$ and let $w_{\rho}$ be the maximum width among the graphs of $G(C_P)$ having $w_{\rho}$ vertices.
\begin{lemma}\label{lemma:etadneck}
	The maximum number of vertices of a module of $C_P$, $\rho$, is greater than or equal to the dimensional neck of $G(C_P)$. In other words, $w_N\le \rho$.
\end{lemma}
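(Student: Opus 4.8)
The plan is to reduce the inequality $w_N\le\rho$ to the elementary fact that the width of a DAG never exceeds its number of vertices, applied separately to each graph of $G(C_P)$. Recall that, by definition, $w_N=\max\{w_0,w_1,\dots,w_h\}$ and $\rho=\max\{n_0,n_1,\dots,n_h\}$, where $n_0=|C_P|$ is the number of vertices of the quotient graph $G_0$ and $n_i=|M_i|$ is the number of vertices of module $M_i$ for $0<i\le h$. Hence it suffices to establish that $w_i\le n_i$ for every $i\in[0,h]$, and the statement then follows by taking maxima on both sides.

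First I would dispose of the case $i=0$: the graph $G_0$ has exactly $n_0$ vertices, and any set of pairwise incomparable vertices is in particular a set of vertices of $G_0$, so $w_0=w_{G_0}\le n_0$. For $i>0$ the one subtlety is that the module-induced graph $G_i$ may carry a virtual source $s_i$ and/or a virtual sink $t_i$ in addition to the $n_i=|M_i|$ vertices coming from $M_i$, so a naive count would only give $w_i\le n_i+2$. To sharpen this, I would use the standing assumption that each $G_i$ is an st-graph: then $s_i$ reaches every vertex of $G_i$ and every vertex of $G_i$ reaches $t_i$, so both $s_i$ and $t_i$ are comparable with every other vertex and cannot belong to any set of two or more pairwise incomparable vertices. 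Consequently, a maximum set of incomparable vertices of $G_i$ of size at least two is contained entirely in $M_i$, giving $w_i\le|M_i|=n_i$; and if the maximum such set has size one, then $w_i=1\le n_i$ since $M_i$ is non-empty. Either way $w_i\le n_i$.

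Combining the two cases yields $w_N=\max_i w_i\le\max_i n_i=\rho$, which is the claim. The argument is short, and the only place that needs care is the bookkeeping of the virtual terminals $s_i,t_i$ introduced to make each $G_i$ an st-graph; this is exactly where the ``st-graph'' hypothesis on the $G_i$ is invoked, and overlooking it is the one thing that could derail an otherwise immediate proof.
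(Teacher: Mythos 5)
Your proof is correct and follows essentially the same route as the paper: bound the width of each graph in $G(C_P)$ by its number of vertices and take maxima (the paper simply picks the $G_i$ attaining $w_i=w_N$ and writes $w_N\le n_i\le\rho$). Your handling of the virtual terminals $s_i,t_i$ is actually more careful than the paper's one-line argument, which glosses over the fact that $G_i$ may have $|M_i|+2$ vertices.
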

\begin{proof}
	Let $G_i$ be the graph such that $w_i=w_N$. We have that $w_N$ is less than $n_i$ by Lemma~\ref{lemma:n/2} and $n_i$ is less than $\rho$ by the definition of $\rho$. In other words: $w_N\le n_i\le \rho$.
\end{proof}

The time complexity of Line 3 is $O(w_{\rho} \rho^2)$, Lemma~\ref{lemma:channeldecomposition}, and the time complexity of Line 8 is $O(w_N \rho)$. Moreover, $O(w_N \rho)\in O(w_{\rho} \rho^2)$ by Lemma~\ref{lemma:etadneck}. It means that the computational complexity of Algorithm Drawings-Computation depends on Line 3. Hence, we have the following lemma:

\begin{lemma}\label{lemma:timecomplexity_drawings_computation}
	Algorithm Drawings-Computation requires $O(w_{\rho} \rho^2)$ time.
\end{lemma}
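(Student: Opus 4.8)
The plan is to bound the running time of Algorithm Drawings-Computation line by line and show that every line is dominated by the calls to Channels-Generation in Line~3. First I would dispose of Line~1: computing the dimensional neck $w_N$ means computing the width of each $G_i\in G(C_P)$ with the linear-time algorithm of~\cite{DBLP:journals/tods/Jagadish90} and taking the maximum; since $G_1,\dots,G_h$ have vertex sets forming a partition of $V$ and $G_0$ has $|C_P|$ vertices, the total number of vertices $\sum_{i=0}^h n_i$ is $O(n)$, so Line~1 costs $O(n)$ and is absorbed by the bound to be proved.

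Next I would analyze one iteration of the loop of Lines~2--7 on a fixed $G_i$. By Lemma~\ref{lemma:channeldecomposition}, the invocation of Channels-Generation in Line~3 costs $O(w_i n_i^2)$. Lines~4--6 append at most $w_N$ trivial channels $\{s_i,t_i\}$, each in constant time, so they cost $O(w_N)$, which is $O(\rho)$ by Lemma~\ref{lemma:etadneck}. Line~7 builds the compressed transitive closure of $G_i$ and then runs Algorithm kD-Draw with a $w_N$-channel decomposition; by Theorem~\ref{theorem:kDdraw}(a)--(b) this costs $O(w_N m_i + w_N n_i)$. Using $w_i\le w_N\le\rho$ (the first inequality because $w_N$ is the maximum width over $G(C_P)$, the second by Lemma~\ref{lemma:etadneck}) and $n_i\le\rho$, one iteration is $O(w_i n_i^2)$ up to lower-order terms.

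Summing over $i=0,1,\dots,h$ then gives the result. The contribution of Lines~4--7 is dominated by its most expensive single iteration, which is $O(w_N\rho)$; since $w_N\le\rho$ this is $O(\rho^2)\subseteq O(w_\rho\rho^2)$. The contribution of Line~3 is $\sum_{i=0}^h O(w_i n_i^2)$, whose dominant term comes from the graph of $G(C_P)$ that attains $\rho$ vertices and width $w_\rho$, and which therefore equals $O(w_\rho\rho^2)$; here one uses $n_i\le\rho$ for every $i$ together with the facts that $n_0=|C_P|\le\rho$ and that $M_1,\dots,M_h$ partition $V$ to argue that the remaining graphs contribute no more. Combining the two estimates yields the claimed $O(w_\rho\rho^2)$ bound.

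The step I expect to be the main obstacle is exactly this final aggregation of the Line~3 costs: Lemma~\ref{lemma:channeldecomposition} is a per-graph statement, so one must argue carefully that adding $O(w_i n_i^2)$ over all $h+1$ module-induced graphs of $G(C_P)$ does not cost an additional factor in the number of modules. The available leverage is that the $M_i$ partition $V$, that $n_0=|C_P|$ is itself at most $\rho$, and that $w_i\le w_N\le\rho$ by Lemma~\ref{lemma:etadneck}; the argument must combine these so that the sum collapses to the single dominant term $O(w_\rho\rho^2)$.
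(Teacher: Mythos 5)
Your plan follows the paper's own (very terse) justification almost exactly: charge Line~1 at linear time, charge Line~3 via Lemma~\ref{lemma:channeldecomposition}, and absorb Lines~4--7 using $w_N\le\rho$ from Lemma~\ref{lemma:etadneck}. You have also correctly isolated the one step that needs real work, namely aggregating the $h+1$ costs $O(w_i n_i^2)$ of the calls to Channels-Generation. But your resolution of that step does not go through. The facts you invoke ($M_1,\dots,M_h$ partition $V$, $n_0=|C_P|=h\le\rho$, and $w_i\le w_N\le\rho$) yield only
\[
\sum_{i=0}^{h} w_i n_i^2 \;\le\; w_N\Bigl(\rho^2+\rho\sum_{i=1}^{h} n_i\Bigr)\;=\;O(w_N\,\rho\, n)\;=\;O(w_N\,\rho^3),
\]
and the extra factor is real: if $C_P$ consists of $h$ modules of size $\rho=n/h$ each (with $h\le\sqrt{n}$ so that $n_0\le\rho$), Line~3 alone costs $\Theta(h\cdot w\rho^2)$ when each module-induced graph has width $w$. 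So the sum does not ``collapse to the single dominant term''; the remaining graphs can contribute $h-1$ further terms of the same order, and no argument based on the partition property alone can remove them. A secondary issue is the use of $w_\rho$: it is defined as the maximum width among the graphs of $G(C_P)$ that attain $\rho$ vertices, so even one iteration's cost $w_i n_i^2$ need not be bounded by $w_\rho\rho^2$ --- the maximizer of $w_i n_i^2$ may be a graph with fewer than $\rho$ vertices whose width exceeds $w_\rho$.

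To be fair, the paper's own proof consists of the two sentences preceding the lemma and simply asserts that ``the time complexity of Line~3 is $O(w_\rho\rho^2)$'' without summing over the $h+1$ invocations, so it is no more rigorous than your attempt; you have in effect located the gap in the published argument rather than introduced a new one. A defensible statement would report the honest bound $O\bigl(\sum_{i} w_i n_i^2\bigr)=O(w_N\,\rho\, n)$, or equivalently insert the missing factor (the number of modules) into the claimed bound; if you want to keep the form $O(w_\rho\rho^2)$ you must either restrict to the cost of a single iteration or impose an additional hypothesis on $C_P$ that forces one module to dominate the sum.
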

\indent
Lemma~\ref{lemma:timecomplexity_drawings_computation} tells that the time complexity of the algorithm depends on the maximum number of vertices $\rho$ belonging to a same module and on the width $w_{rho}$ of the graph having such a number of vertices.

As discussed earlier, our final goal is to merge the $w_N$-dimensional dominance drawings $\Gamma_0,...,\Gamma_h$ of graphs $G_0,...,G_h$ into a $w_N$-dominance drawing $\Gamma$ of the original graph $G$. Our strategy is to create space in the dominance drawing $\Gamma_0$ of the quotient graph $G_0$ in order to insert the drawings $\Gamma_1,...,\Gamma_h$ in it by simply using vertex $\mu_i$ as the new origin of drawing $\Gamma_i$ for any $i\in [1,h]$.

We create space by shifting, as described in Algorithm Shifter. It receives as input the drawings $\Gamma_0,...,\Gamma_h$ and gives as output a modified drawing $\Gamma_0$, where the vertices are shifted. Suppose that we want to shift $\mu_j$ with respect of $\mu_i$ in dimension $D_g$ (notice that all the possible ordered couples of vertices of $G_0$ are chosen in Lines 1-2 and that all the dimensions are chosen for every ordered couple of vertices in Line 3).  We shift $\mu_j$ if and only if one of the following conditions is true: its coordinate in some dimension $D_g$ is greater than the corresponding coordinate of $\mu_i$; the coordinates of $\mu_i$ and $\mu_j$ are equal (in dimension $D_g$) and $\mu_j$ is reachable from $\mu_i$. This check is done in Line 4. In that case, let $y$ be the maximum value of a coordinate in $D_g$ in drawing $\Gamma_j$, i.e., the coordinate of $t_j$ (Line 6). We shift $\mu_j$ in dimension $D_g$ by $y$ positions (Line 7).\\\\
\noindent
\textbf{Algorithm} Shifter($\Gamma_0,...,\Gamma_h$)\\
1. \indent \textbf{For} each $\mu_i\in G_0$:\\ 
2. \indent \indent \textbf{For} each $\mu_j\in G_0$ such that $i\not=j$:\\ 
3. \indent \indent \indent \textbf{For} each $D_g=D_1,...,D_{w_N}$\\
4. \indent \indent \indent \indent \textbf{If} $[D_g(\mu_i)<D_g(\mu_j)\lor (D_g(\mu_i)=D_g(\mu_j)\land r(\mu_i,\mu_j))]$\\
5.\indent \indent \indent \indent \indent Let $y$ be equal to the coordinate of $t_j$ in $\Gamma_j$\\
6.\indent \indent \indent \indent \indent $D_g(\mu_j)+=y$\\
7.\indent \emph{output:} $\Gamma_0$\\\\
Figure~\ref{fi:shifting} is an illustration of Algorithm Shifter. The algorithm receives as input the drawings depicted in Figure~\ref{fi:drawing_computation}. Part~(a) shows drawing $\Gamma_0$. Part~(b) shows the drawing $\Gamma_0$ after we perform the shifting to accommodate $\Gamma_1$. Part~(c) shows drawing $\Gamma_0$ after we perform the shifting to accommodate $\Gamma_2$. Finally, Part~(d) shows drawing $\Gamma_0$ after we perform the shifting to accommodate $\Gamma_3$. The space reserved for each $\Gamma_1$, $\Gamma_2$, and $\Gamma_3$ during the algorithm is shown with a red box.
\begin{figure}
	\centering
	\subfigure[]
	{\includegraphics[width=0.165\linewidth]{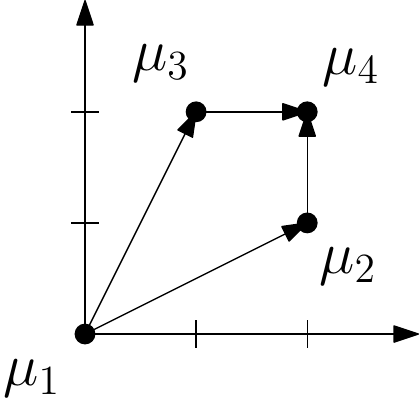}}
	\hfil
	\subfigure[]
	{\includegraphics[width=0.265\linewidth]{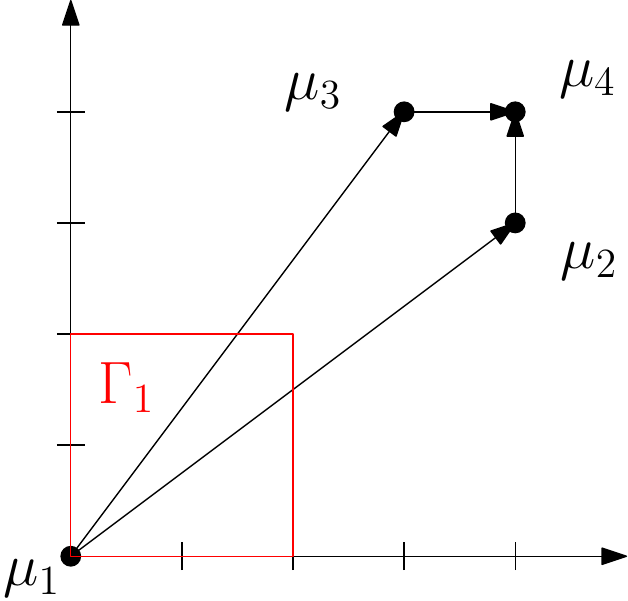}}
	\hfil
	\subfigure[]
	{\includegraphics[width=0.465\linewidth]{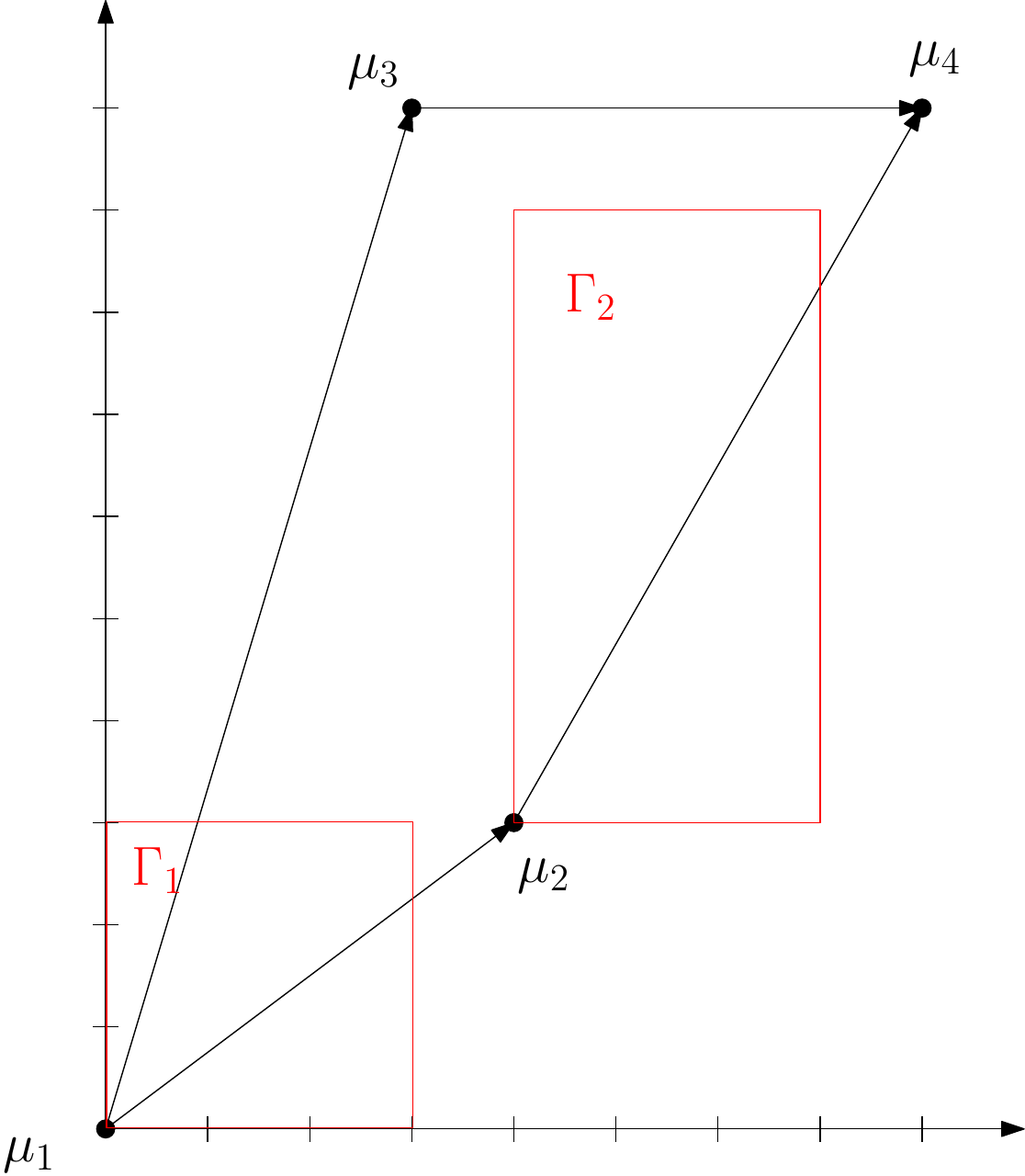}}
	\hfil
	\subfigure[]
	{\includegraphics[width=0.75\linewidth]{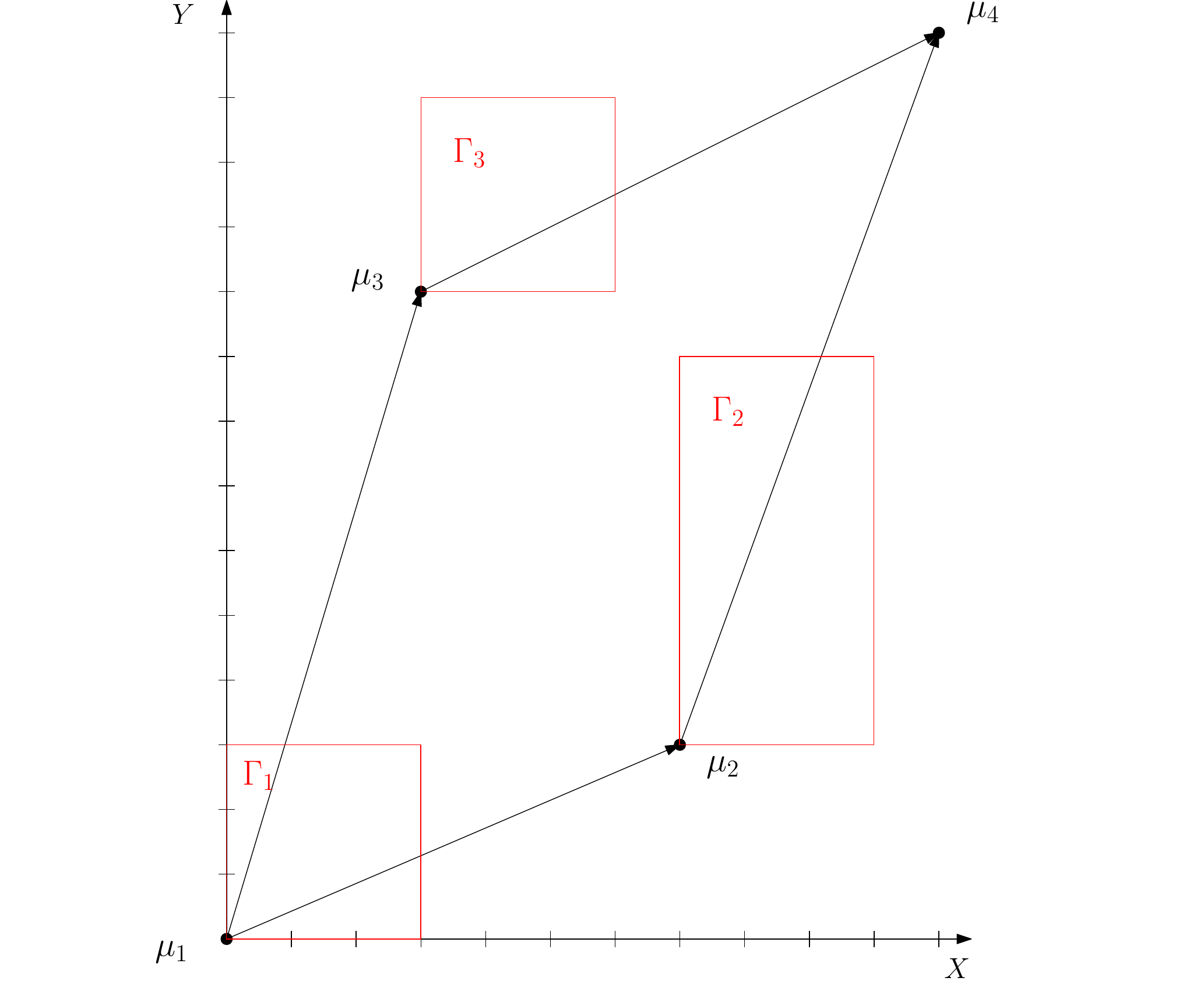}}
	\caption{Illustration of Algorithm Shifter using as input the drawings depicted in Figure~\ref{fi:drawing_computation}: (a) shows the drawing $\Gamma_0$;  (b) shows $\Gamma_0$ after the shifting done with respect to $\Gamma_1$;  (c) shows $\Gamma_0$ after the shifting done with respect to $\Gamma_2$;  (d) shows $\Gamma_0$ after the shifting done with respect to $\Gamma_3$. The space reserved for $\Gamma_1$, $\Gamma_2$, and $\Gamma_3$ during the algorithm is always shown with red boxes.}
	\label{fi:shifting}
\end{figure}
\indent
\\
Recall that the number of vertices of $G_0$ is equal to $h$, which is the number of modules of $C_P$. Moreover, the number of dimensions of $\Gamma_0$ are $w_N$. Algorithm Shifter shifts in constant time every vertex of $G_0$ with respect to every other vertex of $G_0$ in every dimension of $\Gamma_0$ in constant time. Hence, we have the following lemma: 
\begin{lemma}\label{lemma:shiftingtime}
	The time complexity of Algorithm Shifter is $O(w_Nh^2)$. 
\end{lemma}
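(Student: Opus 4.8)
The plan is to bound the running time by a direct accounting of the three nested loops of Algorithm Shifter, combined with the observation that the body of the innermost loop runs in constant time.

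First I would recall the two size parameters involved. The quotient graph $G_0$ has exactly $h$ vertices, one vertex $\mu_i$ for each module $M_i$ of the congruence partition $C_P$; and the drawing $\Gamma_0$ has exactly $w_N$ dimensions $D_1,\dots,D_{w_N}$, because Algorithm Drawings-Computation pads every channel decomposition, in particular $S_C^0$, up to size $w_N$. Consequently the loop on Line~1 runs $h$ times, the loop on Line~2 runs at most $h-1$ times for each such iteration, and the loop on Line~3 runs exactly $w_N$ times for each ordered pair $(\mu_i,\mu_j)$. This produces $O(w_N h^2)$ executions of the body on Lines~4--6 in total.

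Next I would argue that each execution of Lines~4--6 takes $O(1)$ time. The condition on Line~4 consists of a comparison and an equality test between the two stored integers $D_g(\mu_i)$ and $D_g(\mu_j)$, together with the reachability test $r(\mu_i,\mu_j)$; the latter is answered in constant time from a compressed transitive closure of $G_0$ (the data structure of Section~\ref{Section:projections}, whose projections decide reachability in constant time by Lemma~\ref{lemma:TC}), which we may assume to be available as part of the preprocessing. The value $y$ on Line~5 is just the coordinate of the sink $t_j$ in $\Gamma_j$, retrieved by a single table lookup, and Line~6 is a single addition. Multiplying the $O(w_N h^2)$ iterations by this $O(1)$ per-iteration cost yields the claimed $O(w_N h^2)$ bound.

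The one subtlety I would flag --- though it is not a genuine obstacle --- is that a given vertex $\mu_j$ is shifted repeatedly, once for every $\mu_i$ and every dimension, and that a shift performed for a pair $(\mu_i,\mu_j)$ changes the coordinates that are later compared for pairs $(\mu_{i'},\mu_j)$ with $i'>i$. This matters for the correctness argument but not for the timing: the number of (pair, dimension) triples processed is fixed at $O(w_N h^2)$ regardless of how the coordinates evolve, and every individual shift still costs $O(1)$. Hence the running time is $O(w_N h^2)$, as stated.
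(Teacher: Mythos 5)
Your proof is correct and follows essentially the same route as the paper: count the $O(h^2)$ ordered pairs of vertices of $G_0$ times the $w_N$ dimensions, and observe that each shift (including the reachability test) costs constant time. The extra remarks on the constant-time reachability query and on why repeated shifting does not affect the iteration count only make explicit what the paper leaves implicit.
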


Finally, putting everything together, we present Algorithm Neck-Dimensional-Draw (or simply ND-Draw). It takes as input $G(C_P)$ and produces as output a dominance drawing of $G$.  First we compute the dominance drawings $\Gamma_0,...,\Gamma_h$ of the module induced graphs $G_0,...,G_h\in G(C_P)$ by using Algorithm Drawings-Computation (Line~1 of ND-Draw). Next we use Algorithm Shifter to shift the vertices of $\Gamma_0$ in order to create space for the vertices of $G_1,...,G_h$, which are already placed in $\Gamma_1,...,\Gamma_h$ (Line~2). Finally, we place every vertex of $v\in G_i$ in $\Gamma_0$ (Lines~3-8). We denote by $D_g(v,\Gamma_0)$ the coordinate of the vertex $v$ in the dimension $D_g$ of drawing $\Gamma_0$. The coordinate in the dimension $D_g$ of $v$ in $\Gamma_0$ is the sum of the coordinate of $\mu_i$ in $\Gamma_0$ with the coordinate of $v$ in $\Gamma_i$ (Line~6). This calculation is made for every $v\in V$. The output of the algorithm is a dominance drawing $\Gamma$ of $G$ having $w_N$ dimensions. \\\\\\\
\textbf{Algorithm} ND-Draw($G(C_P)$)\\
1.\indent $\Gamma_0,...,\Gamma_h=$ Drawings-Computation($G(C_P)$)\\
2.\indent $\Gamma_0=$ Shifter($\Gamma_0,...,\Gamma_h$)\\
3.\indent \textbf{For} each $\Gamma_i=\Gamma_1,...,\Gamma_h$:\\ 
4.\indent \indent \textbf{For} any $v\in G_i$\\
5.\indent \indent \indent \textbf{For} each $D_g=D_1,...,D_{w_N}$\\
6.\indent \indent \indent \indent $D_g(v,\Gamma_0)=D_g(\mu_i,\Gamma_0)+D_g(v,\Gamma_i)$\\
7.\indent \indent \indent \indent remove $\mu_i$ from $\Gamma_0$\\
8.\indent \emph{output:} $\Gamma=\Gamma_0$
\\\\
Figure~\ref{fi:final_1} shows a representation of the output of Algorithm ND-Draw when the input of the algorithm is the set $G(C_P)=\{G_0,G_1,G_2,G_3,G_4\}$, where $G_0$ is depicted in Figure~\ref{fi:congruenceandquotient}(b) and the induced graphs $G_1$, $G_2$, $G_3$, and $G_4$ are depicted in Figure~\ref{fi:induced}. This drawing is obtained by substituting the drawings in Figure~\ref{fi:drawing_computation} in the respective space reserved for them in Figure~\ref{fi:shifting} as described by Algorithm ND-Draw.
\begin{figure}
	\centering
	{\includegraphics[width=1\linewidth]{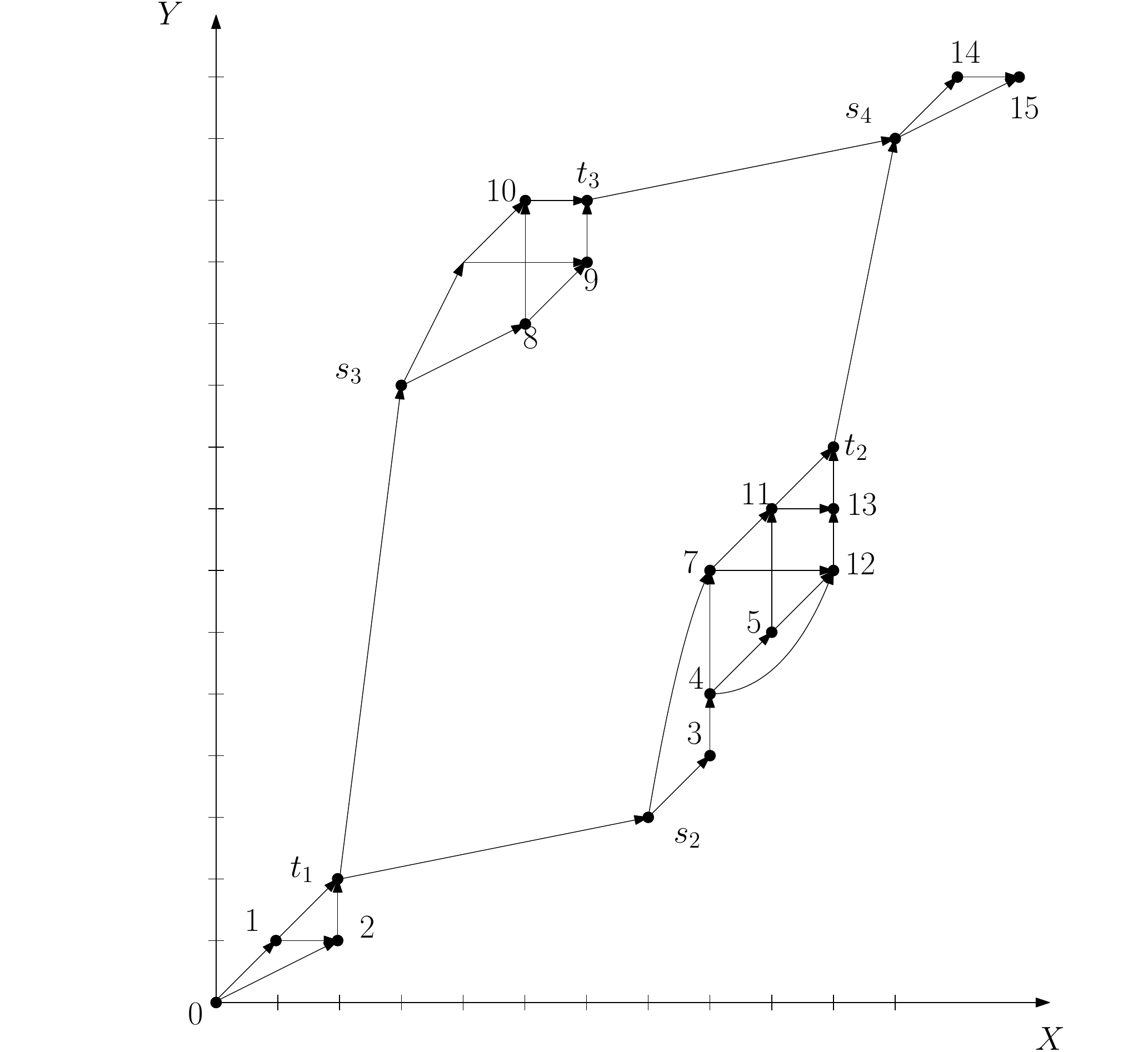}}
	\caption{Output of Algorithm ND-Draw when the input is the set $G(C_P)=\{G_0,G_1,G_2,G_3,G_4\}$, where $G_0$ is depicted in Figure~\ref{fi:congruenceandquotient}(b) and the induced graphs $G_1$, $G_2$, $G_3$, and $G_4$ are depicted in Figure~\ref{fi:induced}. This drawing is obtained by substituting the drawings in Figure~\ref{fi:drawing_computation} in the respective space reserved for them of Figure~\ref{fi:shifting} as described by Algorithm ND-Draw.}
	\label{fi:final_1}
\end{figure}

Figure~\ref{fi:final_2} shows the drawing of Figure~\ref{fi:final_1} where the added sources and sinks are removed and the original edges of the graph are restored.  Notice that this is the same DAG as the one shown in Figure 1 along with its four channels.  According to that channel decomposition, the DAG requires four dimensions, whereas, after using the module decomposition technique it requires only two dimensions.

\paragraph{Important Observation on the New Dominance Drawing Technique}  The graph shown in Figure~\ref{fi:final_1} and Figure~\ref{fi:final_2} is the same graph $G$ shown in Figure~\ref{fi:channeldec}.  Recall that $G$ has a minimum size channel decomposition as shown in Figure~\ref{fi:channeldec}~(b) $S_c=\{C_1,C_2,C_3,C_4\}$, where: $C_1=\{0,1,4,5,12,13,15\}$; $C_2=\{0,3,7,11,15\}$; $C_3=\{0,2,6,10,14,15\}$; $C_4=\{0,8,9,15\}$.  Therefore according to Algorithm kD-Draw it can be drawn as a 4-dimensional dominance drawing, which is the best that can be done by using only the channel decomposition approach.  However, after using the module decomposition techniques of this section, we are able to reduce the number of required dimensions to two, as shown Figures~\ref{fi:final_1} and~\ref{fi:final_2}.  Therefore, the use of module decomposition allows us to "group" parts of the graph so that the total number of dimensions is cut in half for this example graph.

\begin{figure}
	\centering
	{\includegraphics[width=1\linewidth]{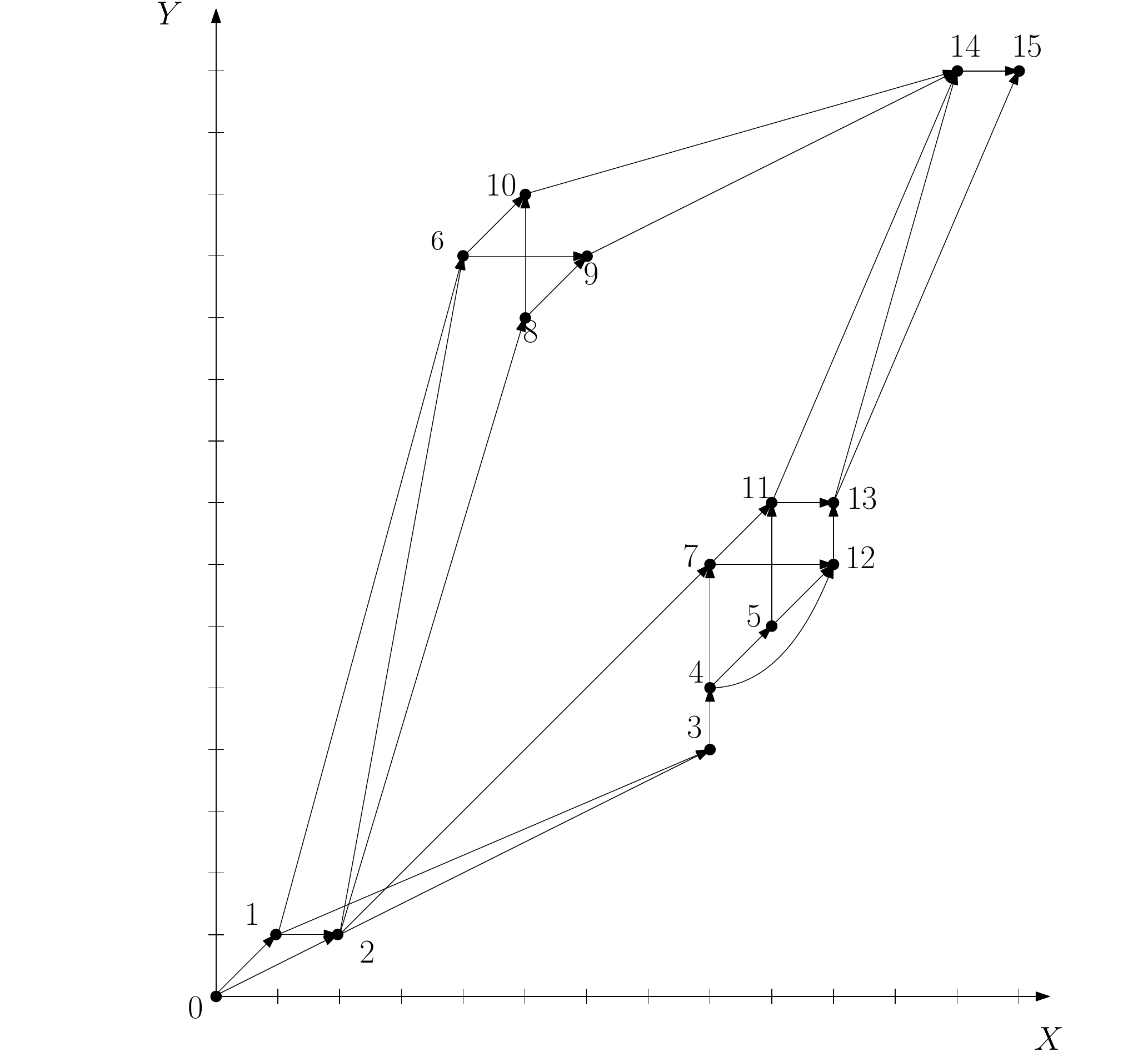}}
	\caption{The drawing of Figure~\ref{fi:final_1} where the added sources and sinks are removed and the original edges of the graph are restored.}
	\label{fi:final_2}
\end{figure}

\indent
\newpage
We are ready now to state the main result of this section:
\begin{theorem}
	Let $G$ be an st-graph and $C_P=\{M_1,...,M_h\}$ be a congruence partition of $G$. Let $G(C_P)=\{G_0, G_1,...,G_h\}$ be the set of all the module-induced graphs of $G$ augmented with the quotient graph $G_0$. Let $w_N$ be the dimensional neck of $G(C_P)$. Algorithm ND-Draw computes a $w_N$-dimensional dominance drawing $\Gamma$ of $G$ in $O(w_{\rho}\rho^2+w_N(h^2+n))$ time.
\end{theorem}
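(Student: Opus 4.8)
The plan is to prove the theorem in two parts: correctness (the output $\Gamma$ is a $w_N$-dimensional dominance drawing of $G$) and running time. For the running time, I would simply add the three contributions already established: Line~1 costs $O(w_\rho\rho^2)$ by Lemma~\ref{lemma:timecomplexity_drawings_computation}; Line~2 costs $O(w_Nh^2)$ by Lemma~\ref{lemma:shiftingtime}; and Lines~3--8 perform, for each vertex $v\in V$ and each of the $w_N$ dimensions, a constant-time coordinate assignment, hence $O(w_Nn)$ overall. Summing gives $O(w_\rho\rho^2+w_N(h^2+n))$, which is the claimed bound. The dimension count is immediate: Algorithm Drawings-Computation outputs $w_N$-dimensional drawings, and Algorithm Shifter and Lines~3--8 only translate coordinates within those same $w_N$ dimensions.

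The substantive part is correctness. I would fix two distinct vertices $u\in G_i$ and $v\in G_j$ (possibly $i=j$) and show $r(u,v)=yes \Leftrightarrow u\preceq v$ in $\Gamma$. After Lines~3--8, the final coordinate is $D_g(v,\Gamma_0)=D_g(\mu_i,\Gamma_0)+D_g(v,\Gamma_i)$, where $D_g(\mu_i,\Gamma_0)$ is the \emph{shifted} coordinate of $\mu_i$ produced by Algorithm Shifter. The key structural fact, which I would isolate as the crux, is that the shifting in Algorithm Shifter reserves for each $\mu_j$ a block of size exactly the $D_g$-extent of $\Gamma_j$ (the coordinate of $t_j$) in every dimension $D_g$ in which $\mu_j$ ``dominates or is reachable-equal to'' $\mu_i$, so that after substitution the block occupied by $G_j$ in dimension $D_g$ sits entirely above the block of $G_i$ whenever $\mu_i\preceq\mu_j$ strictly separates them there, and the blocks coincide (overlap in the relevant coordinate range) exactly when $\mu_i$ and $\mu_j$ are incomparable in that dimension. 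I would state and prove a lemma to this effect: for $i\ne j$, $\mu_i\preceq\mu_j$ in the shifted $\Gamma_0$ and $\mu_i\ne\mu_j$ imply that every vertex of $G_i$ is $\preceq$ every vertex of $G_j$ after substitution, while if $\mu_i,\mu_j$ are incomparable in $\Gamma_0$ then no vertex of $G_i$ is $\preceq$ any vertex of $G_j$ and vice versa.

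Granting that lemma, the case analysis closes quickly. If $i=j$, then $u,v\in G_i$ and both receive the same translation $D_g(\mu_i,\Gamma_0)$ in every dimension, so $u\preceq v$ in $\Gamma$ iff $u\preceq v$ in $\Gamma_i$, which by Theorem~\ref{theorem:kDdraw} applied to $G_i$ holds iff $v$ is reachable from $u$ in $G_i$; and reachability inside the module-induced graph $G_i$ coincides with reachability in $G$ between vertices of $M_i$ (here one uses that $G_i$'s virtual source/sink were wired to preserve exactly the internal reachability, together with the definition of module). If $i\ne j$: by Lemma~\ref{lemma:property_quotient_graph}, $r(u,v)=yes$ in $G$ iff $\mu_j$ is reachable from $\mu_i$ in $G_0$, which by Theorem~\ref{theorem:kDdraw} (applied to $G_0$, before shifting) is equivalent to $\mu_i\preceq\mu_j$ in $\Gamma_0$; since shifting is monotone it preserves this relation, and then the lemma converts $\mu_i\preceq\mu_j$ (with $\mu_i\ne\mu_j$) into $u\preceq v$ for \emph{all} such $u,v$, and conversely incomparability of $\mu_i,\mu_j$ into mutual non-domination. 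Finally Lemma~\ref{lemma:same_position} (applied within each $\Gamma_i$) together with the block-separation lemma guarantees distinct points for distinct vertices across modules, so $\Gamma$ is a genuine dominance drawing.

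The main obstacle I anticipate is pinning down the block-separation lemma rigorously: one must verify that Algorithm Shifter's greedy pass over all ordered pairs $(\mu_i,\mu_j)$ and all dimensions actually produces, simultaneously in all $w_N$ dimensions, a consistent allocation in which the translated copies of the $\Gamma_j$'s never collide and respect the dominance order of the $\mu$'s --- in particular that shifting $\mu_j$ for one pair does not retroactively spoil an inequality established for another pair. I would handle this by processing the argument in terms of the final shifted coordinates: show that $D_g(\mu_j,\Gamma_0^{\text{final}}) - D_g(\mu_i,\Gamma_0^{\text{final}})$ is at least the $D_g$-extent of $\Gamma_j$ whenever the Line~4 condition fired for the ordered pair $(\mu_i,\mu_j)$ in dimension $D_g$, and is unchanged (equal to the original $\Gamma_0$ difference) otherwise, using that the shifts applied to $\mu_j$ and to $\mu_i$ in a given dimension are governed by disjoint sets of triggering pairs; monotonicity of the whole procedure then yields the claim.
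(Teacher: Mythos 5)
Your proposal is correct and follows essentially the same route as the paper's proof: the same three-term time accounting, the same case split on whether $u$ and $v$ lie in the same module, the same reduction via Lemma~\ref{lemma:property_quotient_graph} and the dominance property of $\Gamma_0$ to the relative position of $\mu_i$ and $\mu_j$, and the same space-reservation argument for Algorithm Shifter. Your explicitly isolated block-separation lemma and your concern that the greedy per-pair shifts must not interfere with one another are points the paper handles only implicitly (and rather tersely), but they do not constitute a different approach.
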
 
\begin{proof}
	First, we prove the correctness of the algorithm, i.e., we prove that the computed drawing is a dominance drawing. Then we prove the time complexity of the algorithm. Let $u,v$ be two vertices, where $u\in G_i$ and $v\in G_j$. We need to prove that $r(u,v)=yes\Leftrightarrow u \preceq v$ for any choice of $u$ and $v$. We have $r(u,v)=yes\Leftrightarrow u\preceq v$ in $G$ if $j=i$, since the relative position of $u$ and $v$ in $\Gamma$ do not change in $\Gamma_i$, which is a dominance drawing of $G_i$. Suppose $i\not = j$ and $r(u,v)=yes$. In that case $\mu_i$ reaches $\mu_j$ in $G_0$ by Lemma~\ref{lemma:property_quotient_graph} and $\mu_i\preceq \mu_j$ in $\Gamma_0$, since $\Gamma_0$ is a dominance drawing of $G_0$. The vertex $\mu_j$ is shifted with respect of $\mu_i$ in any dimension $D_g$ in $\Gamma_0$ by $D_g(u,\Gamma_i)\le D_g(t_i,\Gamma_i)$ positions. Moreover, $D_g(u,\Gamma_0)=D_g(\mu_i,\Gamma_0)+D_g(u,\Gamma_i)\le D_g(\mu_j,\Gamma_0)\le D_g(v,\Gamma_0)$, which implies that $u \preceq v$. Now suppose $i\not = j$ and $r(u,v)=no$. In this case, there exists at least one dimension $D_x$ for which $D_x(\mu_i)>D_x(\mu_j)$ by Lemma~\ref{lemma:same_position} and Lemma~\ref{lemma:property_quotient_graph}.  The vertex $\mu_i$ is not shifted with respect of $\mu_j$ in $D_x$. On the contrary, $\mu_j$ is shifted with respect of $\mu_i$ by $D_x(t_i,\Gamma_i)$ positions. Hence,  $D_x(\mu_i)>D_x(\mu_j)$ also after the shifting and in the final drawing. The computational time required has three contributions: $O(w_{\rho}\rho^2)$ time is needed due to Line 1 and Lemma~\ref{lemma:timecomplexity_drawings_computation}; $O(w_nh^2)$ time is needed due to Line 2 and Lemma~\ref{lemma:shiftingtime}; $O(w_Nn)$ time is needed since every vertex is placed in $w_N$ dimensions.
	
\end{proof}

The following corollaries are similar to the ones presented in Subsection \ref{subsection:corollaries}, but they give improved bounds since the dimensional neck $w_N$ is potentially much better than the width $w_{G}$ of any DAG $G$:
\begin{corollary}
	\label{corollary:final1}
	Let $G$ be an st-graph (or DAG) having $n$ vertices and a congruence partition $C_P$. If $w_N$ is the dimensional neck of $G(C_P)$, then: $d_G\le min(\frac{n}{2},w_N)$
\end{corollary}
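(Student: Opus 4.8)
The plan is to derive this bound immediately from the two ingredients that are already available: the main theorem of this section and Lemma~\ref{lemma:n/2}. First I would invoke the theorem just proved: given the congruence partition $C_P$ and the associated family $G(C_P)=\{G_0,G_1,\dots,G_h\}$, Algorithm ND-Draw produces a $w_N$-dimensional dominance drawing $\Gamma$ of $G$, where $w_N$ is the dimensional neck of $G(C_P)$. Since the dominance drawing dimension $d_G$ is by definition the smallest number of dimensions in which $G$ admits a dominance drawing, the existence of $\Gamma$ shows $d_G \le w_N$.

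Next I would apply Lemma~\ref{lemma:n/2}, which states that any st-graph on $n$ vertices satisfies $d_G \le \frac{n}{2}$. Combining the two inequalities $d_G \le w_N$ and $d_G \le \frac{n}{2}$ gives $d_G \le \min\!\left(\frac{n}{2}, w_N\right)$, which is exactly the claimed bound.

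I would also add a short remark tying this back to the earlier results: since the excerpt has already established $w_N \le w_{G^*} = w_G$, this corollary is never weaker than Corollary~\ref{corollary:final} (which gives $d_G \le \min(\frac{n}{2}, w_G)$), and it is strictly better whenever the chosen congruence partition admits a module-induced family whose widths are all smaller than $w_G$ — the example graph of Figure~\ref{fi:final_2} being the illustrative case where $w_N = 2 < 4 = w_G$.

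There is essentially no technical obstacle here: the statement is a packaging corollary, and all the work was done in proving correctness of ND-Draw and in Lemma~\ref{lemma:n/2}. The only point requiring a sentence of care is to state explicitly that "dimension achieved by some algorithm" upper-bounds "minimum possible dimension", i.e.\ that producing one $w_N$-dimensional dominance drawing suffices to conclude $d_G \le w_N$; everything else is the trivial observation that a quantity bounded above by two values is bounded by their minimum.
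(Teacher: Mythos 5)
Your proposal is correct and matches the paper's intended argument: the corollary is stated as a direct packaging of the ND-Draw theorem (giving $d_G\le w_N$) together with Lemma~\ref{lemma:n/2} (giving $d_G\le \frac{n}{2}$), exactly as you describe. The paper offers no separate proof beyond noting the analogy with Corollary~\ref{corollary:final}, so nothing is missing.
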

\begin{corollary}
	\label{corollary:2D1}
	Every DAG $G$ having a congruence partition $C_P$ with dimensional neck equal to 2 has a 2-dimensional dominance drawing.
\end{corollary}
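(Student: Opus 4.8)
The final statement to prove is Corollary~\ref{corollary:2D1}, which asserts that every DAG $G$ possessing a congruence partition $C_P$ whose dimensional neck $w_N$ equals $2$ admits a $2$-dimensional dominance drawing. This is a direct specialization of the preceding theorem, so the plan is simply to instantiate that theorem at $w_N = 2$ and observe that the number of dimensions of the output drawing is exactly $w_N$.

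\medskip
\noindent\textbf{Proof plan.} The preceding theorem states that, given an st-graph $G$ together with a congruence partition $C_P = \{M_1,\dots,M_h\}$ and the associated family $G(C_P) = \{G_0, G_1, \dots, G_h\}$, Algorithm ND-Draw computes a $w_N$-dimensional dominance drawing $\Gamma$ of $G$, where $w_N$ is the dimensional neck of $G(C_P)$. First I would recall that, by hypothesis, $G$ has a congruence partition $C_P$ with dimensional neck equal to $2$, i.e. $w_N = 2$. Applying the theorem to $G$ and this particular $C_P$ yields a dominance drawing $\Gamma$ of $G$ whose number of dimensions is $w_N = 2$. Thus $\Gamma$ is a $2$-dimensional dominance drawing of $G$, which is exactly the claim. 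No further argument is required, since the correctness of $\Gamma$ as a dominance drawing is already guaranteed by the theorem.

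\medskip
\noindent\textbf{On the obstacle.} There is essentially no hard step here: the corollary is a one-line consequence of the main theorem of the section, obtained by substituting the specific value $w_N = 2$ into the statement ``Algorithm ND-Draw computes a $w_N$-dimensional dominance drawing.'' The only point worth stating explicitly is that the theorem produces a drawing whose dimension count is precisely $w_N$ (not merely at most $w_N$), so that $w_N = 2$ forces the dimension to be exactly $2$; this is immediate from the construction, in which each $\Gamma_i$ is built in $w_N$ dimensions and the merging performed by Algorithm Shifter and Algorithm ND-Draw does not introduce additional dimensions. I would also remark, for completeness, that this genuinely broadens the class of DAGs known to admit a $2$-dimensional dominance drawing, since by Lemma at the end of Section~\ref{Section:Modules} we have $w_N \le w_G$, so a DAG may have $w_N = 2$ while $w_G > 2$; such a graph would not be captured by Corollary~\ref{corollary:2D} but is captured here.
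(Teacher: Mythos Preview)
Your proposal is correct and matches the paper's own treatment: the paper states Corollary~\ref{corollary:2D1} without proof, presenting it as an immediate specialization of the main theorem of Section~\ref{Section:Modules} (Algorithm ND-Draw produces a $w_N$-dimensional dominance drawing), and your argument is exactly this instantiation at $w_N=2$. Your additional remarks about the dimension count being exactly $w_N$ and about $w_N\le w_G$ are accurate and consistent with the paper's surrounding discussion.
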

If we define a congruence partition and its dimensional neck for a partial order $P$ as we did for the DAGs, we have the following corollary:
\begin{corollary}
	\label{corollary:partialorder1}
	For any partial order $P$ having $n$ elements and any congruence partition of it. If $w_N$ is the dimensional neck of $C_P$, then: $d_P\le min(\frac{n}{2},w_N)$
\end{corollary}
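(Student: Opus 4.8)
The plan is to reduce the statement to its DAG counterpart, Corollary~\ref{corollary:final1}, by invoking the equivalence between partial orders and transitive DAGs established in Section~\ref{Section:Preliminaries}, and to make explicit how each of the two terms in the minimum is supplied.

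First I would associate to the partial order $P$ its transitive DAG $G$, adding (as is done throughout the paper) a virtual source and sink so that $G$ is an st-graph; since $P$ already carries its full comparability relation, $G$ is transitive and the dominance dimensions coincide, $d_P=d_G$. The hypothesis of the corollary fixes a congruence partition of $P$ together with its dimensional neck $w_N$; because congruence partitions, module-induced graphs, the augmented family $G(C_P)$, and the dimensional neck are all defined for $P$ exactly as for $G$, the value $w_N$ denotes the same object on both sides. Hence it suffices to bound $d_G$.

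Next I would establish the two bounds separately and then intersect them. For the $w_N$ term, the main theorem of Section~\ref{Section:Modules} guarantees that Algorithm ND-Draw outputs a $w_N$-dimensional dominance drawing of $G$, so $d_G\le w_N$. For the $n/2$ term, Lemma~\ref{lemma:n/2} (Hiraguchi's bound), transferred to $P$ through the same equivalence, gives $d_G=d_P\le \frac{n}{2}$. Taking the stronger of the two yields $d_G\le\min\!\left(\frac{n}{2},w_N\right)$, and transferring back along $d_P=d_G$ delivers $d_P\le\min\!\left(\frac{n}{2},w_N\right)$. Equivalently, and more compactly, one may simply apply Corollary~\ref{corollary:final1} to $G$ and read off the conclusion for $P$.

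The only delicate point, and therefore the step I would treat most carefully, is the st-graph padding used to invoke the algorithmic machinery, since it must not corrupt either term of the minimum. Adjoining a universal minimum and maximum cannot raise the dominance dimension: in any $d$-dimensional dominance drawing these two vertices can be placed at the coordinatewise-least and coordinatewise-greatest corners without introducing a new axis, so a drawing of $P$ extends to $G$ in the same number of dimensions, and restricting a drawing of $G$ to the vertices of $P$ gives the reverse inequality; this justifies $d_P=d_G$ used for the $w_N$ term. For the $n/2$ term I would avoid any off-by-a-constant issue arising from the two extra vertices by applying Hiraguchi's bound to the $n$-element poset $P$ itself (which is precisely the content of Lemma~\ref{lemma:n/2} under the transfer principle) rather than to the padded $(n+2)$-vertex graph. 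With these two observations the bounds combine cleanly and the passage back to $P$ is immediate.
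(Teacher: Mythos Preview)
Your proposal is correct and matches the paper's approach: the paper does not give an explicit proof but presents the corollary as an immediate transfer of Corollary~\ref{corollary:final1} to partial orders via the equivalence discussed in Section~\ref{Section:Preliminaries}. Your write-up simply spells out this transfer in more detail (including the harmless st-padding), which is exactly the intended argument.
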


\section{Conclusions and Open Problems}
\label{Section:Conclusion}
The contributions of our paper are as follows: we proved that every DAG $G$ of width 2 has a 2-dimensional dominance drawing; we proved that $min(\frac{n}{2},w_G)$ is an upper bound of the dominance dimension $d_G$ of the DAG $G$; we proved that $min(\frac{n}{2},w_P)$ is an upper bound for the dimension $d_P$ of a partial order $P$; we introduced the dimensional neck $w_N$ of a DAG $G$, we proved that $w_N\le w_G$ and we improved the upper bounds described above using this new parameter.  This clearly visible by the fact that the DAG of Figure 1 requires four dimensions due to the given channel decomposition, whereas it requires only two dimensions after applying the module decomposition techniques.

An interesting open problem is to reduce the number of dimensions of the drawings computed by the algorithms described in this paper in order to have a weak dominance drawing with a bounded number of fips. In this direction approximation algorithms as well as heuristic algorithms should be developed. Another open problem could be the extension of the family of DAGs having a 2-dimensional dominance drawing with DAGs having width higher than 2. We propose to study the problem of computing a congruence partition having a minimum dimensional neck, since it let us compute dominance drawing with less dimensions in less time. It could be interesting to better understand the relationship between this problem and the NP-hard problem of computing the dominance dimension of a DAG. Additionally, we propose to study the problem of computing a congruence partition having small size ($h$), a small $w_{\rho}$, and a small $\rho$, since the time complexity of ND-Draw depends also on these parameters.  Finally, perhaps the most interesting open problem is to find families of graphs and prove a mathematical relationship between the number of dimensions ($w_G$) required by Algorithm kD-Draw and the number of dimensions required by Algorithm ND-draw, similar to the one that we observed for the example graph of Figure~\ref{fi:channeldec} shown in Figures~\ref{fi:kd} and~\ref{fi:final_2}.  \\\\\\\\\\\\


\noindent
\textbf{Acknowledgement:} We thank Roberto Tamassia for useful discussions and for pointing our attention to Hiraguchi's results.

\newpage

\bibliography{Literature}
\end{document}